\documentclass[a4paper,USenglish,cleveref]{lipics-v2021}

\hideLIPIcs

\def\secsortdiff{0.044}

\def\abstractGpuBits{2.17}
\def\abstractGpuConstruction{28}
\def\abstractGpuQuery{37}

\def\totalBitImprovement{0.17}

\def\bucketerImprovement{0.14}

\def\interleavedImprovement{0.06}

\def\speedupPtHashVsPhobic{83}

\def\speedupPhobicVsGpu{62}

\usepackage{newunicodechar}
\usepackage{silence}
\WarningFilter{newunicodechar}{Redefining}

\newunicodechar{♢}{\tikz \node[inner sep=1.5,draw,diamond] {};}
\newunicodechar{☆}{\tikz \node[inner sep=1,draw,star,star point ratio=2] {};}
\newunicodechar{△}{\triangle}
\newunicodechar{⬜}{\kern 0.5pt\tikz \node[inner sep=1.7,draw,regular polygon,regular polygon sides=4] {};\kern 0.5pt}
\newunicodechar{◯}{\tikz[baseline=-3pt] \node[inner sep=1.7,draw,cloud,cloud puffs=4,cloud puff arc=190] {};}

\newunicodechar{⊥}{\bot}
\newunicodechar{•}{\item}
\newunicodechar{✓}{\checkmark}
\newunicodechar{✗}{\xmark}
\newunicodechar{…}{\dots}
\newunicodechar{≔}{\coloneqq}
\newunicodechar{⁻}{^-}
\newunicodechar{⁺}{^+}
\newunicodechar{₋}{_-}
\newunicodechar{₊}{_+}
\newunicodechar{ℓ}{\ell}
\newunicodechar{•}{\item}
\newunicodechar{…}{\dots}
\newunicodechar{≔}{\coloneqq}
\newif\ifnormopen\normopenfalse
\newunicodechar{‖}{\ifnormopen\rVert\normopenfalse\else\rVert\normopentrue\fi}
\newunicodechar{≤}{\leq}
\newunicodechar{≥}{\geq}
\newunicodechar{≰}{\nleq}
\newunicodechar{≱}{\ngeq}
\newunicodechar{⊕}{\oplus}
\newunicodechar{⊗}{\otimes}
\newunicodechar{≠}{\neq}
\newunicodechar{¬}{\neg}
\newunicodechar{≡}{\equiv}
\newunicodechar{₀}{_0}
\newunicodechar{₁}{_1}
\newunicodechar{₂}{_2}
\newunicodechar{₃}{_3}
\newunicodechar{₄}{_4}
\newunicodechar{₅}{_5}
\newunicodechar{₆}{_6}
\newunicodechar{₇}{_7}
\newunicodechar{₈}{_8}
\newunicodechar{₉}{_9}
\newunicodechar{ₚ}{_p}
\newunicodechar{ₙ}{_n}
\newunicodechar{ₐ}{_a}
\newunicodechar{ₑ}{_e}
\newunicodechar{ₕ}{_h}
\newunicodechar{ₖ}{_k}
\newunicodechar{ₗ}{_l}
\newunicodechar{ₘ}{_m}
\newunicodechar{ₛ}{_s}
\newunicodechar{ₜ}{_t}
\newunicodechar{ₓ}{_x}
\newunicodechar{⁰}{^0}
\newunicodechar{¹}{^1}
\newunicodechar{²}{^2}
\newunicodechar{³}{^3}
\newunicodechar{⁴}{^4}
\newunicodechar{⁵}{^5}
\newunicodechar{⁶}{^6}
\newunicodechar{⁷}{^7}
\newunicodechar{⁸}{^8}
\newunicodechar{⁹}{^9}
\newunicodechar{ⁿ}{^n}
\newunicodechar{∈}{\in}
\newunicodechar{∉}{\notin}
\newunicodechar{⊂}{\subset}
\newunicodechar{⊃}{\supset}
\newunicodechar{⊆}{\subseteq}
\newunicodechar{⊇}{\supseteq}
\newunicodechar{⊄}{\nsubset}
\newunicodechar{⊅}{\nsupset}
\newunicodechar{⊈}{\nsubseteq}
\newunicodechar{⊉}{\nsupseteq}
\newunicodechar{∪}{\cup}
\newunicodechar{∩}{\cap}
\newunicodechar{∀}{\forall}
\newunicodechar{∃}{\exists}
\newunicodechar{∄}{\nexists}
\newunicodechar{∨}{\vee}
\newunicodechar{∧}{\wedge}
\newunicodechar{⊼}{\bar{\wedge}}
\newunicodechar{⊽}{\bar{\vee}}
\newunicodechar{ℝ}{\mathbb{R}}
\newunicodechar{ℙ}{\mathbb{P}}
\newunicodechar{ℕ}{\mathbb{N}}
\newunicodechar{𝔼}{\mathbb{E}}
\newunicodechar{𝔽}{\mathbb{F}}
\newunicodechar{ℤ}{\mathbb{Z}}
\newunicodechar{⌊}{\lfloor}
\newunicodechar{⌋}{\rfloor}
\newunicodechar{⌈}{\lceil}
\newunicodechar{⌉}{\rceil}
\newunicodechar{·}{\cdot}
\newunicodechar{∘}{\circ}
\newunicodechar{×}{\times}
\newunicodechar{↑}{\uparrow}
\newunicodechar{↓}{\downarrow}
\newunicodechar{→}{\rightarrow}
\newunicodechar{←}{\leftarrow}
\newunicodechar{⇒}{\Rightarrow}
\newunicodechar{⇐}{\Leftarrow}
\newunicodechar{↔}{\leftrightarrow}
\newunicodechar{⇔}{\Leftrightarrow}
\newunicodechar{↦}{\mapsto}
\newunicodechar{∅}{\varnothing}
\newunicodechar{∞}{\infty}
\newunicodechar{≅}{\cong}
\newunicodechar{≈}{\approx}
\newunicodechar{ℓ}{\ell}
\newunicodechar{𝟙}{\mathds{1}}
\newunicodechar{𝟘}{\mathds{0}}
\newunicodechar{↪}{\hookrightarrow}

\newunicodechar{α}{\alpha}
\newunicodechar{β}{\beta}
\newunicodechar{γ}{\gamma}
\newunicodechar{Γ}{\Gamma}
\newunicodechar{δ}{\delta}
\newunicodechar{Δ}{\Delta}
\newunicodechar{ε}{\varepsilon}
\newunicodechar{ζ}{\zeta}
\newunicodechar{η}{\eta}
\newunicodechar{θ}{\theta}
\newunicodechar{Θ}{\Theta}
\newunicodechar{ι}{\iota}
\newunicodechar{κ}{\kappa}
\newunicodechar{λ}{\lambda}
\newunicodechar{Λ}{\Lambda}
\newunicodechar{μ}{\mu}
\newunicodechar{ν}{\nu}
\newunicodechar{ξ}{\xi}
\newunicodechar{Ξ}{\Xi}
\newunicodechar{π}{\pi}
\newunicodechar{Π}{\Pi}
\newunicodechar{ρ}{\rho}
\newunicodechar{σ}{\sigma}
\newunicodechar{Σ}{\Sigma}
\newunicodechar{τ}{\tau}
\newunicodechar{υ}{\upsilon}
\newunicodechar{ϒ}{\Upsilon}
\newunicodechar{φ}{\varphi}
\newunicodechar{ϕ}{\phi}
\newunicodechar{Φ}{\Phi}
\newunicodechar{χ}{\chi}
\newunicodechar{ψ}{\psi}
\newunicodechar{Ψ}{\Psi}
\newunicodechar{ω}{\omega}
\newunicodechar{Ω}{\Omega}

\newunicodechar{𝒜}{\mathcal{A}}
\newunicodechar{ℬ}{\mathcal{B}}
\newunicodechar{𝒞}{\mathcal{C}}
\newunicodechar{𝒟}{\mathcal{D}}
\newunicodechar{ℰ}{\mathcal{E}}
\newunicodechar{ℱ}{\mathcal{F}}
\newunicodechar{𝒢}{\mathcal{G}}
\newunicodechar{ℋ}{\mathcal{H}}
\newunicodechar{ℐ}{\mathcal{I}}
\newunicodechar{𝒥}{\mathcal{J}}
\newunicodechar{𝒦}{\mathcal{K}}
\newunicodechar{ℒ}{\mathcal{L}}
\newunicodechar{ℳ}{\mathcal{M}}
\newunicodechar{𝒩}{\mathcal{N}}
\newunicodechar{𝒪}{\mathcal{O}}
\newunicodechar{𝒫}{\mathcal{P}}
\newunicodechar{𝒬}{\mathcal{Q}}
\newunicodechar{ℛ}{\mathcal{R}}
\newunicodechar{𝒮}{\mathcal{S}}
\newunicodechar{𝒯}{\mathcal{T}}
\newunicodechar{𝒰}{\mathcal{U}}
\newunicodechar{𝒱}{\mathcal{V}}
\newunicodechar{𝒲}{\mathcal{W}}
\newunicodechar{𝒳}{\mathcal{X}}
\newunicodechar{𝒴}{\mathcal{Y}}
\newunicodechar{𝒵}{\mathcal{Z}}
\newunicodechar{𝒶}{\mathcal{a}}
\newunicodechar{𝒷}{\mathcal{b}}
\newunicodechar{𝒸}{\mathcal{c}}
\newunicodechar{𝒹}{\mathcal{d}}
\newunicodechar{ℯ}{\mathcal{e}}
\newunicodechar{𝒻}{\mathcal{f}}
\newunicodechar{ℊ}{\mathcal{g}}
\newunicodechar{𝒽}{\mathcal{h}}
\newunicodechar{𝒾}{\mathcal{i}}
\newunicodechar{𝒿}{\mathcal{j}}
\newunicodechar{𝓀}{\mathcal{k}}
\newunicodechar{𝓁}{\mathcal{l}}
\newunicodechar{𝓂}{\mathcal{m}}
\newunicodechar{𝓃}{\mathcal{n}}
\newunicodechar{ℴ}{\mathcal{o}}
\newunicodechar{𝓅}{\mathcal{p}}
\newunicodechar{𝓆}{\mathcal{q}}
\newunicodechar{𝓇}{\mathcal{r}}
\newunicodechar{𝓈}{\mathcal{s}}
\newunicodechar{𝓉}{\mathcal{t}}
\newunicodechar{𝓊}{\mathcal{u}}
\newunicodechar{𝓋}{\mathcal{v}}
\newunicodechar{𝓌}{\mathcal{w}}
\newunicodechar{𝓍}{\mathcal{x}}
\newunicodechar{𝓎}{\mathcal{y}}
\newunicodechar{𝓏}{\mathcal{z}}

\nolinenumbers

\setlength\overfullrule{14pt} %

\newcommand{\psfrage}[1]{{\color{blue}{\sf[PS: #1]}}} %
\newcommand{\hpfrage}[1]{{\color{violet}\sf[HP: #1]}} %
\newcommand{\gpfrage}[1]{{\color{teal}\sf[GP: #1]}} %
\newcommand{\shfrage}[1]{{\color{brown}\sf[SH: #1]}} %
\newcommand{\swfrage}[1]{{\color{red}\sf[SW: #1]}} %
\renewcommand{\psfrage}[1]{ } \renewcommand{\hpfrage}[1]{ } \renewcommand{\gpfrage}[1]{ } \renewcommand{\shfrage}[1]{ } \renewcommand{\swfrage}[1]{ }

\newcommand{\myparagraph}[1]{\vspace{-3.8mm}\subparagraph*{#1.}}

\def\lipicsLabel#1{\textcolor{lipicsGray}{\sffamily\bfseries\upshape\mathversion{bold}#1}}

\usepackage{pgfplots}
\pgfplotsset{compat=newest}

\usepgfplotslibrary{groupplots}
\pgfplotsset{every axis/.style={scale only axis}}

\definecolor{veryLightGrey}{HTML}{F2F2F2}
\definecolor{lightGrey}{HTML}{DDDDDD}
\definecolor{colorSimdRecSplit}{HTML}{444444}
\definecolor{colorChd}{HTML}{377EB8}
\definecolor{colorRustFmph}{HTML}{A65628}
\definecolor{colorRustFmphGo}{HTML}{A65628}
\definecolor{colorSicHash}{HTML}{4DAF4A}
\definecolor{colorPthash}{HTML}{984EA3}
\definecolor{colorDensePtHash}{HTML}{984EA3}
\definecolor{colorPthashHem}{HTML}{A65628}
\definecolor{colorRecSplit}{HTML}{FF7F00}
\definecolor{colorBbhash}{HTML}{F781BF}
\definecolor{colorShockHash}{HTML}{F8BA01}
\definecolor{colorBipartiteShockHash}{HTML}{F781BF}
\definecolor{colorBipartiteShockHashFlat}{HTML}{377EB8}

\colorlet{colorBruteForce}{colorSicHash}
\colorlet{colorRotationFitting}{colorChd}

\pgfdeclareimage[interpolate,height=1.85mm,width=1.85mm]{shockhashMark}{fig/shockhash}
\pgfdeclareplotmark{shockhash}{\pgftext[at=\pgfpointorigin]{\pgfuseimage{shockhashMark}}}

\pgfplotscreateplotcyclelist{myColorList}{%
  colorSicHash,mark=o\\%
  colorRecSplit,mark=triangle\\%
  colorSimdRecSplit,mark=pentagon\\%
  colorChd,mark=square\\%
  colorPthash,mark=x\\%
  colorBbhash,mark=diamond\\%
}
\pgfplotscreateplotcyclelist{transparentHeatmap}{%
  colorSicHash,mark=*,fill=colorSicHash\\
}
\pgfdeclareplotmark{flippedTriangle}{%
  \pgfpathmoveto{\pgfqpointpolar{-90}{1.2\pgfplotmarksize}}%
  \pgfpathlineto{\pgfqpointpolar{30}{1.2\pgfplotmarksize}}%
  \pgfpathlineto{\pgfqpointpolar{150}{1.2\pgfplotmarksize}}%
  \pgfpathclose%
  \pgfusepath{stroke}
}

\pgfplotsset{
  mark repeat*/.style={
    scatter,
    scatter src=x,
    scatter/@pre marker code/.code={
      \pgfmathtruncatemacro\usemark{
        or(mod(\coordindex,#1)==0, (\coordindex==(\numcoords-1))
      }
      \ifnum\usemark=0
        \pgfplotsset{mark=none}
      \fi
    },
    scatter/@post marker code/.code={}
  },
  major grid style={thin,dotted},
  minor grid style={thin,dotted},
  ymajorgrids,
  yminorgrids,
  every axis/.append style={
    line width=0.7pt,
    tick style={
      line cap=round,
      thin,
      major tick length=4pt,
      minor tick length=2pt,
    },
    mark options={solid},
  },
  legend cell align=left,
  legend style={
    line width=0.7pt,
    /tikz/every even column/.append style={column sep=3mm,black},
    /tikz/every odd column/.append style={black},
    mark options={solid},
  },
  legend style={font=\small},
  title style={yshift=-2pt},
  enlarge x limits=0.04,
  every tick label/.append style={font=\footnotesize},
  every axis label/.append style={font=\small},
  every axis y label/.append style={yshift=-1ex},
  /pgf/number format/1000 sep={},
  axis lines*=left,
  xlabel near ticks,
  ylabel near ticks,
  axis lines*=left,
  label style={font=\footnotesize},
  tick label style={font=\footnotesize},
  cycle list name=myColorList,
  plotPareto/.style={
    xlabel={Space (Bits/Key)},
    width=5.5cm,
    height=3.5cm,
    xmax=3.5,
  },
}

\usepackage{dsfont}
\usepackage{mathtools}
\usepackage{pgfplots}
\usepackage{caption}
\usepackage{subcaption}
\usepackage{hyperref}
\usepackage{booktabs}
\usepackage[nocompress]{cite}
\usepackage{listings}
\usepackage{algorithm}
\usepackage{csquotes}
\usepackage[noend]{algpseudocode}
\usepackage{placeins}
\usepackage{bbding}

\newtheorem{intuition}[theorem]{Intuition}
\def\refRel#1#2#3{\stackrel{\mathclap{\text{#1\,\ref{#2}}}}{#3}}
\def\refRelObs#1#2{\stackrel{\mathclap{\text{(\ref{#1})}}}{#2}\quad}

\crefname{listing}{Algorithm}{Algorithms}
\lstset{mathescape=true, columns=fullflexible, basicstyle=\small, escapechar=\&, keywords={,return,Function,if,for,then,do,until,from,repeat,},backgroundcolor=\color{white}}

\bibliographystyle{plainurl}

\newcommand{\mytitlerunning}{PHOBIC: Perfect Hashing with Optimized Bucket Sizes and Interleaved Coding}
\title{\texorpdfstring{PHOBIC: Perfect Hashing with\\Optimized Bucket Sizes and Interleaved Coding}{\mytitlerunning}}
\titlerunning{\mytitlerunning}

\author{Stefan Hermann}{Karlsruhe Institute of Technology, Germany}{hermann@kit.edu}{https://orcid.org/0000-0001-9183-2926}{}
\author{Hans-Peter Lehmann}{Karlsruhe Institute of Technology, Germany}{hans-peter.lehmann@kit.edu}{https://orcid.org/0000-0002-0474-1805}{}
\author{Giulio Ermanno Pibiri}{Ca' Foscari University of Venice, Italy\\ISTI-CNR, Italy}{giulioermanno.pibiri@unive.it}{https://orcid.org/0000-0003-0724-7092}{}
\author{Peter Sanders}{Karlsruhe Institute of Technology, Germany}{sanders@kit.edu}{https://orcid.org/0000-0003-3330-9349}{}
\author{Stefan Walzer}{Karlsruhe Institute of Technology, Germany}{stefan.walzer@kit.edu}{https://orcid.org/0000-0002-6477-0106}{}

\newcommand{\myauthorrunning}{S. Hermann, H.-P. Lehmann, G. E. Pibiri, P. Sanders, S. Walzer}
\authorrunning{\myauthorrunning}
\Copyright{Stefan Hermann, Hans-Peter Lehmann, Giulio Ermanno Pibiri, Peter Sanders, Stefan Walzer}
\hypersetup{
  colorlinks=true,
  pdftitle={\mytitlerunning},
  pdfauthor={\myauthorrunning},
  pdfsubject={}
}

\begin{CCSXML}
<ccs2012>
<concept>
<concept_id>10003752.10003809.10010031.10002975</concept_id>
<concept_desc>Theory of computation~Data compression</concept_desc>
<concept_significance>500</concept_significance>
</concept>
<concept>
<concept_id>10002951.10002952.10002971.10003450.10010829</concept_id>
<concept_desc>Information systems~Point lookups</concept_desc>
<concept_significance>500</concept_significance>
</concept>
</ccs2012>
\end{CCSXML}

\ccsdesc[500]{Theory of computation~Data compression}
\ccsdesc[500]{Information systems~Point lookups}
\keywords{Compressed Data Structures, Minimal Perfect Hashing, GPU}

\supplementdetails[subcategory={CPU implementation}]{Software}{https://github.com/jermp/pthash/tree/phobic}
\supplementdetails[subcategory={GPU implementation}]{Software}{https://github.com/stefanfred/PHOBIC-GPU}
\supplementdetails[subcategory={Internal comparison}]{Software}{https://github.com/stefanfred/PHOBIC-Scripts}
\supplementdetails[subcategory={Comparison with competitors}]{Software}{https://github.com/ByteHamster/MPHF-Experiments}

\acknowledgements{
This paper is based on and has text overlaps with the Master's thesis~\cite{hermann2023accelerating} of Stefan Hermann.
We refer readers to that thesis for a detailed evaluation and description of the GPU implementation.
}
\funding{
\flag[2cm]{}
This project has received funding from the European Research Council (ERC) under the European Union’s Horizon 2020 research and innovation programme (grant agreement No. 882500).
The project also received funding form the European Union’s Horizon Europe research and innovation programme (EFRA project, Grant Agreement Number 101093026).
This work was also partially supported by DAIS – Ca’ Foscari University of Venice within the IRIDE program.
}

\begin{document}
\maketitle

\begin{abstract}
	A minimal perfect hash function (or MPHF) maps a set of $n$ keys to $[n] := \{1, \ldots, n\}$ without collisions.
Such functions find widespread application e.g. in bioinformatics and databases.
In this paper we revisit PTHash -- a construction technique particularly designed for fast queries.
PTHash distributes the input keys into small buckets and, for each bucket,
it searches for a hash function seed that places its keys in the output domain without collisions. The collection of all seeds is then stored in a compressed manner.
Since the first buckets are easier to place, 
buckets are considered in non-increasing order of size.
Additionally, PTHash heuristically produces an imbalanced distribution of bucket sizes
by distributing 60\% of the keys into 30\% of the buckets.

Our main contribution is to characterize, up to lower order terms, an \emph{optimal} distribution of expected bucket sizes.
We arrive at a simple, closed form solution which improves construction throughput for space efficient configurations in practice.
Our second contribution is a novel encoding scheme for the seeds.
We split the keys into partitions.
Within each partition, we run the bucket distribution and search step.
We then store the seeds in an \emph{interleaved} manner by consecutively placing the seeds for the $i$-th buckets from all partitions.
The seeds for the $i$-th bucket of each partition follow the same statistical distribution.
This allows us to tune a compressor for each bucket.
Hence, we call our technique PHOBIC -- Perfect Hashing with Optimized Bucket sizes and Interleaved Coding.

Compared to PTHash, PHOBIC is \totalBitImprovement{} bits/key more space efficient for same query time and construction throughput.
We also contribute a GPU implementation to further accelerate MPHF construction.
For a configuration with fast queries, PHOBIC-GPU can construct a perfect hash function at \abstractGpuBits{} bits/key in \abstractGpuConstruction{} ns per key, which can be queried in \abstractGpuQuery{} ns per query on the CPU.

\end{abstract}

\section{Introduction}\label{s:intro}

\label{sec:intro}

A \emph{hash function} maps a set $S$ of $n$ keys to a range of integers $[m] := \{1, \ldots, m\}$, regardless of whether multiple keys collide on the same output.
A \emph{perfect} hash function (PHF) on $S$ is a mapping \emph{without collisions}. This requires $m \geq n$.
The function does not necessarily have to store the keys explicitly.
It only has to store enough information to prevent collisions, which are more likely when $m$ is close to $n$. In the extreme case of $m=n$, the mapping is called a \emph{minimal} perfect hash function (MPHF). In this paper, we consider the minimal case only.
PHFs find widespread practical application e.g. in compressed full-text indexes \cite{belazzougui2014alphabet}, computer networks \cite{lu2006perfect}, databases \cite{chang2005perfect}, prefix-search data structures \cite{belazzougui2010fast}, language models \cite{pibiri2017efficient}, bioinformatics \cite{crawford2018practical,pibiri2022sparse}, and Bloom filters \cite{broder2004network}.
The three main performance attributes of an MPHF are low space consumption, fast construction, and fast queries. Concerning space, the lower bound is $\log_2(e) \approx 1.44$ bits/key \cite{mehlhorn1982program}. Practically viable approaches can get within a few percent of the lower bound, but do so with some sacrifices in running time \cite{lehmann2023shockhash,lehmann2023bipartite}. This paper is concerned with techniques that are focused on achieving fast query times.
For example, this is very important when using perfect hashing to implement a static hash table
that is both space-efficient and allows fast search.

\myparagraph{Perfect Hashing Through Bucket Placement}
Perfect hashing through bucket placement takes the $n$ keys and maps them to small buckets.
For each bucket, it uses brute-force search to find a seed of a hash function such that all keys of the bucket do not collide with each other or previously placed keys.
The first buckets are easier to place because the output domain is less full. Therefore, the methods insert the buckets in order of non-increasing size.
While CHD \cite{belazzougui2009hash} uses buckets of constant expected size,
FCH \cite{fox1992faster} and PTHash \cite{pibiri2021pthash,pibiri2024parallel} set aside 30\% of “heavy” buckets that receive 60\% of the keys in expectation, while 70\% of “light” buckets receive only 40\% of the keys in expectation.
This imbalance in expected bucket sizes improves construction speed by further decreasing the size of the last, hardest to place, buckets.
The resulting list of seed values are stored with various compression techniques, resulting in a variety of trade-offs between space consumption and query speed.

\myparagraph{Partitioning}
Any PHF construction algorithm can be trivially parallelized by splitting the input keys into disjoint subsets. We refer to those subsets as \emph{partitions}. The various PHFs are then logically ``concatenated'' into a single PHF taking the prefix-sum of the partition sizes. The respective offsets have to be looked up when querying a key, imposing some query time overhead. Each partition can be constructed independently in parallel. Partitioning is the usual approach for parallelization, and is applied to PTHash by PTHash-HEM \cite{pibiri2024parallel}.

\myparagraph{Contribution}
This paper aims at improving the space efficiency and construction speed of PTHash, while maintaining its fast query speed. There are three ingredients.
Our main contribution (in \cref{s:bucketSizes}) is to characterize, up to lower order terms, an optimal distribution of expected bucket sizes, effectively taking the imbalance-trick used in FCH and PTHash to its logical conclusion. The distribution is easy-to-implement and greatly improves construction time and space efficiency in practice.
Our second contribution (in \cref{ss:multi}) is to improve the compression of seed values when using partitioning.
Seeds are searched independently for each partition, but compressed together.
We exploit that the seeds of the $i$-th bucket of each partition follow the same statistical distribution.
This allows for tuning a compressor for each such index $i$.
We store the seeds in an \emph{interleaved} manner by consecutively placing the seeds for the $i$-th buckets from all partitions.
Finally, we contribute (in \cref{ss:gpuimpl}) an implementation for \emph{Graphics Processing Units} (GPUs) to speed up construction.

\section{Related Work}\label{s:related}
Perfect hashing is an active area of research. We provide an overview of state-of-the-art approaches.
For more details, refer to Section 2 of~\cite{pibiri2024parallel}.

\myparagraph{Fingerprinting}
Perfect hashing through fingerprinting \cite{CSLSR11,MSSZ14} is a technique aimed at fast construction and queries at the cost of reduced space efficiency.
The idea is to map the $n$ keys to $\gamma n$ positions using a hash function, where $\gamma$ is a tuning parameter.
A bit vector of length $\gamma n$ indicates positions that received exactly one key.
Keys that take place in collisions are handled recursively on another layer of the same data structure.
A query operation descends through the recursive layers until it finds a 1-bit, meaning that it was the only key mapping to that position.
A rank operation on the bit vector for that position then gives the final MPHF value.
FMPH \cite{beling2023fingerprinting} and BBHash \cite{limasset2017fast} are publicly available implementations of the approach.
FMPHGO \cite{beling2023fingerprinting} extends on this idea using a small number of brute-force re-tries to reduce the number of colliding keys. Fingerprinting based approaches are fast to construct but are outperformed in terms of space consumption and query time by PTHash.%

\myparagraph{Brute Force}
RecSplit \cite{esposito2020recsplit} first partitions the input into sets of equal expected size.
It then recursively splits the key set of each partition until sets of small constant size (usually $\leq 16$) are left.
Within these sets, it finds a perfect hash function by brute force.
RecSplit achieves space usage of about $1.56$ bits/key.
The resulting splitting tree has to be traversed during querying which implies considerably higher query costs compared to PTHash.
The brute force search was later improved in SIMDRecSplit\cite{bez2022high}, which also parallelizes the construction on the GPU.
To the best of our knowledge, RecSplit is the only other PHF construction technique that has a GPU implementation.

\myparagraph{Perfect Hashing Through Retrieval}
In perfect hashing through retrieval, every key has a number of candidate positions, determined by different hash functions.
A retrieval data structure then stores which of the choices should be used for each key.
Note that this implies some query overhead compared to PTHash.
Early implementations include BPZ \cite{botelho2007simple} and GOV \cite{genuzio2016fast}.
SicHash \cite{lehmann2023sichash} reduces space consumption using a mix of different retrieval data structures and some retries.
ShockHash-RS \cite{lehmann2023shockhash,lehmann2023bipartite} combines 1-bit retrieval with
the brute-force approach of RecSplit and currently is the most space-efficient approach to MPHFs with
as little as 1.49 bits/key \cite{lehmann2023bipartite}.

\section{Optimizing Bucket Sizes}\label{s:bucketSizes}

\def\Po{\mathrm{Po}}
\def\Bin{\mathrm{Bin}}
\def\hx{\hat{x}}
\def\ha{\hat{α}}
\def\med{\mathrm{med}}

Consider perfect hashing through bucket placement with $n$ keys, for $m = n$ and $B$ buckets, i.e.\ an average bucket size of $λ = n/B$.
Previous literature overlooked the simple insight that large $λ$ already guarantees a space consumption close to the lower bound of $\log₂ e$ bits per key, \emph{without any assumptions} on the bucket sizes or their distribution.
\begin{proposition}
	\label{prop:phbp-space}
	Any specialization of perfect hashing through bucket placement requires between $\log₂ e$ bits per key and $\log₂ e + 𝒪(\frac{\log λ}{λ})$ bits per key in expectation.
\end{proposition}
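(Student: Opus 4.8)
I would prove the two bounds separately, and the lower bound is routine: the stored seeds together with the (fixed) bucket-assignment hash \emph{are} an MPHF on the $n$ keys, so by the information-theoretic lower bound for minimal perfect hash functions~\cite{mehlhorn1982program} the representation occupies at least $\log_2 e$ bits per key (the additive $\mathcal{O}(\tfrac{\log n}{n})$ slack is what I fold into the lower-order language). All the content is in the upper bound, which I would attack as follows.

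Condition on the bucket-assignment hash, fixing the bucket sizes $k_1,\dots,k_B$ with $\sum_i k_i = n$ and the placement order, and model the search (in the standard fully-random-hash idealization) as trying independent random seeds $0,1,2,\dots$ for bucket $i$ until one places all $k_i$ keys injectively into the slots still free at that point; let $S_i$ be the seed found, so $S_i$ is geometric with success probability $q_i = t_i^{\underline{k_i}}/n^{k_i}$, where $t_i$ is the number of free slots just before bucket $i$ and $t^{\underline{k}} := t(t-1)\cdots(t-k+1)$. The crux is a telescoping identity: since $t_1 = n$, $t_{i+1} = t_i - k_i$ and $t_{B+1} = 0$, we get $\prod_i q_i = \prod_i \frac{t_i!/t_{i+1}!}{n^{k_i}} = \frac{n!}{n^n}$, independently of both the bucket sizes and the placement order, and therefore $\sum_i \log_2(1/q_i) = \log_2\!\frac{n^n}{n!} = n\log_2 e - \tfrac{1}{2}\log_2(2\pi n) + o(1)$. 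So the intrinsic information content of the seed collection is $\approx n\log_2 e$ for \emph{any} bucket-size distribution; the distribution can influence only lower-order encoding overhead.

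To finish I encode the $S_i$. The query algorithm sees only a single key, not the input set, so it cannot recompute the $t_i$, and the code must be instance-oblivious: I take a universal integer code with codeword length $\log_2(s+1) + \mathcal{O}(\log\log(s+2))$ for value $s$ (Elias $\delta$, or a Golomb code with one parameter per bucket-size class). Since $\mathbb{E}[S_i] = \mathcal{O}(1/q_i)$, Jensen's inequality gives expected length $\log_2(1/q_i) + \mathcal{O}(1) + \mathcal{O}(\log\log(2 + 1/q_i))$ for bucket $i$. Summing over the $B = n/\lambda$ buckets: the $\log_2(1/q_i)$ terms give $n\log_2 e - \mathcal{O}(\log n)$ by the identity; the $\mathcal{O}(1)$ terms give $\mathcal{O}(n/\lambda)$; and for the $\log\log$ terms I use $\log_2\log_2 x \le \log_2\lambda + \tfrac{1}{\lambda}\log_2 x$ (valid since $\log_2 z \le z$), so that $\sum_i \log_2\log_2(1/q_i) \le B\log_2\lambda + \tfrac{1}{\lambda}\sum_i \log_2(1/q_i) = \mathcal{O}\!\left(\frac{n\log\lambda}{\lambda}\right)$. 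Adding the $\mathcal{O}(\log n)$ bits for the bucket-assignment hash and dividing by $n$ yields $\log_2 e + \mathcal{O}(\tfrac{\log\lambda}{\lambda})$ bits per key in expectation.

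I expect the main obstacle to be the accounting in this last step: one must insist that the encoder is genuinely instance-oblivious (ruling out a code tuned to each realized $q_i$, which would give $\mathcal{O}(1/\lambda)$) while still controlling \emph{expected}, not merely typical, codeword lengths --- this is what forces the Jensen step and the $\log_2 z \le z$ trick, whose role is exactly to charge the rare hard buckets (tiny $q_i$, huge $S_i$) to the $\mathcal{O}(\tfrac{\log\lambda}{\lambda})$ term rather than to the leading order. Secondary checks are that the telescoping is insensitive to the bucket-size model (Poisson vs.\ Binomial, and any realization thereof), and that in the idealized model the per-bucket search terminates almost surely since $t_i \ge k_i$ holds throughout.
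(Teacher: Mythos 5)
Your proof is correct and follows essentially the same route as the paper's proof of \cref{prop:bucket-placement-general-bounds}: the telescoping identity $\prod_i q_i = n!/n^n$, geometric seed distributions, Jensen's inequality, and a universal (Elias-type) integer code are all the same ingredients in the same order. The only noteworthy micro-difference is in bounding the $\log\log$ sum: the paper applies Jensen a second time (using concavity of $t \mapsto \log_2(2+t)$ across the $B$ buckets), while you use the pointwise inequality $\log_2\log_2 x \le \log_2\lambda + \tfrac{1}{\lambda}\log_2 x$ derived from $\log_2 z \le z$; both are clean and buy the same $\mathcal{O}\!\left(\tfrac{n\log\lambda}{\lambda}\right)$ term.
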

Our goal in this section need therefore only be to minimize construction time. Here we are faced with a lower bound for our family of approaches.
\begin{proposition}
	\label{prop:phbp-time-lowerbound}
	Any specialization of perfect hashing through bucket placement has an expected construction time of $Ω(e^λ/λ)$ per bucket.
\end{proposition}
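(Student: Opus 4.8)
The quantity to bound is the total expected number of seed evaluations, which is $B$ times the stated per-bucket cost. Fix any specialization and condition on the (possibly random) multiset of bucket sizes, written non-increasingly as $k_1 \ge k_2 \ge \dots \ge k_B \ge 0$. When bucket $i$ is about to be placed, exactly $e_i := n - \sum_{j<i} k_j$ of the $n$ slots are still free, with $e_1 = n$, $e_{i+1} = e_i - k_i$ and $e_{B+1} = 0$. First I would note that, under the standard idealization that distinct seeds give independent uniformly random hash functions, a trial seed places bucket $i$ without collisions with probability
\[
  P_i \;=\; \frac{e_i(e_i-1)\cdots(e_i-k_i+1)}{n^{k_i}} \;=\; \frac{e_i!}{e_{i+1}!\;n^{k_i}},
\]
since the $k_i$ keys of the bucket must hash to $k_i$ \emph{distinct} free slots. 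Because bucket $i$'s keys are disjoint from those of all earlier buckets, the behavior of any trial hash function on them is independent of the current free-slot set (whose size $e_i$ is a deterministic function of the fixed size profile); hence the number $T_i$ of seeds examined for bucket $i$ stochastically dominates a geometric variable with parameter $P_i$, whatever search order the specialization uses, so its expectation is at least $1/P_i$.

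Summing over buckets, the expected construction time is at least $\sum_{i=1}^{B} 1/P_i = \sum_{i=1}^{B} \tfrac{n^{k_i} e_{i+1}!}{e_i!}$. The key step is to apply AM--GM to these $B$ summands (for a fixed profile), because their product telescopes:
\[
  \prod_{i=1}^{B}\frac{n^{k_i}\,e_{i+1}!}{e_i!} \;=\; n^{k_1+\dots+k_B}\cdot\frac{e_{B+1}!}{e_1!} \;=\; \frac{n^{n}}{n!},
\]
using $\sum_i k_i = n$, $e_1 = n$ and $e_{B+1} = 0$. Hence $\sum_{i=1}^{B} 1/P_i \ge B\,(n^{n}/n!)^{1/B}$ for every size profile, and therefore also in expectation — and, pleasingly, this uses no information whatsoever about the bucket-size distribution.

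It remains to evaluate $(n^{n}/n!)^{1/B}$. Stirling's bound $n! \le e\sqrt{n}\,(n/e)^{n}$ gives $n^{n}/n! \ge e^{n-1}/\sqrt{n}$, so
\[
  B\left(\frac{n^{n}}{n!}\right)^{1/B} \;\ge\; B\,e^{(n-1)/B}\,n^{-1/(2B)} \;=\; B\,e^{\lambda}\cdot\exp\!\left(-\tfrac{2+\ln n}{2B}\right);
\]
since $B = n/\lambda$, the correction exponent equals $\lambda(2+\ln n)/(2n) = o(1)$ whenever $\lambda = o(n/\log n)$ (in particular for fixed $\lambda$ as $n\to\infty$). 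Thus the construction time is at least $(1-o(1))\,B\,e^{\lambda}$, i.e.\ $\Omega(e^{\lambda})$ per bucket, which is even \emph{stronger} than the claimed $\Omega(e^{\lambda}/\lambda)$ and so implies it.

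The one genuinely delicate point is the first paragraph: one must argue that \emph{no} specialization — including an adversarial (or merely randomized) bucket-assignment hash, any tie-breaking among equal-size buckets, and an arbitrary, even adaptive, seed-search order — can place bucket $i$ with fewer than $1/P_i$ trials in expectation, which is exactly where the idealized-hashing assumption enters. After that, the telescoping/AM--GM argument is short and distribution-free, matching the intuition from the introduction that skewing the bucket sizes (as in FCH and PTHash) can lower the constant but not the $e^{\lambda}$ growth of the per-bucket cost.
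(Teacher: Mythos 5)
Your proof is correct and follows essentially the same route as the paper's own argument (Proposition~\ref{prop:bucket-placement-general-bounds} in \cref{sec:bucket-placement-bounds}): lower-bound the expected number of seed trials for bucket $i$ by $1/p_i$ via the geometric distribution, observe that $\prod_i p_i = n!/n^n$ (the probability a uniformly random function on $n$ keys is a bijection), apply AM--GM to $\sum_i 1/p_i$, and finish with Stirling. Your write-up is somewhat more explicit than the paper's — you spell out the telescoping formula for $P_i$ in terms of free-slot counts $e_i$, flag where the idealized-hashing assumption is used, and carry the Stirling error term through carefully — and your final bound of $(1-o(1))\,e^{\lambda}$ per bucket is in fact what the paper's own calculation yields as well (its stated $\Omega(e^{\lambda}/\lambda)$ per bucket is simply a weaker restatement, since the total is $\Omega(n e^{\lambda}/\lambda)$ over $B = n/\lambda$ buckets).
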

\Cref{prop:phbp-space,prop:phbp-time-lowerbound} are restated more formally as \cref{prop:bucket-placement-general-bounds} and proved in \cref{sec:bucket-placement-bounds}.
It is intuitively clear (and proved in \cref{prop:largest-to-smallest} in \cref{app:primary-bucket-ordering}) that buckets should be processed in order from largest to smallest. The only remaining degree of freedom is to choose the expected sizes of the buckets.
We characterize asymptotically optimal ways of doing so, formalized by what we call bucket assignment functions and achieving a construction time of $e^{λ(1+ε)}$ per bucket. Proofs are found in the appendix.

\subsection{Bucket Assignment Functions}\label{ss:bas}
Let $w₁,…,w_B$ be the probability that a key hashes to bucket $i$ for $i ∈ [B]$. We may assume without loss of generality that these probabilities are given in decreasing order.
An equivalent view considers the prefix sums $σ_i := w₁+…+w_i$. A key with (normalized) hash value $x ∈ (0,1]$ is then assigned to bucket $i$ if $x ∈ (σ_{i-1},σ_i]$.

We can conveniently represent this information using a \emph{bucket assignment function} $γ : [0,1] → [0,1]$ that: interpolates the points $\{(σ_i,i/B) \mid 0 ≤ i ≤ B\}$, is increasing and smooth on $(0,1)$, and has non-decreasing derivative. The bucket assigned to hash value $x ∈ (0,1]$ is then $⌈γ(x)·B⌉$. It is a non-trivial insight of this section that a single bucket assignment function (not depending on $B$ and $n$) can result in good construction times for many values of $B$ and $n$ simultaneously.

From now on, let $\lambda := n/B$. We summarize some useful intuitions about bucket assignment functions. These intuitions are valid for large $n$ and $B$ (when $γ$, $γ^{-1}$, and $γ'$ are approximately constant on intervals of length $\frac{1}{n}$ and $\frac{1}{B}$). For now, we neglect edge cases related to $γ$ or $γ^{-1}$ not being smooth at $0$ or not being smooth at $1$ (but just smooth on $(0,1)$).

\begin{intuition}\upshape
	\label{int:derivative-and-bucket-sizes}
	Let $x ∈ (0,1]$ be a normalized hash and $b = γ(x)$ the normalized bucket index of the bucket assigned to $x$. Then
	\begin{enumerate}[(i)]
		• The expected size of the bucket assigned to $x$ is $λ/γ'(x)$.\\\emph{(Reason: In the vicinity of $x$ and for infinitesimal $δ$, a $δ$-fraction of the hash range (used by $δn$ keys in expectation) is shared by a $(γ'(x)·δ)$-fraction of the $B$ buckets. The quotient is $δn/(γ'(x)δB) = λ/γ'(x)$.)}
		• The expected size of the bucket with normalized index $b$ is $λ/γ'(γ^{-1}(b)) = λ(γ^{-1})'(b)$.\\\emph{(Follows from \lipicsLabel{(i)} and the inverse function rule.)}
		• The expected size of a bucket is decreasing in its normalized index.\\\emph{(Follows from \lipicsLabel{(ii)} and monotonicity of $γ'$ and $γ^{-1}$.)}
		• The expected fraction of keys with normalized hash in $(0,x]$ is $x$.
		• If $μ > 0$ and $x_μ ∈ (0,1)$ is such that $λ/γ'(x_μ) = μ$ then the expected fraction of keys in buckets of size at least $μ$ is $x_μ$.
            \emph{(Follows from \lipicsLabel{(i)}, \lipicsLabel{(iii)} and \lipicsLabel{(iv)}.)}
	\end{enumerate}
\end{intuition}

\begin{figure}[t]
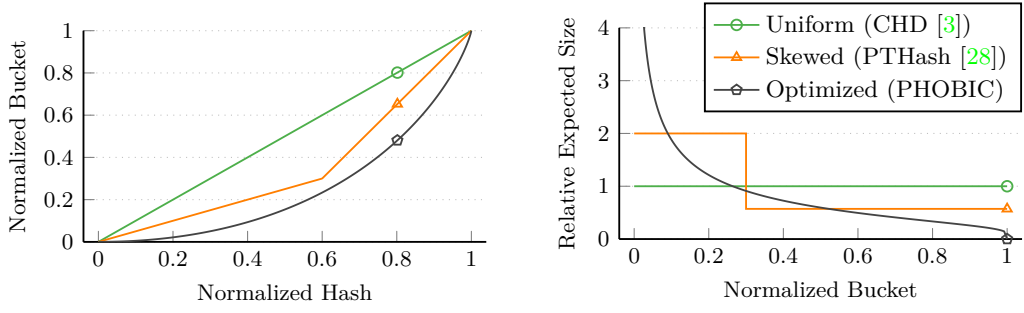

	\begin{subfigure}[b]{0.49\textwidth}
		\centering
		\input{fig/bucketasignment.tex}
		\caption{The bucket assignment functions map a normalized hash value $x$ to a normalized bucket index $\gamma(x)$.}
		\label{fig:bucketassign}
	\end{subfigure}
	\hfill
	\begin{subfigure}[b]{0.49\textwidth}
		\centering
		\input{fig/bucketfunction.tex}
		\caption{The expected bucket sizes relative to the average size $\lambda$ are $(\gamma^{-1})'(b)$ for normalized bucket index $b$.}
		\label{fig:bucketfunction}
	\end{subfigure}
	\caption{Comparison of bucket assignment functions $\gamma(x)$ of related work and PHOBIC ($\gamma=\beta_*$). }
	\label{fig:bucket-assignment-and-sizes}
\end{figure}

\subsection{An Optimal Bucket Assignment Function}

Intuitively, we identify the following bucket assignment function to be optimal, although our precise result stated below is more subtle.
\[ β_*(x) = x + (1-x)·\ln(1-x) \text{ with derivative } β_*'(x) = -\ln(1-x).\]
For comparison, \Cref{fig:bucketassign} shows $β_*$ as well as the bucket assignment functions used by CHD and PTHash. In \cref{fig:bucketfunction} we see the distribution of expected bucket sizes, which is uniform for CHD, imbalanced for PTHash, and even more imbalanced for $β_*$.

Recall that $\lambda = n/B$ is the average bucket size.
By \cref{prop:phbp-time-lowerbound} a lower bound for the expected work is $Ω(n·e^\lambda/λ)$. We prove, firstly, that any bucket assignment function $γ$ that differs from $β_*$ leads to expected work exceeding $n·e^{(1+ε)\lambda}$ for some $ε = ε(γ)$, provided that $\lambda$ is large enough. Conversely, we show that a slight perturbation $β_ε$ of $β_*$ leads to an expected work of essentially at most $n·e^{(1+ε)\lambda}$ for any $ε > 0$, provided that $\lambda ≥ \lambda₀(ε)$ is large enough. Essentially, we can get arbitrarily close to a cost of $e^\lambda$ per key and only functions close to $β_*$ can achieve this.

\def\wcoupon{w_{\mathrm{coupon}}}
Our results bound the work $w_{n,\lambda}(γ)$ associated with $γ$ and involve a “coupon collector term” $\wcoupon$, which is the work required to place buckets of size $1$. We will define these more precisely below. Proofs are found in \cref{sec:not-beta-is-bad,sec:beta-is-good}. We have reason to believe that our results generalize for the non-minimal case of $m>n$, as explained in \cref{s:nonminimal}.

\begin{theorem}
	\label{thm:not-beta-is-bad}
	Let $γ:[0,1]→[0,1]$ be a continuous bucket assignment function that is smooth on $(0,1)$ with non-decreasing derivative. If $β_* ≠ γ$ then
	\[ ∃ ε > 0: ∀\lambda ≥ \lambda_0(ε): ∀n≥n₀(\lambda,ε): w_{n,\lambda}(γ) ≥ n·e^{\lambda(1+ε)}+\wcoupon\text{ whp}.\]
\end{theorem}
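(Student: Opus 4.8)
The plan rests on the fact — which Intuition~\ref{int:derivative-and-bucket-sizes} makes precise up to $(1{+}o(1))$ factors — that the bucket placed when a fraction $\phi\in(0,1)$ of the $n{=}m$ slots is filled has size $\approx\lambda/\gamma'(\phi)$ (buckets are processed largest-first, and by Intuition~\ref{int:derivative-and-bucket-sizes}(v) a bucket of size $s$ is placed at fill level $\phi$ with $\lambda/\gamma'(\phi)=s$), so a random seed for it succeeds with probability $\approx(1-\phi)^{\lambda/\gamma'(\phi)}$ and needs $\approx(1-\phi)^{-\lambda/\gamma'(\phi)}=\exp(\lambda\,g(\phi))$ trials, where $g(x):=-\ln(1-x)/\gamma'(x)$. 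Hence the construction work grows in $\lambda$ like $\exp(\lambda\sup_x g(x))$, and the key structural observation is that $\beta_*$ is the unique minimizer of $\sup_x g(x)$ over admissible $\gamma$: we may assume $\gamma'>0$ on $(0,1)$ (else $g$ is unbounded and the conclusion is immediate), and if $g(x)\le 1$ for all $x\in(0,1)$ then $\gamma'(x)\ge-\ln(1-x)=\beta_*'(x)$ there, so $1=\gamma(1)-\gamma(0)=\int_0^1\gamma'\ge\int_0^1\beta_*'=\beta_*(1)-\beta_*(0)=1$, forcing $\gamma'=\beta_*'$ a.e.\ and hence — by smoothness and $\gamma(0)=0$ — $\gamma=\beta_*$. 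Consequently, if $\gamma\ne\beta_*$ there is $x_0\in(0,1)$ with $g(x_0)=1+2\delta$ for some $\delta=\delta(\gamma)>0$; by continuity of $\gamma'$ on $(0,1)$ I fix $\eta=\eta(\gamma)>0$ with $I:=[x_0-\eta,x_0+\eta]\subset(0,1)$, with $\gamma'$ bounded away from $0$ on $I$, and with $\eta\le\delta\,\gamma'(x_0)(1-x_0)$.

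The bound is then extracted from the $B'=\Theta(\eta\,\gamma'(x_0)B)=\Theta(\eta\,\gamma'(x_0))\cdot n/\lambda$ buckets placed while the fill level lies in $I$. Each of these has size $\approx\mu:=\lambda/\gamma'(x_0)$ (a concentration statement about the relevant order statistics and the number of keys in sufficiently large buckets, via negative association of bucket sizes) and is placed at fill level $\ge x_0-\eta$, so a random seed for such a bucket of size $S$ succeeds with probability at most $(1-x_0+\eta)^{S}$; thus its search is stochastically at least a $\mathrm{Geom}\big((1-x_0+\eta)^{S}\big)$ number of trials. Call the bucket \emph{heavy} if $S\ge\mu$. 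A Poisson/Binomial tail estimate gives $\Pr[\text{heavy}]\ge 0.4$ once $\lambda$ (hence $\mu$) and $n$ are large, while for a heavy bucket $(1-x_0+\eta)^{-\mu}=\exp(-\mu\ln(1-x_0+\eta))\ge\exp\big(\lambda(1+2\delta)-\tfrac{\eta\lambda}{\gamma'(x_0)(1-x_0)}\big)\ge e^{\lambda(1+\delta)}$ by the choice of $\eta$, and $\Pr\big[\mathrm{Geom}\big((1-x_0+\eta)^{\mu}\big)\ge\tfrac12 e^{\lambda(1+\delta)}\big]\ge 0.6$ since that success probability is tiny. So, conditionally on its size (which also determines the processing order and hence, on the high-probability event that all fill levels stay within $o(1)$ of nominal, the success-probability bound above), each such bucket is \emph{costly} — needing $\ge\tfrac12 e^{\lambda(1+\delta)}$ trials — with conditional probability $\ge 0.24$; a Chernoff/Azuma argument over these buckets then yields, whp, at least $\tfrac1{10}B'$ costly buckets, so $w_{n,\lambda}(\gamma)\ge\tfrac1{20}B'\,e^{\lambda(1+\delta)}=\Theta(\eta\,\gamma'(x_0))\cdot\tfrac{n}{\lambda}\,e^{\lambda(1+\delta)}$.

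To finish, set $\varepsilon:=\delta/2$: the factor $e^{\lambda\delta/2}$ eventually dominates the polynomial factor $\lambda/\Theta(\eta\,\gamma'(x_0))$, so $w_{n,\lambda}(\gamma)\ge n\,e^{\lambda(1+\varepsilon)}$ for all $\lambda\ge\lambda_0(\gamma)$, in fact with a large multiplicative margin. Moreover, for $\lambda$ large the coupon-collector work $\wcoupon$ is spent on buckets of size $1$, which lie at fill levels $\ge 1-e^{-\Omega(\lambda)}>x_0+\eta$ — disjoint from $I$ except for the $O(\eta\,\lambda e^{-\lambda}B)$ size-$1$ buckets that fluctuate into $I$, whose total contribution is a vanishing fraction of $\wcoupon$. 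Since the total work is the sum over all buckets of the trials they require, $w_{n,\lambda}(\gamma)\ge n\,e^{\lambda(1+\varepsilon)}+\wcoupon$ whp for all $n\ge n_0(\lambda,\varepsilon)$; and if the construction instead gets stuck before placing the buckets in $I$ — a probability-$o(1)$ event — the work is larger still.

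The core optimization step — the short integration above identifying $\beta_*$ as the unique minimizer of $\sup_x g$ — is easy; I expect the real work to be in making Intuition~\ref{int:derivative-and-bucket-sizes} quantitative (honest $(1\pm o(1))$ estimates for the design probabilities $w_i=\sigma_i-\sigma_{i-1}$, the prefix sums $\sigma_i$, and the associated order statistics and bucket-size counts), and, most delicately, in controlling the probabilistic coupling that sequential processing introduces: the fill level a bucket sees depends on the random success history of all earlier buckets. Conditioning on the bucket sizes (which fix the processing order) and on the high-probability event that the fill levels stay near nominal is what decouples the per-bucket searches enough for a concentration bound; the careful handling of rare failures or size fluctuations elsewhere in the process is where most of the effort goes.
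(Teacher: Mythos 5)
Your approach is essentially the paper's: locate a load factor where $\gamma' < \beta_*'$, conclude that buckets processed there are substantially larger than $\beta_*$ would allow, count those buckets ($\Theta(n/\lambda)$ of them), and charge each $\gtrsim e^{\lambda(1+c\varepsilon)}$ via the bound $c_n(s,\alpha) \geq (1-\alpha)^{-s}$. Your short integration argument with $g(x) := -\ln(1-x)/\gamma'(x)$ (that $g \leq 1$ everywhere plus $\int_0^1 \gamma' = \int_0^1 \beta_*' = 1$ forces $\gamma = \beta_*$) is a clean way to extract the witness $x_0$; the paper instead applies the mean-value theorem to $\gamma - \beta_*$ from a point where the functions differ, but both land at the same $\gamma'(y) < \beta_*'(y)$.

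One thing to fix: you are over-engineering the probability. The paper defines $w_{n,\lambda}(\gamma) = \sum_i c_n(s_i, \alpha_i)$ where $c_n$ is \emph{already} the conditional expected cost given the bucket sizes, and $\alpha_i = \tfrac{1}{n}\sum_{j<i} s_j$ is a deterministic function of those sizes. So the only randomness in $w_{n,\lambda}(\gamma)$ is in $(s_i)$, and ``whp'' refers solely to that. There are no geometric random variables to tail-bound, no Chernoff/Azuma over realized trial counts, and no ``coupling from sequential processing'' — once bucket sizes are drawn, the load factor each bucket sees is determined, not random. The only concentration needed is over bucket sizes (that $\Omega(\varepsilon n)$ keys land in buckets of size $\geq s$ processed after load factor $z$), which the paper handles with a standard concentration bound and you handle with your heavy-bucket count; that part is fine. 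Dropping the geometric-tail-and-Azuma layer both simplifies your argument and aligns it with what the theorem actually claims.

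Minor: your remark that the coupon-collector work lies at load factors $\geq 1-e^{-\Omega(\lambda)} > x_0 + \eta$ is more than is needed — the paper simply notes that size-$1$ buckets are disjoint from the buckets of size $\geq s \geq 2$ used for the main bound, so the two contributions add.
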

While this leaves the relationship between $γ$ and $ε(γ)$ open, our proof suggests that any $ε < \sup_{x ∈ (0,1)} \frac{β'_*(x)}{γ'(x)}-1$ is a possible choice.
\begin{theorem}
	\label{thm:beta-is-good}
	Let $β_ε(x) := εx + (1-ε)β_*(x)$ for some $ε > 0$. Then
	\[ ∀ε > 0: ∀\lambda ≥ \lambda_0(ε): ∀n ≥ n₀(\lambda,ε): w_{n,\lambda}(β_ε) ≤ n·e^{\lambda(1+𝒪(ε))} + \wcoupon\text{ whp}. \]
\end{theorem}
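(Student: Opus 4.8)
The plan is to follow the filling process bucket by bucket — bounding the expected number of seed trials for each bucket by (essentially) $f^{-s}$, where $f$ is the fraction of the table still empty when the bucket of size $s$ is placed — and then to peel off a coupon‑collector endgame. Under uniform hashing, the bucket of normalized index $b$ has size $\approx\Po(μ(b))$ with $μ(b)=λ(β_ε^{-1})'(b)=λ/β_ε'(β_ε^{-1}(b))$, and distinct buckets are near‑independent (Poissonization, with a correction absorbable into lower‑order terms). Because $β_ε'(x)=ε+(1-ε)(-\ln(1-x))\geq ε$, no bucket has expected size above $λ/ε$, so a union bound over the $B=n/λ$ buckets gives a high‑probability event $\mathcal G$ on which (a) no bucket exceeds size $s^{*}=𝒪(\log n/\log\log n)$ and (b) every prefix sum of bucket sizes lies within $o(n)$ of its mean. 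On $\mathcal G$, processing buckets in non‑increasing size — hence, up to (b), in non‑decreasing $b$ — order, the bucket of index $b$ is placed into a table with $f(b)\,n=(1-β_ε^{-1}(b))(1\pm o(1))\,n$ empty slots, and a uniformly random seed succeeds with probability $\geq f(b)^{s}(1-o(1))$ as long as $s=o(\sqrt{f(b)n})$; thus its conditional expected work is $\leq s\cdot f(b)^{-s}(1+o(1))$.

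The estimate that drives everything is that, writing $y=β_ε^{-1}(b)$ and $L=-\ln(1-y)$ so $β_ε'(y)=ε+(1-ε)L$,
\[ (1-y)^{-μ(b)}=e^{Lμ(b)}=\exp\!\Big(\frac{λ L}{ε+(1-ε)L}\Big)\leq e^{λ/(1-ε)}=e^{λ(1+𝒪(ε))}\qquad\text{uniformly in }b.\]
This is precisely the design principle: $β_*$ is the (essentially unique, by \cref{thm:not-beta-is-bad}) bucket assignment function for which $Lμ(b)\equivλ$, matching the lower bound of \cref{prop:phbp-time-lowerbound}; the $ε$‑perturbation inflates the exponent only by the benign factor $1/(1-ε)$ while buying $β_ε'\geq ε$ and removing the degeneracy $β_*'(0)=0$. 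It then remains to absorb the fluctuation of the actual size $s$ about $μ(b)$ and of $f(b)$ about $1-y$ into a further $𝒪(ε)$ in the exponent, plus the coupon‑collector term.

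I would split the buckets by size. For "large" buckets — expected size above $μ_0≔Θ(ε^{-2}\log n)$ — Chernoff and a union bound give $s\leqμ(b)(1+𝒪(ε))$ on $\mathcal G$, so each contributes at most $e^{λ(1+𝒪(ε))}$ to $\sum_j s_j f_j^{-s_j}$ (the prefactor $s$, the $(1-o(1))^{-s}$, and the $f$‑vs‑$(1-y)$ slack disappearing into $𝒪(ε)$ once $λ\geqλ_0(ε)$), and there are at most $B$ of them, totalling $\leq ne^{λ(1+𝒪(ε))}$. The "small but non‑singleton" buckets (size between $2$ and $μ_0$) — including the near‑endgame ones, placed into a table already only $e^{-Θ(λ)}$‑empty and hence individually expensive — I would control phase by phase in the actual size $i=2,\dots,s^{*}$: the $i$‑th phase places $N_i$ such buckets while the empty fraction drops from $f_{i+1}$ to $f_i$, at cost $\leq n\int_{f_i}^{f_{i+1}}f^{-i}\,df\leq\tfrac{n}{i-1}f_i^{-(i-1)}$, and a tail bound on $N_i$ through $\sum_b μ(b)^i/i!$ (again using $β_ε'\geq ε$) makes $f_i^{-(i-1)}$ at most $e^{λ(1+𝒪(ε))}$ times a factor that decays geometrically in $i$, so that summing over $i\geq2$ stays $𝒪(ne^{λ(1+𝒪(ε))})$. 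The size‑$1$ buckets — about $ne^{-λ/(1-ε)}$ of them — form the genuine coupon‑collector endgame, whose cost concentrates on its own around the quantity we call $\wcoupon$; on $\mathcal G$ the deep endgame contains no surviving size‑$\geq2$ bucket, so nothing of that sort leaks into $\wcoupon$. Putting these together, $\mathbb E[w_{n,λ}(β_ε)-\wcoupon\mid\mathcal G]\leq ne^{λ(1+𝒪(ε))}$, and since on $\mathcal G$ the bulk contribution $w_{n,λ}(β_ε)-\wcoupon$ is a sum of independent geometric‑type terms each of mean at most $\mathrm{polylog}(n)\cdot e^{λ(1+𝒪(ε))}=o(ne^{λ(1+𝒪(ε))})$, a Bernstein‑type inequality for sums of geometrics upgrades the bound to hold whp (the constant factor and $\ln 3$ being absorbed into $𝒪(ε)$ since $λ\geqλ_0(ε)$).

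The step I expect to be the main obstacle is this middle regime together with the empty‑fraction fluctuations: making the phase‑by‑phase accounting rigorous — in particular pinning down $f_i$ (the empty fraction once every bucket of size $\geq i$ is placed) sharply enough that $f_i^{-(i-1)}$ does not overshoot $e^{λ(1+𝒪(ε))}$ — and showing that with high probability the table never becomes so nearly full that a single leftover size‑$2$ bucket, placed when only $\approx e^{-cλ}n$ slots remain, blows the bound by itself. The largest buckets (near $b=0$) are by contrast easy, precisely because $β_ε'\geq ε$ caps their size at $λ/ε$; the small, late, rare‑but‑expensive buckets are what dominate the difficulty.
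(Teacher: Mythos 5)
Your high-level architecture is right — the design identity $\exp\bigl(\tfrac{\lambda L}{\varepsilon+(1-\varepsilon)L}\bigr)\le e^{\lambda/(1-\varepsilon)}$ is exactly the reason $\beta_*$ works and $\beta_\varepsilon$ caps bucket sizes at $\lambda/\varepsilon$, and the split into a bulk term plus a coupon‑collector endgame is what the paper does too — but the one step you flag as the main obstacle is the whole proof, and the technique you propose for it is in the wrong direction. Your "large bucket" case (expected size $\ge \Theta(\varepsilon^{-2}\log n)$) is vacuous: \cref{obs:size-bound} shows \emph{every} bucket has expected size at most $\lambda/\varepsilon$, which is bounded and far below $\varepsilon^{-2}\log n$ once $n\ge n_0(\lambda,\varepsilon)$. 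So there are never any buckets on which per-bucket Chernoff applies; everything lives in your "small but non-singleton" regime, and the paper accordingly controls the process through the Deadline Lemma (\cref{lem:deadline}) on $\alpha_s$ across the full range $2\le s\le 2\log n/\log\log n$ rather than through size concentration of individual buckets.

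The genuine gap is inside that regime. What you need is a sharp \emph{lower} bound on $1-\hat\alpha_s$, the expected fraction of keys in buckets of actual size $\le s-1$; your proposal — "a tail bound on $N_i$ through $\sum_b \mu(b)^i/i!$" — is an upper bound on the number of buckets of a given actual size, which corresponds only to the paper's case~(ii) (\cref{claim:custom-poisson-tail}) and covers $s\gtrsim\sqrt\lambda$. For $2\le s\lesssim\sqrt\lambda$ the paper needs the opposite, an anti-concentration input: \cref{obs:poisson-mean} (the Poisson-median observation) shows a key whose bucket has \emph{expected} size in $[s,s+1)$ still has constant probability of landing in a bucket of \emph{actual} size $\le s-1$, which is precisely the "more small buckets than we seem to have called for" effect the paper flags in its intuition paragraph, and it yields $1-\hat\alpha_s\gtrsim e^{-(\lambda/s)/(1-\varepsilon)}$. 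Without this, the naive replacement of $\hat\alpha_s$ by a quantity built only from buckets of expected size below $s$ (which is what your upper bounds on $N_j$ for $j\ge s$ would deliver, up to the $\Theta(\sqrt s)$ spread of the Poisson) effectively gives $\alpha_s\approx x_{s-\Theta(\sqrt s)}$ rather than $x_s$; for $s=\mathcal O(1)$ this inflates $f_s^{-s}$ by an $e^{\Theta(\lambda)}$ factor and the bound collapses. The Poisson-median lower bound is thus a genuinely separate ingredient that your plan does not supply, and it is the ingredient (not the large-bucket cap, which is easy) that makes the whole thing go through.
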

By \emph{with high probability} (whp) we mean probability $1-𝒪(n^{-c})$ for some $c > 0$. Note that both theorems are phrased such that we may assume that $n$ is much larger than $\lambda$ and $\lambda$ is much larger than $1/ε$. We give implementation details concerning the use of $β_ε$ in \cref{s:bucketerimpl}.

\myparagraph{What we do not prove} Note that our analysis leaves undecided whether $β_*$ is itself a “good” bucket assignment function, i.e.\ whether $w_{n,\lambda}(β_*)$ approaches $e^\lambda$ in any meaningful sense. We suspect that it does. However, the perturbation simplifies the analysis and improves running times in practice. Our analysis also does not imply that the particular perturbation we choose is the best choice: there may be an alternative to $β_ε$ such that the overall work approaches $e^\lambda$ more quickly.

\myparagraph{The work associated with a bucket assignment function}
To place a bucket of size $s ∈ ℕ$ into a hash table of size $n$ that already has load factor $α ∈ [0,1-\frac{s}{n}]$ we repeatedly try seeds for a hash function mapping keys to $[n]$, until all keys hash to free positions. The expected cost $cₙ(s,α)$ associated with this task under the simple uniform hashing assumption is described precisely in \cref{app:primary-bucket-ordering}. We have to take into account self-collisions, i.e.\ while checking the keys one after the other, the load factor gradually increases and is $α' = α+\frac{s-1}{n}$ for the last key. For our purposes, the following bounds on $cₙ(s,α)$ suffice
\begin{equation}
	(1-α)^{-s} ≤ cₙ(s,α) ≤ s·(1-α')^{-s}. \label{eq:bucket-cost-bound}
\end{equation}
This uses that $(1-α)^{-s}$ and $(1-α')^{-s}$ are lower and upper bounds on the number of seeds that have to be tried and that, to test a seed, at least $1$ and at most $s$ keys have to be considered.
Now assume we are given a bucket assignment function $γ$ as well as $n ∈ ℕ$, $\lambda ∈ ℝ₊$ and $B = n/\lambda$. By assigning keys to buckets according to $γ$ and hash values in $(0,1]$ we obtain buckets. Let $s₁ ≥ … ≥ s_B$ be their sizes in decreasing order. By \cref{prop:largest-to-smallest} in \cref{app:primary-bucket-ordering} it is advantageous to process the buckets in this order. Defining $α_i := \frac{1}{n}\sum_{j = 1}^{i-1} s_i$, the total cost is then
$w_{n,\lambda}(γ) = \sum_{i = 1}^{B} cₙ(s_i,α_i)$.

Note that while this describes the \emph{expected} cost when \emph{given} $(s_i)_{i ∈ [B]}$, overall $w_{n,\lambda}(γ)$ is still a random variable because the numbers $(s_i)_{i ∈ [B]}$ are random.
Assume now that there are exactly $k$ buckets of size $1$ that are placed last (we may ignore buckets of size $0$). For these buckets, the upper and lower bounds in \cref{eq:bucket-cost-bound} coincide so they incur a cost of
\[ \wcoupon := \sum_{i = 1}^{k} c(1,\tfrac{n-i}{n}) = \sum_{i = 1}^{k} \frac{n}{i} = n·H_k. \]
Here $H_k$ is the $k$th harmonic number, which satisfies $H_k = Θ(\log k)$. If $n$ is sufficiently large compared to $\lambda$ then we have $k ≥ n^d$ whp for some constant $d > 0$, giving a cost of $Θ(n·H_{n^d}) = Θ(n\log n)$. This dominates overall construction time if $n$ is sufficiently large compared to $\lambda$. Our theorems list this work for buckets of size $1$ as a separate term because there are techniques to mitigate the problem: The hash function may permit to directly compute for a given key $x$ and table position $i$ a seed for which $x$ is mapped to $i$. This is the case if the seed includes an additive displacement term, as is the case in our implementation and in FCH~\cite{fox1992faster}.

\myparagraph{Intuition: What makes $β_*$ uniquely promising}
For $\mu > 0$ let $x_\mu ∈ (0,1)$ be such that $\lambda/β'_*(x_\mu) = \mu$. By \cref{int:derivative-and-bucket-sizes} (v), roughly an expected $x_\mu$-fraction of the keys (those with hashes in $[0,x_\mu]$) land in buckets of expected size at least $\mu$. Assume for now that a bucket of expected size $\mu$ has actual size $\mu$ (ignoring the issue that $\mu$ may not be integer). Then, since we process buckets in order of increasing size, we would process a bucket of size $\mu$ when the load factor is $x_\mu$. The expected cost for this is, by \cref{eq:bucket-cost-bound}, around $(1-x_\mu)^{-\mu}$. Using that $β'_*(x) = -\log(1-x)$ gives $\mu = -\lambda/\log(1-x)$ and hence $(1-x_\mu)^{-\mu} = (1-x_\mu)^{\lambda/\log(1-x_\mu)} = e^\lambda$, i.e.\ a cost independent of $\mu$. The idea behind \cref{thm:not-beta-is-bad} is that any bucket assignment function $γ ≠ β_*$ fails to balance bucket sizes in this way, leading to significantly higher costs.

The simplification we made seems innocent for large $\mu$ since a bucket of large expected size typically has actual size close to its expectation. But consider $\mu = 1.5$ and cease to ignore rounding issues. If at load factor $x_\mu$ we would process buckets of size $1$ half the time and buckets of size $2$ half the time, the resulting costs are around $e^{\frac 23A}$ and $e^{\frac 43A}$, respectively, which does not average out to $e^\lambda$. Luckily, things are more complicated. It turns out that for small $s ∈ ℕ$ and assuming large $\lambda$, the \emph{expected number} of buckets of size $s$ is meaningfully greater than the number of buckets of \emph{expected size} in the range $[s-1,s+1]$. This means that we get more small buckets than we seem to have called for, decreasing costs at high load factors. It seems clear that the flipside of this beneficial effect must be a detrimental effect for larger bucket sizes that a proof of \cref{thm:beta-is-good} must quantify. When it comes to \emph{very} large bucket sizes, we bail ourselves out by using $β_ε$ instead of $β_*$: since $β_ε'$ is lower bounded by $ε$, the expected bucket sizes are capped at $\lambda/ε$. It is buckets of intermediate sizes that have to pay the price.

\section{Fine-Grained Partitioning}\label{s:partitioning}

Any PHF construction can trivially be parallelized by hashing the keys into subsets of expected equal size and building a PHF for each subset in parallel. We refer to those subsets as \emph{partitions}.
The various PHFs are then logically ``concatenated'' into a single PHF taking the prefix sum of the partition sizes. The respective offsets have to be looked up when querying a key, imposing some query time overhead. Partitioning is widely used for a variety of construction techniques. It was also used by PTHash in the PTHash-HEM variant \cite{pibiri2024parallel}. In this paper, we use partitions that are several magnitudes smaller than the ones used in PTHash-HEM. In itself, reducing the partition size results in only marginal construction time improvements.
However, small partitions enable a new, more efficient encoding scheme which we introduce in \cref{ss:multi}.
Additionally, they enable a fast GPU parallelization which we describe in \cref{ss:gpuimpl}.

\myparagraph{Encoding the offsets of the partitions}
The offsets and sizes of the many partitions require a non-negligible amount of space.
We mitigate this without incurring too much query overhead by storing sizes only implicitly as the difference of two subsequent offsets. Offsets are stored as the difference to their expectation.%

\myparagraph{Hash Function}
The original PTHash hash function works by XOR-ing the hash value of the key with the hash value of the seed.
This reduces hash function evaluation to a simple XOR operation, which improves performance in practice.
Although the technique works well on large-enough partition sizes, it might fail for small sizes because of correlations in the hash values.
Given the small partition sizes we use here, we have to rely on a different technique.
We use the seed value $p$ to store two numbers, namely $p = s \cdot m + d$, where $m$ is the actual partition size and $d \in [m]$ is an additive displacement.
The position of a key $x$ is $(h(x,s) + d)\bmod m$. While searching for a seed, we calculate $h(x,s)$ for all keys of the bucket and then only have to increment the values to obtain the positions for the next seeds. If no position is found within $d \in [m]$ we continue by incrementing $s$, re-calculating $h(x,s)$ and setting $d=0$.
At query time, we have to calculate the position of a key $x$ using seed $p$.
Note that we have $d = p \bmod m$,
so we can compute the position of the key $x$ as $(h(x,\lfloor p/m \rfloor) + p)\bmod m$.

\subsection{Interleaved Coding of Seeds}
\label{ss:multi}

Once the search has finished, the seeds found for each bucket have to be stored in some
compressed manner. Ideally, the seeds should be encoded such that they require little space and are quickly accessible during querying.
We mainly use \emph{Compact} and \emph{Golomb-Rice} encoding as building blocks for our new technique. Compact encoding is also used in the original PTHash implementation.
In Compact encoding, all values are stored consecutively by concatenating their binary representation. All values use the same bit length, allowing for quick access. The bit length is chosen such that the highest seed can be accommodated. 
Golomb-Rice \cite{golomb1966run,rice1979some} encoding stores the $b$ least significant bits of each seed using compact encoding. The most significant bits are stored in unary representation. A selection structure enables access to the unary part of the seeds in constant time. We apply the formula by Kiely \cite{kiely2004selecting} to select $b$.
A straightforward approach would be to encode all seeds using a single encoder.
However, the seeds do not follow the same statistical distributions across
different buckets, hence using the same encoder for all buckets is suboptimal.
It is instead beneficial to group seeds which follow the same distribution and encode them using the same encoder.
PTHash does this only partially by using two encoders -- one for each expected bucket size (the so-called ``front-back'' compression~\cite{pibiri2021pthash}).

\begin{figure}[t]
	\centering
	\includegraphics[scale=0.9]{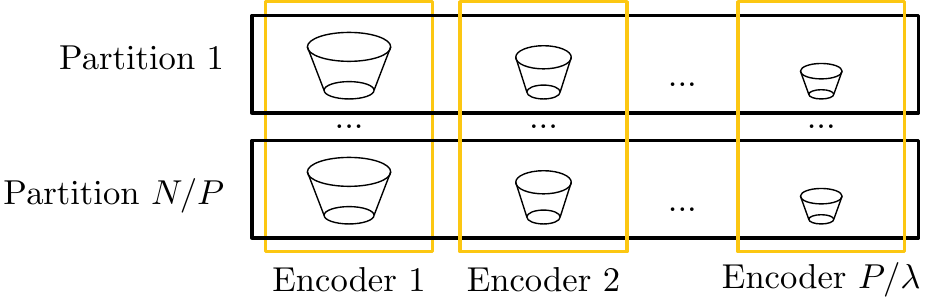}
	\caption{Interleaved coding. Encoder $i$ stores the seed of bucket $i$ from all partitions.}
	\label{fig:orthoillustration}
\end{figure}

We now introduce our new technique.
For each partition we hash to the same number of buckets $B = P / \lambda$, based on the average partition size $P$ and average bucket size $\lambda$.
The $i$-th bucket of a partition has the same expected size and the corresponding seed follows the same statistical distribution as the $i$-th bucket of any other partition. Although the idea of our optimized bucket assignment function is to give all buckets the same seed distribution, this is not achievable in practice. At least one reason for this are the discrete bucket sizes. This results in discrete jumps in the probability that a seed is found when processing one bucket after another.
In \emph{interleaved coding} we therefore employ $B$ encoders and the $i$-th encoder stores the seeds of the $i$-th buckets of all partitions.
Each encoder can thus use tuning parameters for its specific distribution (e.g., different Golomb-Rice parameters).
\cref{fig:orthoillustration} gives an illustration of interleaved coding.

It is also possible to mix different encoding techniques, similar to what PTHash does.
Larger buckets are accessed more often than smaller buckets because they contain more keys.
Hence, it is beneficial to use an encoding technique which is optimized for fast lookup time (e.g., Compact) for the larger buckets.
Conversely, the encoding for the seeds belonging to smaller buckets should be tuned for space efficiency (e.g., Golomb-Rice).
To conclude this section, we point out that each of the $B$ encoders introduces some metadata overhead (e.g., for storing its parameters).
Using rather small partition sizes $P$ decreases the number $B$ of encoders and therefore the constant overhead.

\subsection{GPU Parallelization}
\label{ss:gpuimpl}
We provide a GPU implementation for even faster construction.
On a GPU, each \emph{workgroup} executes independently, typically with its own subset of data. Within each workgroup, individual \emph{threads} execute concurrently. Threads within the same workgroup can share data and synchronize with each other through mechanisms like barriers and shared memory. Thread level parallelism allows for fine-grained parallel execution of instructions within a workgroup. However, only threads which follow the same control path and thus execute the same instruction at the same time can be executed in parallel. It is therefore crucial to avoid divergent control paths.
As a first step, our parallel implementation transfers the keys to the GPU, before partitioning them. Afterwards, we sort the buckets and start the search.

\myparagraph{Search}
Our fine-grained partitioning naturally maps to the architecture of a GPU.
Each partition is processed by one workgroup.
The small partition sizes enable performing the search entirely using fast but small shared memory.
Bucket by bucket, all threads of the workgroup cooperate to quickly find the smallest working seed.
A CPU implementation would usually do so using a nested loop.
The outer loop would iterate over seed values and the inner loop over keys.
If a collision occurs, it would immediately leave the inner loop and continue with the first key and the next seed.
However, on a GPU, leaving the inner loop would result in divergence because threads might encounter a collision after a different number of keys.
This is illustrated in the left of \cref{fig:brute_force}.

Instead, we use the technique described in \cite{sanders1994emulating} to emulate the behavior of the nested loop using a single loop to reduce divergence.
Hence, our GPU implementation parallelizes over partitions, seeds and keys.
The inner loop is emulated by incrementing the key index in each iteration.
If a collision occurs, we reset the key index and emulate the behavior of the outer loop by atomically incrementing a seed counter which is shared among all threads.
If the last key did not collide, we found a working seed.
Multiple threads can find a seed in the same iteration. 
To reduce entropy, we use an atomic minimum to identify the smallest of those seeds.
Note that this finds the smallest working seed overall because all threads finding a working seed must have processed each key.
Therefore, if there was a smaller working seed, it would have been found in an earlier loop iteration.
We give pseudocode in \cref{alg:simt} and illustrate the behavior on the right in \cref{fig:brute_force}.
During search, we only access shared memory and perform fast arithmetic operations.
We remark that specific optimizations for our additive displacement hash function are not shown here, i.e. after calculating the initial positions we can apply new displacements using only additions.

\begin{figure}
	\centering
	\begin{minipage}{.48\textwidth}
		\centering
		\begin{algorithm}[H]
			\caption{Seed search for one bucket.}
			\label{alg:simt}
			\begin{algorithmic}
				\State \textbf{shared} sFound $\gets \infty$
				\State \textbf{shared} sNext $\gets$ threadCount
				\State seed $\gets$ threadId, keyIndex $\gets$ 0
				\While{sFound $= \infty$}
				\State isCollision $\gets$ \Call{coll}{seed, keyIndex}
				\State keyIndex $\gets$ keyIndex + 1
				\If{isCollision}
				\State keyIndex $\gets$ 0
				\State seed $\gets$ \Call{atomAdd}{sNext, 1}
				\ElsIf{keyIndex = bucketSize}
				\State sFound $\gets$ \Call{atomMin}{sFound, seed}
				\EndIf
				\EndWhile
			\end{algorithmic}
		\end{algorithm}
	\end{minipage}
	\hfill
	\begin{minipage}{.50\textwidth}
		\begin{figure}[H]
			\centering
			\includegraphics[scale=0.9]{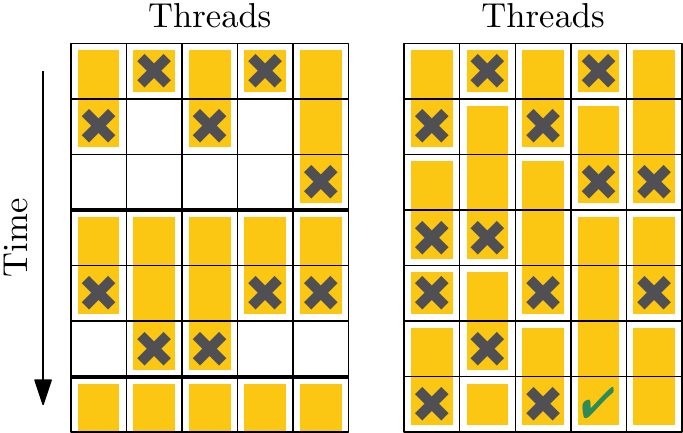}
			\caption{Each box represents one seed tested by one thread.
				Left: Synchronized nested loop.
				Right: \cref{alg:simt} where we continue with the first key and the next seed after a collision.
			}
			\label{fig:brute_force}
		\end{figure}
	\end{minipage}
\end{figure}

\section{Experiments}\label{s:experiments}
In \cref{s:internalComparison}, we gradually integrate our improvements to show the individual effects.
Then we compare our GPU and CPU implementation with the state of the art in \cref{s:competitorComparison}.
We use a machine with an Intel Core i7-11700 CPU with 64~GiB of DDR4 RAM running Ubuntu 22.04.1.
Each core has 48~KiB L1 and 512~KiB L2 data cache.
As a GPU, we use an Nvidia RTX 3090 and use Vulkan 1.3.236 to interface with it.
We compile using GCC 11.4.0 and compiler options \texttt{-march=native} and \texttt{-O3}.
All benchmarks use random strings of random length between 10 and 50 characters as input which is adopted from previous work \cite{lehmann2023sichash, bez2022high, lehmann2023shockhash, lehmann2023bipartite}.
Note that almost all competitors first generate master hash codes of the input.
This makes the construction largely independent of the input distribution.
We measure the query time by querying each key once in random order.
All experiments use 100~million keys, $\lambda=8$ and an average partition size of 2\,500 if not stated otherwise.
Our source code is public under the General Public License.
You can find it through the links on the title page of this paper.

\begin{figure}[t]
  \begin{minipage}[b]{0.49\textwidth}
    \centering
\begin{tikzpicture}

	\begin{axis}[
		xlabel={Space (Bits/Key)},
		ylabel={Construction (ns/Key)},
		xtick distance={0.1},
		width=5.4cm,	
		height=3.5cm,
		xmax=2.2,
		ymode=log,
		legend style={at={(1.01,0.8)},anchor=east},
		legend columns=1,
		]

  \addplot coordinates { (2.484872,211.884) (2.334752,257.13) (2.236447,307.87) (2.170549,405.465) (2.108244,544.89) (2.05025,797.29) (2.011478,1180.33) (1.98556,1659.56) (1.953413,2741.92) (1.925782,4318.94) (1.907809,6602.46) (1.896884,11147.6) (1.885368,16811.5) (1.87035,29453) (1.861008,49125.3) (1.856227,80052) };
		
		\addlegendentry{PTHash};

  \addplot coordinates { (2.869866,136.701) (2.56302,153.1444) (2.318139,177.3099) (2.165057,215.987) (2.061756,276.96) (2.011145,374.744) (1.961846,533.025) (1.925813,788.057) (1.906485,1189.39) (1.894352,1865.65) (1.886261,3034.89) (1.876763,4882.17) (1.871863,8327.76) (1.871922,13862.9) (1.880054,24911.3) (1.87833,44067.9) (1.892414,80243.9) };
  
  \addlegendentry{Partitioned};

  \addplot coordinates { (2.752205,136.394) (2.459467,152.498) (2.25753,177.3744) (2.115628,215.53) (2.02267,276.916) (1.960056,375.012) (1.916887,532.719) (1.884018,787.895) (1.862107,1189.45) (1.844773,1865.8) (1.831431,3035.09) (1.82363,4881.99) (1.819362,8327.59) (1.819077,13862.8) (1.820701,24911.2) (1.824565,44067.8) (1.832079,80243.7) };
		
		\addlegendentry{Part. + IC};

 \addplot coordinates { (3.057469,156.895) (2.709622,177.628) (2.478659,209.696) (2.290874,257.807) (2.165754,335.29) (2.078433,450.879) (2.024198,633.555) (1.960632,912.809) (1.923194,1339.75) (1.899759,2017.26) (1.861879,3080.25) (1.838928,4708.19) (1.797238,7425.84) (1.78236,11415.1) (1.763211,18424.6) (1.752998,28236.7) (1.746439,45015.6) (1.735935,72224.1) };
	
	\addlegendentry{Part. + OB};
		
  \addplot coordinates { (2.751731,156.078) (2.471857,177.659) (2.263498,209.3401) (2.109494,257.994) (2.00719,334.991) (1.937021,451.041) (1.886925,633.124) (1.851239,912.35) (1.821931,1339.82) (1.795384,2017.45) (1.775178,3079.91) (1.758723,4708.09) (1.742651,7425.9) (1.728431,11414.8) (1.716192,18424.3) (1.709285,28236.6) (1.700233,45015.5) (1.690934,72223.8) };
		
		\addlegendentry{PHOBIC};

        \addplot[mark=|,color=gray] coordinates { (1.856227,76137.9) (1.690934,76137.9) };

        \node[color=gray,scale=0.8,anchor=west] at (axis cs: 1.68,112000) {\totalBitImprovement{} bits/key};

	\end{axis}
  
\end{tikzpicture}
    \caption{Comparing original PTHash (EF, $\alpha=0.99$) to fine-grained partitioning with additive displacement hash (mono-R). We then add interleaved coding (IC-R) and optimized bucketing (OB) individually. Putting it all together we arrive at PHOBIC. All single-threaded. There are small differences in query time.
      }
    \label{fig:bucketSizePlot}
  \end{minipage}
  \hfill
  \begin{minipage}[b]{0.49\textwidth}
    \centering
\begin{tikzpicture}
	
\begin{axis}[
	xlabel={Space (Bits/Key)},
	ylabel={Query (ns/Query)},
	width=8cm,
	height=4cm,
	nodes near coords,
	every node near coord/.append style={font=\footnotesize},
	nodes near coords align={right},
	mark=*,
	xmin=1.7,
	xtick distance={0.2},
	width=5cm,	
	height=3.5cm,	
	]
    \addplot[point meta={symbolic={mono-R}},color=colorSicHash,] coordinates { (1.862005,48.7568) };
    \addplot[point meta={symbolic={IC-R}},color=colorSicHash,] coordinates { (1.775256,52.1112) };
    \addplot[point meta={symbolic={mono-C}},nodes near coords align={left},color=colorRecSplit,] coordinates { (3.012125,38.729) };
    \addplot[point meta={symbolic={IC-C}},nodes near coords align={left},color=colorRecSplit,] coordinates { (2.207524,37.5902) };
    \addplot[point meta={symbolic={mono-D}},color=colorSimdRecSplit,] coordinates { (2.162942,38.6138) };
    \addplot[point meta={symbolic={IC-D}},color=colorSimdRecSplit,] coordinates { (2.598593,55.9759) };
    \addplot[point meta={symbolic={mono-EF}},color=colorChd,] coordinates { (1.908968,50.3403) };
    \addplot[point meta={symbolic={IC-EF}},color=colorChd,] coordinates { (1.82427,53.7209) };

    \addplot[nodes near coords={},no marks] coordinates { (1.775257,52.8667) (1.774664,52.1663) (1.808672,48.4661) (1.838603,47.6356) (1.869595,46.9851) (1.899602,45.6212) (1.928496,44.225) (1.951897,43.163) (1.985273,42.259) (2.008074,41.6127) (2.034071,40.6119) (2.056511,40.2319) (2.084452,39.2592) (2.118569,38.91) (2.138764,38.0823) (2.208724,37.1224) };
 
\end{axis}

\end{tikzpicture}
    \caption{
      Query time and space consumption of \textbf{E}lias-\textbf{F}ano, \textbf{D}ictionary, \textbf{C}ompact, and \textbf{R}ice encoders.
      Points prefixed ``mono'' place all seeds into a single encoder and those prefixed ``IC'' use interleaved coding.
      The curve shows different mixtures (see \cref{s:partitioning}) of Compact and Rice encoders in the interleaved coding.
    }
    \label{fig:encodersPlot}
  \end{minipage}
\end{figure}

\subsection{From PTHash to PHOBIC}\label{s:internalComparison}
We now gradually introduce our improvements to PTHash.
As basic improvements, we replace the initial hash function with xxHash \cite{xxHash} and implement faster parallel partitioning.
In all experiments, PTHash contains these changes as well to focus on our algorithmic improvements.
\Cref{fig:bucketSizePlot} gives measurements for the different improvement steps and shows them in different combinations.

\myparagraph{Interleaved Coding}
Partitioning of PTHash is already used in PTHash-HEM for parallelization \cite{pibiri2024parallel}.
PTHash-HEM uses partitions of size $\approx$$10^6$.
Smaller partitions only lead to minor improvements, as we show in \cref{app:additionalExperiments}.
However, smaller partition sizes shine when used with our newly introduced interleaved coding (\cref{ss:multi}).
Interleaved coding uses $P/\lambda$ encoders, where $P$ is the expected partition size.
Reducing the partition size can significantly reduce constant space overheads, as we also show in \cref{app:additionalExperiments}. %
\Cref{fig:bucketSizePlot} compares the technique of placing all seeds into a single Rice encoder (orange curve) to placing the keys using interleaved Rice coding (black). Interleaved coding consistently improves space efficiency by \interleavedImprovement{} bits/key.
\cref{fig:encodersPlot} compares different combinations of encoders, which were partially used in the original implementation.
Interleaved coding allows for mixing of different encoding techniques.
If we use this to encode the keys using different numbers of Compact and Rice encoders, we can cover the entire query time to space trade-off in our configuration.
Note that the construction performance is similar for all approaches because the encoding is fast compared to the remaining construction.

\myparagraph{Optimized Bucket Assignment}
In \cref{fig:bucketSizePlot} we also show how using the optimized bucket function affects construction speed and space.
Our optimization of the bucket assignment function is particularly helpful to construct very space efficient configurations.
When compared for same construction time, the optimized function is up to \bucketerImprovement{} bits/key more space efficient relative to the original PTHash bucket assignment.

\myparagraph{Further Remarks}
With interleaved coding, another improvement originates from the secondary bucket ordering.
Primarily the buckets are sorted in non-increasing size.
Secondarily sorting in increasing expected size reduces the space consumption by \secsortdiff{} bits/key compared to decreasing expected size.
The reason for this behavior remains an open problem.

Original PTHash observed significant performance improvements by first calculating a non-minimal PHF and repairing the ``gaps'' afterwards.
Refer to  \cite{pibiri2021pthash} for details.
This trick does not result in an improvement when using PHOBIC.

\begin{figure}[t]
  \begin{subfigure}[b]{0.49\textwidth}
    \centering
    \centering
    \begin{tikzpicture}
        \begin{axis}[
            xlabel={Input size $n$},
            ylabel={Construction (ns/Key)},
            legend style={at={(0.97,0.8)},anchor=east},
            width=5cm,
            height=3.5cm,
            xmode=log,
            ymode=log,
          ]
          \addplot coordinates { (20480,1464.84) (81920,805.664) (327680,558.472) (1310720,546.265) (5242880,518.99) (20971520,573.158) (83886080,684.237) };
          \addlegendentry{PHOBIC};
          \addplot coordinates { (20480,4736.33) (81920,2819.82) (327680,671.387) (1310720,216.675) (5242880,70.1904) (20971520,33.6647) (83886080,22.0656) };
          \addlegendentry{PHOBIC-GPU};
        \end{axis}
    \end{tikzpicture}
    \caption{Construction performance on CPU and GPU for $\lambda=8.0$ and 8 threads, comparing different $n$. The GPU requires large $n$ to fully utilize its computing resources.}
    \label{fig:cpuVsGpu}
  \end{subfigure}
  \hfill
  \begin{subfigure}[b]{0.49\textwidth}
    \centering
      \centering
\begin{tikzpicture}
  \begin{axis}[
      yticklabel style={
        /pgf/number format/fixed,
        /pgf/number format/precision=5
      },
      ybar stacked,
      bar width=0.6,
      scaled y ticks=false,
      ylabel={Construction (ns/Key)},
      xlabel={Average bucket size $\lambda$},
      xtick distance={1},
      width=5.3cm,
      height=3.5cm,
      xmin=3.5,
      xmax=10.5,
      legend style={at={(0.03,0.5)},anchor=south west},
      every axis plot post/.append style={draw=none},
    ]
    \addplot coordinates { (4,21.58321) (5,21.40904) (6,21.4251) (7,21.88522) (8,21.40786) (9,21.39463) (10,21.4107) };
    \addlegendentry{Initial hashing};

    \addplot coordinates { (4,84.4124) (5,85.08671) (6,85.63466) (7,96.41894) (8,95.68914) (9,85.8862) (10,86.52367) };
    \addlegendentry{Input transfer};

    \addplot coordinates { (4,7.13292) (5,6.84422) (6,6.66811) (7,6.44224) (8,6.34409) (9,6.30569) (10,6.33174) };
    \addlegendentry{GPU preprocessing};

    \addplot coordinates { (4,7.55374) (5,8.55419) (6,14.57101) (7,26.93745) (8,57.03956) (9,137.239) (10,311.364) };
    \addlegendentry{GPU search};

    \addplot coordinates { (4,5.00403) (5,4.42276) (6,4.01965) (7,2.54838) (8,2.74455) (9,1.77296) (10,1.71137) };
    \addlegendentry{Result transfer};

    \addplot coordinates { (4,2.34075) (5,1.8824) (6,1.53264) (7,1.41731) (8,1.24917) (9,1.18146) (10,0.96184) };
    \addlegendentry{Encoding};
  \end{axis}
\end{tikzpicture}
  \caption{Different construction steps by values of average bucket size $\lambda$.  }
  \label{fig:stepsByAPlot}
  \end{subfigure}
  \caption{GPU performance for different input sizes (left) and $\lambda$ (right).}
\end{figure}
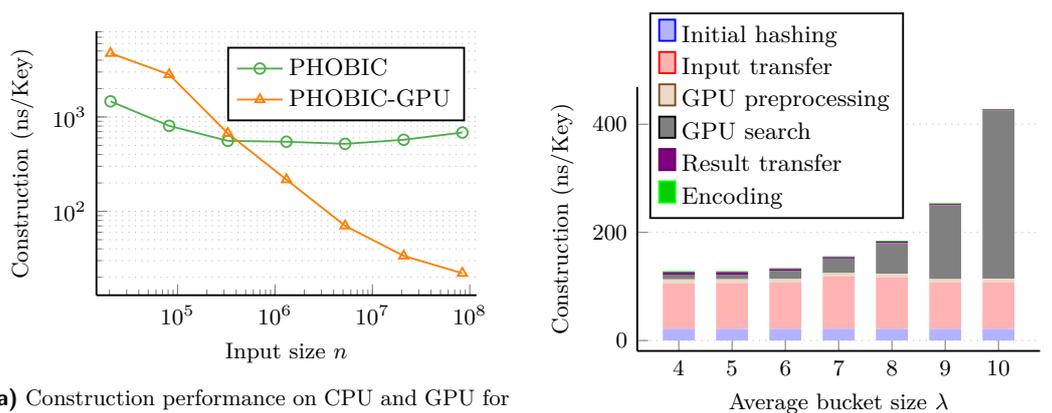

\myparagraph{GPU Parallelization}
Our final contribution is a GPU implementation to speed up construction.
Our implementation parallelizes over partitions, seeds and keys.
The GPU implementation is mainly useful for large average bucket sizes $\lambda$.
This is well illustrated in \cref{fig:stepsByAPlot}: For smaller values of $\lambda$, the construction time is dominated by the time to transfer the input data to the GPU.
We also compare CPU and GPU construction speed for different input sizes in \cref{fig:cpuVsGpu}.
The GPU requires a large number of input keys and thus a large number of partitions before its computing resources are fully utilized.
Overall, the GPU outperforms the CPU for a sufficiently large $\lambda$ and $n$. 
We use the GPU only to accelerate construction, while measuring all queries on the CPU.%

\subsection{Comparison to Other Methods}\label{s:competitorComparison}
We compare our new approach to several other methods from the literature.
First and foremost, we compare against the original PTHash \cite{pibiri2021pthash,pibiri2024parallel} implementation.
The comparison also includes the fingerprinting approaches BBHash \cite{limasset2017fast}, FMPH \cite{beling2023fingerprinting}, and FMPHGO \cite{beling2023fingerprinting}.
We also compare against RecSplit \cite{esposito2020recsplit} and approaches based on it, such as SIMDRecSplit \cite{bez2022high}, ShockHash-RS \cite{lehmann2023shockhash}, and bipartite ShockHash-RS \cite{lehmann2023bipartite}.
Finally, we also compare against CHD \cite{belazzougui2009hash} and SicHash \cite{lehmann2023sichash}.

\begin{figure}[t]
  \centering
  \begin{tikzpicture}
    \ref*{legendEvalParetoQuery}
  \end{tikzpicture}
  \vspace{3mm}
        \begin{tikzpicture}
        \begin{axis}[
            plotPareto,
            ylabel={Throughput (MKeys/s)},
            legend to name=legendEvalParetoQuery,
            legend style={nodes={scale=0.9, transform shape}},
            legend columns=4,
          ]
          \addplot[mark=diamond,color=colorBbhash,solid] coordinates { (3.0864,4.1587) (3.08677,4.67181) (3.08767,4.74451) (3.11123,4.7533) (3.15659,5.37143) (3.15797,5.52944) (3.25169,5.63634) (3.25238,5.74581) (3.36289,6.09273) (3.36289,6.51848) (3.4863,6.5304) (3.4863,6.62252) (3.48663,7.01607) (3.61817,7.10227) (3.61817,7.2469) (3.61817,7.57978) };
          \addlegendentry{BBHash \cite{limasset2017fast}};
          \addplot[mark=triangle,color=colorBipartiteShockHash,solid] coordinates { (1.50308,0.0766555) (1.50968,0.09322) (1.51513,0.115142) (1.52467,0.173996) (1.53164,0.177995) (1.53603,0.179202) (1.54023,0.204943) (1.55267,0.211247) (1.55295,0.213541) (1.56942,0.256082) (1.57851,0.263762) (1.59513,0.278768) (1.61785,0.410152) (1.62489,1.41323) (1.62941,1.78164) (1.65197,1.96059) };
          \addlegendentry{Bip. ShockHash-RS \cite{lehmann2023bipartite}};
          \addplot[mark=x,color=colorChd,solid] coordinates { (2.00266,0.0619156) (2.04135,0.117405) (2.04312,0.233917) (2.08329,0.375311) (2.10518,0.734792) (2.1477,1.00521) (2.19464,1.76032) (2.22621,2.12377) (2.30042,3.15836) (2.33922,3.44364) (2.45054,3.9847) (2.58529,4.34216) (2.63336,4.50938) (2.762,4.79754) (3.67454,4.87543) (3.82265,5.08699) };
          \addlegendentry{CHD \cite{belazzougui2009hash}};
          \addplot[mark=asterisk,color=colorRustFmph,solid] coordinates { (2.80301,9.70403) (2.82608,10.9218) (2.88811,11.6266) (2.9738,12.3229) (3.0755,12.9266) (3.18827,13.5428) (3.30824,14.1044) };
          \addlegendentry{FMPH \cite{beling2023fingerprinting}};
          \addplot[mark=diamond,color=colorRustFmphGo,solid] coordinates { (2.21264,3.43914) (2.24107,3.56455) (2.30511,3.63544) (2.39382,3.69891) (2.50102,3.72953) };
          \addlegendentry{FMPHGO \cite{beling2023fingerprinting}};
          \addplot[mark=pentagon,color=colorPthash,solid,mark repeat*=4] coordinates { (1.89746,0.0397137) (1.9011,0.0586782) (1.9192,0.0915153) (1.92152,0.120967) (1.93744,0.180195) (1.95091,0.239669) (1.96597,0.365619) (1.99081,0.492397) (2.00574,0.692329) (2.02118,0.694444) (2.04394,0.96652) (2.06024,1.27259) (2.10402,1.63929) (2.12874,2.07641) (2.13494,2.2307) (2.18041,2.58759) (2.18517,2.63852) (2.21169,3.15348) (2.23859,3.29012) (2.27566,3.73874) (2.30018,3.8373) (2.32786,4.35123) (2.39055,4.47908) (2.41032,4.65354) (2.47379,5.01605) (2.49976,5.40862) (2.56621,5.4277) };
          \addlegendentry{PTHash \cite{pibiri2021pthash}};
          \addplot[mark=square,color=colorRecSplit,solid] coordinates { (1.61269,0.0389603) (1.62521,0.0704746) (1.64126,0.0737047) (1.7096,1.04029) (1.72459,1.21714) (1.77618,1.45575) (1.79101,1.85694) (1.8625,2.3092) (1.87797,3.11265) (1.94477,3.70178) (1.9495,4.11421) (2.0147,5.00952) (2.09208,5.42446) (2.21561,5.98444) (2.29325,6.65557) (2.92784,6.77736) (3.24602,7.227) };
          \addlegendentry{RecSplit \cite{esposito2020recsplit}};
          \addplot[mark=+,color=colorSimdRecSplit,solid] coordinates { (1.61401,0.589803) (1.62674,0.694811) (1.64301,0.754461) (1.66403,4.23801) (1.67891,4.71143) (1.69505,5.3087) (1.70996,6.0698) (1.72535,6.83527) (1.7412,8.00512) (1.78983,8.1493) (1.80523,9.81258) (1.87807,11.8455) (1.94642,12.8304) (2.01525,13.7627) (2.07131,14.041) (2.09478,14.8832) (2.14974,15.1035) (2.45761,16.5344) (2.57332,17.9986) (2.8426,18.8041) (2.96025,19.0621) };
          \addlegendentry{SIMDRecSplit \cite{bez2022high}};
          \addplot[mark=shockhash,color=colorShockHash,solid,mark repeat*=4] coordinates { (1.57574,1.29808) (1.58941,1.38263) (1.60608,1.50705) (1.64858,1.75052) (1.66222,1.83284) (1.67118,3.81098) (1.68473,4.14869) (1.74476,4.41073) (1.79193,4.90918) (1.80633,5.47106) (1.87429,5.93754) (1.94997,6.14553) (2.03217,6.28141) (2.12034,6.68494) };
          \addlegendentry{ShockHash-RS \cite{lehmann2023shockhash}};
          \addplot[mark=o,color=colorSicHash,solid,mark repeat*=4] coordinates { (2.0129,4.30775) (2.04322,5.48637) (2.05228,5.5457) (2.0735,5.79039) (2.08252,6.10501) (2.10937,6.13911) (2.11293,6.28575) (2.113,6.38284) (2.14007,6.71096) (2.14786,6.71366) (2.1705,6.7627) (2.17066,6.80504) (2.17075,6.97593) (2.20077,7.17309) (2.20898,7.30034) (2.20907,7.42115) (2.23557,7.88768) (2.26578,7.90701) (2.26621,8.26173) (2.29696,8.42105) (2.32715,8.55066) (2.35758,8.74202) (2.38843,8.83548) (2.48036,9.03832) (2.5724,9.06701) (2.60313,9.10664) (2.63366,9.13409) (2.69495,9.21319) (2.69506,9.22679) (2.75631,9.243) (2.78678,9.25326) (2.87879,9.28678) (2.97114,9.28936) (3.00103,9.31272) (3.09348,9.31532) (3.46074,9.3214) };
          \addlegendentry{SicHash \cite{lehmann2023sichash}};
          \addplot[mark=flippedTriangle,color=colorDensePtHash,densely dotted] coordinates { (1.75671,0.175464) (1.77226,0.273979) (1.79268,0.435411) (1.81781,0.679103) (1.84684,1.02221) (1.88378,1.50632) (1.93259,2.18379) (2.00462,3.00499) (2.10746,3.94976) (2.2432,3.95413) (2.26187,4.92781) (2.43504,4.95589) (2.46952,5.85789) (2.68795,5.86476) (2.74919,6.70961) (3.01775,6.74491) (3.11932,7.29076) (3.44142,7.38716) (3.61344,7.45323) (3.95896,7.47943) (4.30327,7.76458) (4.75314,7.83699) (5.3373,7.88706) (6.01293,7.96432) };
          \addlegendentry{PHOBIC};
        \end{axis}
    \end{tikzpicture}
        \begin{tikzpicture}
        \begin{axis}[
            plotPareto,
            ylabel={Throughput (MQueries/s)},
          ]
          \addplot[mark=diamond,color=colorBbhash,solid] coordinates { (3.0864,9.89609) (3.08677,10.3082) (3.15659,10.4943) (3.15797,10.5296) (3.25118,10.7135) (3.25169,10.9146) (3.36289,11.04) (3.36289,11.0939) (3.4863,11.2511) (3.4863,11.2587) (3.48663,11.2625) (3.61817,11.3417) (3.61817,11.4025) (3.61817,11.4051) };
          \addlegendentry{BBHash \cite{limasset2017fast}};
          \addplot[mark=triangle,color=colorBipartiteShockHash,solid] coordinates { (1.50308,6.59935) (1.50968,6.59979) (1.52045,6.73627) (1.59888,7.53864) };
          \addlegendentry{Bip. ShockHash-RS \cite{lehmann2023bipartite}};
          \addplot[mark=x,color=colorChd,solid] coordinates { (2.00266,4.69241) };
          \addlegendentry{CHD \cite{belazzougui2009hash}};
          \addplot[mark=asterisk,color=colorRustFmph,solid] coordinates { (2.80301,7.39426) (2.82608,7.90326) (2.88811,8.26583) (2.9738,8.57265) (3.0755,9.03342) (3.18827,9.28505) (3.30824,9.71251) };
          \addlegendentry{FMPH \cite{beling2023fingerprinting}};
          \addplot[mark=diamond,color=colorRustFmphGo,solid] coordinates { (2.21264,8.29807) (2.24107,9.27128) (2.30511,9.68429) (2.39382,10.3853) (2.50102,10.9697) };
          \addlegendentry{FMPHGO \cite{beling2023fingerprinting}};
          \addplot[mark=flippedTriangle,color=colorDensePtHash,densely dotted] coordinates { (1.75671,19.6309) (1.85289,21.8055) (1.92802,23.4522) (2.00375,24.9252) (2.02659,24.9252) (2.08274,26.1165) (2.17307,26.9687) (2.20272,27.0051) };
          \addlegendentry{PHOBIC};
          \addplot[mark=pentagon,color=colorPthash,solid] coordinates { (1.89746,19.8255) (1.99992,22.2519) (2.15187,26.1097) (2.19463,26.4271) (2.25015,26.617) (2.33082,26.6383) };
          \addlegendentry{PTHash \cite{pibiri2021pthash}};
          \addplot[mark=square,color=colorRecSplit,solid] coordinates { (1.61269,7.19269) (1.62691,8.35492) (1.7009,8.71384) (1.75626,9.89609) (2.14104,10.408) };
          \addlegendentry{RecSplit \cite{esposito2020recsplit}};
          \addplot[mark=+,color=colorSimdRecSplit,solid] coordinates { (1.61401,6.69478) (1.62881,7.72499) (1.70183,8.09651) (1.71662,8.10241) (1.75873,8.37872) (2.1418,8.66927) };
          \addlegendentry{SIMDRecSplit \cite{bez2022high}};
          \addplot[mark=shockhash,color=colorShockHash,solid] coordinates { (1.57574,7.0932) (1.58941,7.65697) (1.64858,8.40336) };
          \addlegendentry{ShockHash-RS \cite{lehmann2023shockhash}};
          \addplot[mark=o,color=colorSicHash,solid] coordinates { (2.0129,16.3239) (2.04324,16.3425) };
          \addlegendentry{SicHash \cite{lehmann2023sichash}};

          \legend{};
        \end{axis}
    \end{tikzpicture}
    \caption{Construction throughput (left) and query throughput (right) for various methods on 100 million keys and using a single processing thread.
        }
  \label{fig:pareto}
\end{figure}
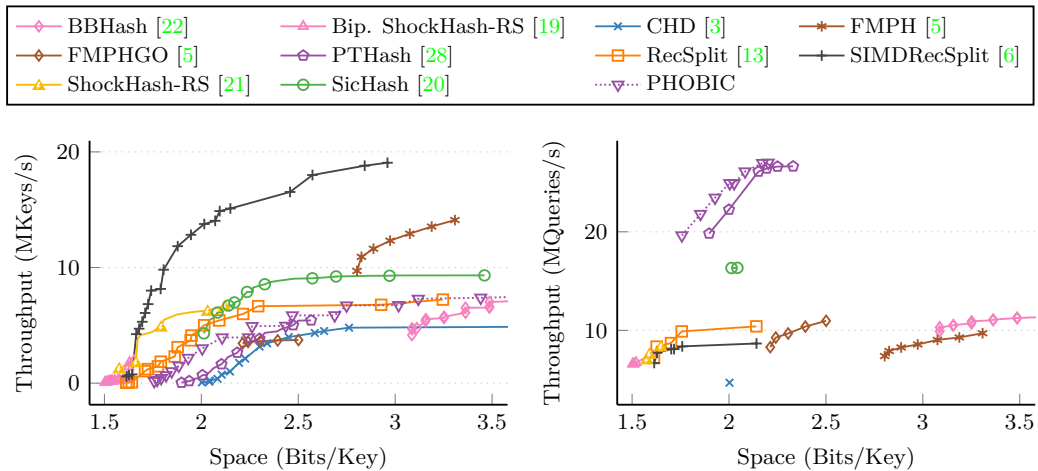

\begin{table}[t]
    \caption{Performance of various methods on $100$ million keys.}
	\label{tab:overviewTable}
	
\addtolength\tabcolsep{-1.9pt}
\small
\centering
	\begin{tabular}[t]{l rrrrr}
		\toprule
        Method & Space & Query & \multicolumn{3}{c}{Construction (ns/key)} \\
               \cmidrule(lr){4-6}
    & (bits/key) & (ns/query) & 1 Thread & 8 Threads & Speedup \\ \midrule

                   Bip. SH-RS, $n$=$64$, $b$=$2000$ & 1.52 & 160 &  5\,756 &  1\,218 & 4.7 \\ \midrule
                                 CHD, $\lambda$=$3$ & 2.27 & 222 &     352 &       - &   - \\
                                 CHD, $\lambda$=$5$ & 2.07 & 207 &  2\,206 &       - &   - \\ \midrule
                               FMPH, $\gamma$=$2.0$ & 3.40 & 100 &      69 &      17 & 4.0 \\
                               FMPH, $\gamma$=$1.0$ & 2.80 & 134 &      99 &      24 & 4.0 \\ \midrule
                   SIMDRecSplit, $n$=$8$, $b$=$100$ & 1.81 & 124 &     109 &      20 & 5.2 \\
                 SIMDRecSplit, $n$=$14$, $b$=$2000$ & 1.58 & 143 & 11\,062 &  2\,360 & 4.7 \\ \midrule
    SicHash, $\alpha$=$0.9$, $p_1$=$21$, $p_2$=$78$ & 2.41 &  72 &     129 &      25 & 5.0 \\
   SicHash, $\alpha$=$0.97$, $p_1$=$45$, $p_2$=$31$ & 2.08 &  64 &     179 &      32 & 5.6 \\ \midrule
      PTHash, $\lambda$=$4.0$, $\alpha$=$0.99$, C-C & 3.19 &  35 &     314 &     143 & 2.2 \\
       PTHash, $\lambda$=$5.0$, $\alpha$=$0.99$, EF & 2.11 &  54 &     525 &     252 & 2.1 \\
      PTHash, $\lambda$=$10.5$, $\alpha$=$0.99$, EF & 1.86 &  49 & 82\,721 & 35\,048 & 2.4 \\ \midrule
  PTHash-HEM, $\lambda$=$4.0$, $\alpha$=$0.99$, C-C & 3.19 &  39 &     299 &      45 & 6.6 \\
   PTHash-HEM, $\lambda$=$5.0$, $\alpha$=$0.99$, EF & 2.11 &  58 &     582 &      86 & 6.7 \\ \midrule
      PHOBIC, $\lambda$=$3.9$, $\alpha$=$1.0$, IC-C & 3.18 &  40 &     197 &      32 & 6.2 \\
      PHOBIC, $\lambda$=$4.5$, $\alpha$=$1.0$, IC-R & 2.11 &  57 &     254 &      40 & 6.2 \\
      PHOBIC, $\lambda$=$6.5$, $\alpha$=$1.0$, IC-R & 1.85 &  52 &     992 &     176 & 5.6 \\
      PHOBIC, $\lambda$=$9.0$, $\alpha$=$1.0$, IC-R & 1.74 &  50 &  9\,171 &  1\,781 & 5.1 \\
        \midrule
    &            &            &  \multicolumn{3}{r}{GPU + 8 CPU Threads} \\ \midrule

      PHOBIC-GPU, $\lambda$=$9.0$, IC-C & 2.17 &  37 &   &     28 &   \\
      PHOBIC-GPU, $\lambda$=$9.0$, IC-R & 1.76 &  52 &   &     27 &   \\
     PHOBIC-GPU, $\lambda$=$13.0$, IC-R & 1.68 &  50 &   &    560 &   \\
     PHOBIC-GPU, $\lambda$=$14.0$, IC-R & 1.67 &  49 &   & 1\,470 &   \\ \midrule
    RecSplit-GPU, $\ell$=$8$, $b$=$100$ & 1.81 & 126 &   &     24 &   \\
  RecSplit-GPU, $\ell$=$14$, $b$=$2000$ & 1.58 & 147 &   &     80 &   \\
  RecSplit-GPU, $\ell$=$18$, $b$=$2000$ & 1.55 & 135 &   & 1\,732 &   \\
		\bottomrule

	\end{tabular}

\end{table}

Each method has a wide range of configurations that provide a trade-off between space, construction time, and query time.
To give an initial overview, we show a Pareto front for each method in \cref{fig:pareto}.
A configuration is on the Pareto front if no other configuration of the same method is simultaneously faster and more space efficient.
For this plot we use a single thread (a multithreaded measurement would mainly show what method implemented the partitioning step most efficiently instead of focusing on the algorithmic aspects).
The figure shows that PTHash and PHOBIC are clear winners in terms of query performance.
Even though BBHash \cite{limasset2017fast} and FMPH \cite{beling2023fingerprinting} are also focused on fast queries, they are significantly slower than PTHash and PHOBIC.
\cref{fig:bucketSizePlot} shows that PHOBIC consistently saves about \totalBitImprovement{} bits/key for a large range of different construction times while maintaining the good query speed.
We remark that this is a significant reduction in space considering the proximity to the space lower bound.
The competitors achieving even lower space consumption (i.e. RecSplit \cite{esposito2020recsplit}, SIMDRecSplit \cite{bez2022high}, ShockHash-RS \cite{lehmann2023shockhash}, and bipartite ShockHash-RS \cite{lehmann2023bipartite}) all have a rather slow query performance.
However, somewhat surprisingly, SIMDRecSplit has the fastest construction even for less space efficient configurations.
SicHash \cite{lehmann2023sichash} takes a middle ground with faster construction than PHOBIC and query performance between PHOBIC and the RecSplit variants.

\Cref{tab:overviewTable} gives a selection of configuration parameters for direct comparison, mostly taken from the corresponding papers.
\Cref{s:competitorComparisonAppendix} gives the same table measured on a large machine with 64 threads.
Comparing configurations with the same space consumption,
PHOBIC is significantly faster to construct than the original PTHash implementation.
Comparing configurations that both need 1.86 bits/key and have a similar query time, PHOBIC can be constructed \speedupPtHashVsPhobic{} times faster than PTHash.

On the GPU, we compare against the only available GPU construction, RecSplit-GPU \cite{bez2022high}. \cref{fig:paretoGpu} in \cref{s:competitorComparisonAppendix} illustrates the comparison.
Basically, we achieve the same peak construction throughput as RecSplit-GPU for the less space efficient configurations.
The queries of both approaches are done on the CPU, so the fact that PHOBIC offers much faster queries applies here as well (see \cref{fig:pareto}).
Comparing the multithreaded CPU implementation and the GPU implementation of PHOBIC, we get a construction speedup of \speedupPhobicVsGpu{} for $\lambda=9$ with interleaved Rice coding.
Note that with $\lambda=9$, the GPU still spends a lot of its construction time on transferring the input data (see \cref{fig:stepsByAPlot}), but much larger values of $\lambda$ are not feasible on the CPU.

Directly comparing the performance of CPU and GPU is always difficult because of the different hardware architectures.
Given that the power consumption is a major cost factor in production environments, we measure it using a Voltcraft Energy Check 3000 wattmeter.
For CPU-only measurements, we dismount the GPU.
The machine requires about $405~W$ during the search step of the GPU version and $195~W$ for the multithreaded CPU implementation.
Thus, the above speedup of \speedupPhobicVsGpu{} translates to roughly 30 times lower energy consumption for constructing an MPHF on the GPU.
Single-threaded CPU construction requires $74~W$ which is less energy efficient compared to multithreading.

\section{Conclusion and Future Work}\label{s:conclusion}
PHOBIC introduces optimized bucket sizes and interleaved encoding to PTHash.
Our improvements result in \totalBitImprovement{} bits/key better space efficiency when compared to PTHash for similar construction and query speed.
When compared for the same space consumption, PHOBIC can be constructed up to \speedupPtHashVsPhobic{} times faster than PTHash, while still having the same query time.
Finally, our GPU implementation improves the construction by a factor of up to \speedupPhobicVsGpu{} compared to the multithreaded CPU implementation.

Future work may explore combinations of the most time efficient approaches to perfect hashing and the most space efficient approaches. Concretely, we are hopeful that a hybrid between PHOBIC and ShockHash \cite{lehmann2023shockhash} puts further trade-offs between space and time into reach.

\FloatBarrier
\bibliography{paper}
\clearpage
\appendix

\section{Full proofs}

We begin with a more formal statement of what we need from \cref{int:derivative-and-bucket-sizes}.
Let $\mu_i := n·w_i$ be the expected size of bucket $i$ (i.e.\ the expected number of keys assigned to bucket $i$) and again $\lambda := n/B$.

\begin{observation}
	\label{obs:derivative-and-bucket-sizes}
	Let $γ:[0,1]→[0,1]$ be a continuous bucket assignment function that is smooth on $(0,1)$ with non-decreasing derivative, $x₀ ∈ (0,1)$ a hash and $i = ⌈B·γ(x₀)⌉$ the bucket assigned to $x$. Let $\mu = \lambda/γ'(x₀)$. Then the expected bucket sizes satisfy
	\begin{itemize}
		• $\mu_j ≥ \mu$ for all $j < i$,
		• $\mu_j ≤ \mu$ for all $j > i$,
	\end{itemize}
\end{observation}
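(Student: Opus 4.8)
The plan is to reduce everything to the mean value theorem and the inverse function rule; no probabilistic argument is needed since the $\mu_j$ are deterministic quantities. First I would write the expected bucket sizes explicitly in terms of $\gamma^{-1}$: since $\gamma$ interpolates the points $(\sigma_i, i/B)$ and is a continuous strictly increasing bijection of $[0,1]$ (with $\gamma(0)=0$, $\gamma(1)=1$), we have $\sigma_i = \gamma^{-1}(i/B)$, hence
\[ \mu_j \;=\; n\,w_j \;=\; n(\sigma_j - \sigma_{j-1}) \;=\; n\bigl(\gamma^{-1}(\tfrac{j}{B}) - \gamma^{-1}(\tfrac{j-1}{B})\bigr). \]
Next I would observe that $\gamma' > 0$ on $(0,1)$: if $\gamma'(x)=0$ for some $x\in(0,1)$, then monotonicity of $\gamma'$ forces $\gamma$ to be constant on $(0,x]$, contradicting strict monotonicity. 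Hence $\gamma^{-1}$ is continuous on $[0,1]$ and smooth on $(0,1)$.

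I would then apply the mean value theorem to $\gamma^{-1}$ on each interval $[\tfrac{j-1}{B},\tfrac{j}{B}]$ to obtain a point $\xi_j \in (\tfrac{j-1}{B},\tfrac{j}{B})$ with, using the inverse function rule,
\[ \mu_j \;=\; \tfrac{n}{B}\,(\gamma^{-1})'(\xi_j) \;=\; \lambda\,(\gamma^{-1})'(\xi_j) \;=\; \frac{\lambda}{\gamma'\!\bigl(\gamma^{-1}(\xi_j)\bigr)}. \]
To compare with $\mu = \lambda/\gamma'(x_0)$ I would locate $x_0$ among the $\sigma$'s: from $i = \lceil B\gamma(x_0)\rceil$ we get $\tfrac{i-1}{B} < \gamma(x_0) \le \tfrac{i}{B}$, and applying the increasing map $\gamma^{-1}$ gives $\sigma_{i-1} < x_0 \le \sigma_i$. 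For $j \le i-1$ we then have $\xi_j < \tfrac{j}{B} \le \tfrac{i-1}{B}$, so $\gamma^{-1}(\xi_j) < \sigma_{i-1} < x_0$; since $\gamma'$ is non-decreasing, $\gamma'(\gamma^{-1}(\xi_j)) \le \gamma'(x_0)$, whence $\mu_j \ge \mu$. Symmetrically, for $j \ge i+1$ we have $\xi_j > \tfrac{j-1}{B} \ge \tfrac{i}{B}$, so $\gamma^{-1}(\xi_j) > \sigma_i \ge x_0$, giving $\gamma'(\gamma^{-1}(\xi_j)) \ge \gamma'(x_0)$ and hence $\mu_j \le \mu$. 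This establishes both bullets.

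I do not expect a serious obstacle here; the only point requiring genuine care — and the place where the caveat of \cref{int:derivative-and-bucket-sizes} about ignoring non-smoothness at $0$ and $1$ is actually cashed in — is verifying that every $\xi_j$, and in particular the extreme cases $\xi_1 \in (0,\tfrac1B)$ and $\xi_B \in (\tfrac{B-1}{B},1)$, together with $\gamma^{-1}(\xi_j)$, lies strictly inside $(0,1)$, so that smoothness of $\gamma$ and positivity of $\gamma'$ may be invoked there. This is immediate from $0 < \tfrac{j-1}{B} < \xi_j < \tfrac{j}{B} < 1$ for $1 \le j \le B$. Everything else is a routine application of the mean value theorem, the inverse function rule, and monotonicity of $\gamma'$ and $\gamma^{-1}$, and one can comfortably work with the non-strict inequalities throughout since the statement only asserts $\mu_j \ge \mu$ and $\mu_j \le \mu$.
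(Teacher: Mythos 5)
Your proof is correct and takes essentially the same route as the paper: both write $\mu_j = n(\sigma_j - \sigma_{j-1})$ with $\sigma_j = \gamma^{-1}(j/B)$, and both reduce to comparing $\gamma'$ on the hash range of bucket $j$ to $\gamma'(x_0)$ via monotonicity of $\gamma'$. The only cosmetic difference is the form of the mean-value machinery — the paper bounds $\tfrac{1}{B}=\int_{\sigma_{j-1}}^{\sigma_j}\gamma'(x)\,dx$ directly, while you apply the Lagrange MVT to $\gamma^{-1}$ and pass back through the inverse function rule.
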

\begin{proof}
	Consider $j < i$ and the range $(σ_{j-1},σ_j]$ assigned to bucket $j$. Since $x₀$ is assigned to bucket $i$ we have $σ_j < x₀$. By monotonicity $γ'(x) < γ'(x₀)$ for all $x ∈ [σ_{j-1},σ_j]$. Since $γ(σ_j)-γ(σ_{j-1}) = 1/B$ by construction we have
	\[ \frac 1B
	= \int_{σ_{j-1}}^{σ_j} γ'(x)dx
	≤ \int_{σ_{j-1}}^{σ_j} γ'(x₀)dx
	= (σ_j-σ_{j-1})·γ'(x₀)
	= w_j · \frac{\lambda}{\mu}
	= \frac{\mu_j}{n} \frac{\lambda}{\mu}
	= \frac{\mu_j}{\mu} \frac{1}{B}.
	\]
	Rearranging gives $\mu_j ≥ \mu$. The second claim is obtained analogously.
\end{proof}

\subsection{Proof of Theorem \ref{thm:not-beta-is-bad}}
\label{sec:not-beta-is-bad}

We assume the context of \cref{thm:not-beta-is-bad}. 
In particular, $γ$ is a monotonic function, smooth on $(0,1)$ and satisfies w.l.o.g.\ $γ(0) = β_*(0) = 0$ and $γ(1) = β_*(1) = 1$. We have also assumed that $γ(x) > 0$ for $x > 0$, but never defended this assumption. Let us deal with this rather silly case now, i.e.\ assume $γ(x) = 0$ for some $x > 0$. Then by monotonicity $γ(x') = 0$ for all $x' ∈ [0,x]$. The first bucket then receives at least $nx$ keys in expectation and $Ω(nx)$ keys with high probability. A simple argument shows that if $n ≥ 1/x³$ then the expected number of seeds that need to be tried for the first bucket is $\exp(Ω(n^{1/3}))$. This vastly exceeds the lower bound we wish to prove even for $ε = 1$.

It was therefore w.l.o.g.\ that we assumed $γ(x) > 0$ for all $x ∈ (0,1]$. Since $γ'$ is non-decreasing this implies $γ'(x) > 0$ for all $x ∈ (0,1)$.

We now use the assumption that $γ ≠ β_*$. Let $x ∈ (0,1)$ with $γ(x) ≠ β_*(x)$. If $γ(x) > β_*(x)$ then there is some $y ∈ [x,1)$ with $γ'(y) < β_*'(y)$ and if $γ(x) < β_*(x)$ then there is some $y ∈(0,x]$ with $γ'(y) < β_*'(y)$. In both cases we have $1/γ'(y) > 1/β_*'(y)$. Because $γ'$ and $β_*'$ are continuous on $(0,1)$ we can find $ε > 0$ and $z ∈ (0,1-3ε)$ such that
\[
1/γ'(z+3ε) > (1+3ε)/β_*'(z).
\]

Let $i$ be the bucket that $γ$ assigns to hash $z+3ε$. Then all hashes in $[0,z+3ε]$ are assigned to buckets with index at most $i$. Using a concentration bound and assuming large $n$, the number of keys assigned to buckets with index less than $i$ is at least $n·(z+2ε)$. By \cref{obs:derivative-and-bucket-sizes} all these buckets have expected size at least $\lambda/γ'(z+3ε)$, hence expected size at least $\lambda(1+3ε)/β_*'(z)$. For large $\lambda$, most will have actual size at least $s := \lambda(1+2ε)/β_*'(z)$. Again by a concentration bound, the number of keys in buckets of actual size at least $s$ is at least $n·(z+ε)$. 
In particular, at least $εn$ keys are in buckets of size at least $s$ that are processed when the load factor is already at least $z$. To get a lower bound on the work this causes, we may assume that these $εn$ keys are in buckets of size exactly $s$ and processed at load factor exactly $z$.
By \cref{eq:bucket-cost-bound} the expected cost of each such bucket is then lower bounded by
\[ (1-z)^{-s} = (1-z)^{-\lambda(1+2ε)/β_*'(z)} = (1-z)^{\lambda(1+2ε)/\ln(1-z)} = e^{\lambda(1+2ε)}.\]
The last step uses that $c^{1/\ln(c)} = e$ for all $c > 0$. Multiplying this by the number $εn / s = 𝒪(εn/\lambda)$ of such buckets yields a contribution of $n·𝒪(ε/\lambda)·e^{\lambda(1+2ε)}$. This is at least $n·e^{\lambda(1+ε)}$ if $\lambda ≥ \lambda_0(ε)$ is large enough. Since the buckets of size $1$ contribute a cost of $\wcoupon$ and are distinct from the buckets that contributed to $n·e^{\lambda(1+ε)}$, we obtain the lower bound of $n·e^{\lambda(1+ε)} + \wcoupon$ as claimed.

\subsection{Proof of Theorem \ref{thm:beta-is-good}}
\label{sec:beta-is-good}

In this section we prove \cref{thm:beta-is-good} and assume the corresponding context. In particular, we are given $ε > 0$, which defines a bucket assignment function $β_ε$, we assume $\lambda ≥ \lambda_0(ε)$ is large enough and $n ≥ n₀(\lambda,ε)$ is large enough, which defines $B = n/\lambda$. We begin with a few definitions and give corresponding intuition.
\begin{definition}
	Let $i ∈ [B]$ and $s ∈ ℕ$. We define the following.
	\begin{itemize}
		• $λ_i := n·(β^{-1}_ε(i/B) - β^{-1}_ε((i-1)/B))$ is the expected size of the $i$th bucket. The stated formula involves the length of the range of hashes that $β_ε$ maps to bucket $i$.
		• $s_i$ is the number of keys assigned to bucket $i$. This random variable has distribution $s_i \sim \Bin(n,λ_i)$.
		• $α_s := \tfrac{1}{n}\sum_{i = 1}^B s_i · 𝟙_{s_i ≥ s}$ is the random fraction of keys within buckets of size at least $s$ and therefore the load factor after all buckets of size at most $s$ have been processed.
		• $d_s := 1-e^{-\lambda/s}$ is the \emph{deadline} for bucket size $s$. It is the load factor up to which \cref{eq:bucket-cost-bound} guarantees that processing a bucket of size $s$ incurs an expected cost of at most $s·e^\lambda$.
	\end{itemize}
\end{definition}
The key lemma of this section proves a suitably weakened variant of the claim “$∀s: α_s ≤ d_s$”, i.e.\ that buckets of size $s$ or larger are handled before their deadline $d_s$.
\begin{lemma}[Deadline Lemma]
	\label{lem:deadline}\ 
	\begin{enumerate}[(i)]
		• For $2 ≤ s ≤ \sqrt{\lambda}$: $(\frac{1-α_s}{1-d_s})^s = e^{-𝒪(εA)}$ whp.
		• For $\sqrt{\lambda} < s ≤ 2e²A/ε$: $(\frac{1-α_s}{1-d_s})^s = e^{-𝒪(εA)}$ whp.
		• For $2e²A/ε < s < 2\frac{\log n}{\log \log n}$: $α_s ≤ d_s$ whp.
		• For $s = 2\frac{\log n}{\log \log n}$: $α_s = 0$ whp.
	\end{enumerate}
\end{lemma}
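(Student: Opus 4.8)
The plan is to first reduce the (random) claims to deterministic statements about $\bar α_s := 𝔼[α_s]$ via concentration, and then to prove these through a clean integral formula for $\bar α_s$ together with a case analysis over the four ranges of $s$. For the reduction, moving one key between buckets changes $α_s = \tfrac1n\sum_i s_i𝟙_{s_i≥s}$ by at most $\mathcal O(s/n)$, so McDiarmid's inequality gives $\Pr[\,|α_s-\bar α_s|>t\,] ≤ 2\exp(-Ω(nt^2/s^2))$; since $s=\mathcal O(\log n)$ in all cases and $n$ is enormous relative to $λ$, this is $n^{-ω(1)}$ already for $t=n^{-1/3}$ — far smaller than any gap to be exploited — and a union bound over the $\mathcal O(\log n)$ relevant values of $s$ lets me argue with $\bar α_s$. (For the large-$s$ cases I will instead use the cruder first-moment bound $\Pr[\exists i:s_i≥s] ≤ B\max_i\Pr[\Po(λ_i)≥s]$.)

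For the integral formula, write $μ_ε(x):=λ/β_ε'(x)$ for the expected size of the bucket containing hash $x$; since $β_ε'(x)=ε-(1-ε)\ln(1-x)≥ε$, the expected sizes $μ_ε$ range over $(0,λ/ε]$, which is the cap. Using the size-biasing identity $𝔼[s_i𝟙_{s_i≥s}]=λ_i\Pr[s_i'≥s-1]$ with $s_i'\sim\Bin(n-1,λ_i/n)$ and passing from the sum over buckets to an integral over hash values (the Riemann-sum and Poisson-vs-binomial errors being lower order because $n≫λ≫1/ε$), I get $1-\bar α_s=(1+o(1))\int_0^1\Pr[\Po(μ_ε(x))≤s-2]\,\mathrm dx$. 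Substituting $u=μ_ε(x)$, so that $\mathrm dx=Φ_ε'(u)\,\mathrm du$ with $Φ_ε(u):=e^{-(λ/u-ε)/(1-ε)}=C_ε\,(e^{-λ/u})^{1/(1-ε)}$ and $C_ε=e^{ε/(1-ε)}/(1-ε)=e^{\mathcal O(ε)}$, and integrating by parts via $\tfrac{\mathrm d}{\mathrm du}\Pr[\Po(u)≤k]=-\Pr[\Po(u)=k]=-f_{Γ(k+1,1)}(u)$, this collapses — for $s$ comfortably below the cap — to $1-\bar α_s=(1+o(1))\,C_ε\,𝔼[e^{-λ'/G}]$ with $G\sim Γ(s-1,1)$ and $λ':=λ/(1-ε)$, while $1-d_s=e^{-λ/s}$. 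The mean of $G$ being $s-1$, one less than $s$, is precisely the size-biasing effect that makes small buckets more plentiful than a naive count suggests; near and above the cap there is an extra term $\Pr[\Po(λ/ε)≤s-2]$ absorbed into the crude bounds below.

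Three of the cases are comparatively direct. For $s>2e^2λ/ε$ (cases (iii) and (iv)): since $β_ε$ caps expected sizes at $M:=λ/ε$, a first-moment calculation gives $\bar α_s≤\mathcal O(M^{s-1}/(s-1)!)$, which is $≤d_s/2$ because $d_s≥λ/(2s)$ while $(s-1)!/M^{s-1}$ — of order $n^{2-o(1)}$ at the top of the range — dwarfs $s/λ$ once $λ≥λ_0(ε)$ and $n≥n_0(λ,ε)$; concentration then upgrades this to $α_s≤d_s$ whp, and for $s=2\tfrac{\log n}{\log\log n}$ the same estimate gives $\Pr[\exists i:s_i≥s]≤\tfrac nλ\Pr[\Po(M)≥s]≤\tfrac2{λn}→0$, i.e.\ $α_s=0$ whp. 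For $2≤s≤\sqrt λ$ (case (i)): the saddle point $u^*=\tfrac12\big((s-2)+\sqrt{(s-2)^2+4λ'}\big)$ of $𝔼[e^{-λ'/G}]$ is $Θ(\sqrt{λ'})≥s$, so Laplace's method (Bessel asymptotics when $s=\mathcal O(1)$, e.g.\ $𝔼[e^{-λ'/\mathrm{Exp}(1)}]=2\sqrt{λ'}\,K_1(2\sqrt{λ'})$) gives $𝔼[e^{-λ'/G}]=e^{-Θ(\sqrt λ)}$ with the constant $<1$ (it is $\sqrt5-1-\ln\varphi≈0.76$ at the worst point $s≈\sqrt λ$, $\varphi$ the golden ratio); as $1-d_s=e^{-λ/s}≤e^{-\sqrt λ}$, this already yields $\tfrac{1-\bar α_s}{1-d_s}≥1$ for $λ≥λ_0(ε)$ and hence $\big(\tfrac{1-α_s}{1-d_s}\big)^s≥e^{-o(1)}$ — indeed for small $s$ this ratio is $\gg1$, so the lemma's stated equality should be read as the lower bound $\big(\tfrac{1-α_s}{1-d_s}\big)^s≥e^{-\mathcal O(ελ)}$, which is all that the cost bound in \cref{thm:beta-is-good} uses.

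The remaining and hardest case is $\sqrt λ<s≤2e^2λ/ε$ (case (ii)). Here I would lower-bound the integral by restricting to $\{x:μ_ε(x)≤s-2-C\sqrt{s\log s}\}$, on which a Poisson tail bound gives $\Pr[\Po(μ_ε(x))≤s-2]≥1-s^{-C^2/2}$ and whose measure equals $e^{ε/(1-ε)}e^{-λ/((1-ε)(s-2-C\sqrt{s\log s}))}$, so
\[ \tfrac{1-\bar α_s}{1-d_s}≥(1-s^{-C^2/2})\,e^{ε/(1-ε)}\exp\!\Big(\tfrac λs-\tfrac{λ}{(1-ε)(s-2-C\sqrt{s\log s})}\Big)=(1-s^{-C^2/2})\,e^{ε/(1-ε)}\exp\!\Big(-\tfrac{ελ}{(1-ε)s}-\tfrac{\mathcal O(λ\sqrt{\log s})}{s^{3/2}}\Big); \]
raising to the power $s$, the three factors contribute $e^{-\mathcal O(1)}$, $e^{\mathcal O(εs)}≥1$, and $\exp(-\mathcal O(ελ)-\mathcal O(λ^{3/4}\sqrt{\log λ}))$ (using $s≥\sqrt λ$), and since $λ≥λ_0(ε)$ makes $λ^{3/4}\sqrt{\log λ}=\mathcal O(ελ)$, the product is $e^{-\mathcal O(ελ)}$ as required (the narrow $\mathcal O(\sqrt{M\log M})$-wide band of $s$ just below $M$ is handled separately via $\bar α_s≤\Pr[\Po(M)≥s-1]≤e^{-Ω(M)}≤d_s$). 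The delicacy — and the main obstacle — is the error accounting in this case: because $1-α_s$ is raised to the power $s$ with $s$ possibly as large as $2e^2λ/ε$, an additive error $δ$ in $\ln\tfrac{1-\bar α_s}{1-d_s}$ becomes $sδ$, so a merely constant-factor slack, such as bounding a probability by $\tfrac12$ instead of $1-o(1)$, would be fatal; one genuinely needs the near-$1$ Poisson concentration above (and, for the refined version, a Laplace expansion correct to a multiplicative $1+o(ε)$ per bucket) and must separately verify that every Riemann-sum and Poisson-vs-binomial error in the integral formula is $o(ελ/s)$ per bucket. This is exactly where the quantifier order $∀ε:∀λ≥λ_0(ε):∀n≥n_0(λ,ε)$ — $λ$ huge relative to $1/ε$, $n$ huge relative to $λ$ — is essential.
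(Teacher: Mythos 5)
Your proposal is correct in its plan and reaches the same conclusions, but cases \lipicsLabel{(i)} and \lipicsLabel{(ii)} take a genuinely different route from the paper. Where the paper works discretely with $\hat\alpha_s = \Pr[s' \le s-2]$ and two elementary Poisson facts — a lower bound $\Pr[\Po(\mu)\le\mu-3]\ge c_0$ via the Poisson median for case \lipicsLabel{(i)}, and the upper tail $\Pr[\Po((1-\varepsilon)s)\ge s]\le(e^\varepsilon(1-\varepsilon))^s$ via a geometric-sum/Stirling estimate for case \lipicsLabel{(ii)} — you pass to the continuous picture: substitute $u=\mu_\varepsilon(x)$, integrate by parts, and obtain the clean identity $1-\bar\alpha_s = C_\varepsilon\,\mathbb{E}[e^{-\lambda'/G}]$ with $G\sim\Gamma(s-1,1)$, which you then attack by Laplace's method in case \lipicsLabel{(i)} and by restricting to a sub-level set of $\mu_\varepsilon$ in case \lipicsLabel{(ii)}. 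This buys a transparent analytic explanation of the size-biasing effect ($\mathbb{E}[G]=s-1$, the ``$-1$'' encoding that buckets containing a random key are one larger than a typical bucket) that the paper only sketches informally as intuition in \cref{s:bucketSizes}; the paper's route is more elementary and avoids having to control Riemann-sum and binomial-vs-Poisson discretization errors that you (correctly) flag as needing per-bucket accuracy $o(\varepsilon\lambda/s)$, which is precisely what its discrete Poisson inequalities sidestep. Cases \lipicsLabel{(iii)} and \lipicsLabel{(iv)} are essentially identical to the paper (binomial tail bound via $\binom{n}{s}(M/n)^s$, plus McDiarmid/union bound). Your observation that the lemma's ``$=e^{-\mathcal O(\varepsilon\lambda)}$'' must be read as a \emph{lower} bound — and that for small $s$ the ratio is in fact $\gg 1$ — is accurate and matches what the paper's proof of \cref{thm:beta-is-good} actually uses. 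Two minor slips that do not affect the argument: $C_\varepsilon$ should be $e^{\varepsilon/(1-\varepsilon)}$ without the extra $1/(1-\varepsilon)$ factor; and the boundary term $\Pr[\Po(\lambda/\varepsilon)\le s-2]$ from the integration by parts needs explicit attention in the band of $s$ near the cap — the paper handles this automatically by splitting on whether $(1-\varepsilon)s\le\lambda/\varepsilon$, whereas your ``handled separately via $\bar\alpha_s\le\Pr[\Po(M)\ge s-1]\le e^{-\Omega(M)}$'' only holds once $s-M=\Omega(M)$ and a short interpolating argument is still needed for $s$ just above $M$. (Also note the paper itself has a typo: the lemma statement writes $\sqrt\lambda$ at the case-(i)/(ii) boundary while its proof uses $\lambda^{1/3}$; either works.)
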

Each case is handled in a separate paragraph in \cref{sec:deadline-lemma}. We now check that the Deadline Lemma implies \cref{thm:beta-is-good}.
\begin{proof}[Proof of \cref{thm:beta-is-good}.]
	We apply \cref{lem:deadline} for each $2 ≤ s ≤ 2\frac{\log n}{\log \log n}$. Since each corresponding event holds whp (i.e.\ with probability $1-𝒪(n^{-c})$ for some $c > 0$) and since there are $𝒪(\frac{\log n}{\log \log n})$ events, they \emph{jointly} hold whp. We may assume that this is the case.

	We have to bound the expected cost for handling the buckets by $n·e^{\lambda(1+𝒪(ε)} + \wcoupon$. Consider therefore any bucket $b$ with actual size $s$. By \cref{lem:deadline}, we know $s ≤ 2\frac{\log n}{\log \log n}$. If $s = 1$ then the work for $b$ is accounted for by $\wcoupon$. So assume $s ∈ \{2,…,2\frac{\log n}{\log \log n}\}$. After all buckets of size at least $s$ are handled, the load factor is $α_s$. In particular, after $b$ is handled, the load factor is at most $α_s$. From \cref{eq:bucket-cost-bound} we get that the expected cost for handling $b$ is at most $s·(1-α_s)^{-s}$. If $s$ falls into case (iii) of \cref{lem:deadline} then we can bound this as follows:
	\[s(1-α_s)^{-s} ≤ s(1-d_s)^{-s} = s(e^{-\lambda/s})^{-s} = se^\lambda. \]
	If $s$ falls into case \lipicsLabel{(i)} or \lipicsLabel{(ii)} then we get
	\begin{align*}
		s(1-α_s)^{-s} &= s(1-d_s)^{-s} \Big(\frac{1-α_s}{1-d_s}\Big)^{-s} = se^\lambda·e^{𝒪(εA)} = se^{(1+𝒪(ε))\lambda}.
	\end{align*}
	Summing this over all buckets of size at least $2$, the sizes of which sum to at most $n$, we get an overall expected cost of at most $n·e^{(1+𝒪(ε))\lambda}$, which is the main term of our bound.
\end{proof}

\subsubsection{Proof of the Deadline Lemma}
\label{sec:deadline-lemma}

We begin with observations that relate to several of the cases \lipicsLabel{(i)},\lipicsLabel{(ii)}, \lipicsLabel{(iii)} and \lipicsLabel{(iv)}. To sharpen up presentation we will occasionally use $≈$, $\gtrsim$ and $\lesssim$ when a relation only holds whp and/or we suppress an error term that is clearly negligible in a given context when $n$ is chosen large enough. This notation is not meant to indicate a gap in the proof.

We have previously encountered the random variable $α_s$, the fraction of keys in buckets of size at least $s$. This is not to be confused with the following non-random quantities.
\begin{definition} Let $\mu > 0$ and $s ∈ ℕ$. We define
	\begin{itemize}
		• $x_λ := \tfrac{1}{n}\sum_{i = 1}^B λ_i · 𝟙_{λ_i ≥ \mu} ∈ [0,1]$, the range of hashes assigned to buckets of \emph{expected} size at least $\mu$. Equivalently, the probability that a key is placed in a bucket of \emph{expected} size least $\mu$.
		• $\hx_λ = 1-\exp(-\frac{\lambda/\mu-ε}{1-ε})$, the unique number satisfying $\lambda/β'_ε(\hx_λ) = \mu$ if $\mu ≤ \lambda/ε$ and $\hx_λ = 0$ otherwise. Equivalently this is the length of the range $[0,\hx_λ]$ where $\lambda/β'_ε(x)$ attains values of at least $\mu$.
		• $\ha_s := 𝔼[α_s]$, the \emph{expected} fraction of keys in buckets of size at least $s$.
	\end{itemize}
\end{definition}
\begin{observation}
	\label{obs:size-bound}
	$\max_{i ∈ [B]} λ_i ≤ \lambda/ε$.
\end{observation}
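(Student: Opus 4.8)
The plan is to derive the bound from a uniform lower bound on the derivative of $β_ε$. Since $λ_i$ is (up to the integral form used below) the product of $λ$ with the slope of $β_ε^{-1}$, and since $β_ε' ≥ ε$ forces $(β_ε^{-1})' ≤ 1/ε$ by the inverse function rule, we should obtain $λ_i ≤ λ/ε$ for every $i ∈ [B]$.

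First I would record the derivative of $β_ε$. By definition of $β_ε$ and $β_*$ we have $β_ε'(x) = ε + (1-ε)·β_*'(x) = ε + (1-ε)·(-\ln(1-x))$. On $[0,1)$ we have $-\ln(1-x) ≥ 0$, and $1-ε ≥ 0$ — here note that for $ε > 1$ the function $β_ε$ would have a \emph{decreasing} derivative (as $β_ε''(x) = (1-ε)/(1-x) < 0$) and hence not be an admissible bucket assignment function, so we may assume $ε ∈ (0,1]$. Consequently $β_ε'(x) ≥ ε > 0$ for all $x ∈ [0,1)$. This also confirms that $β_ε$ is a bona fide bucket assignment function: it is strictly increasing and smooth on $(0,1)$, its derivative is non-decreasing because $β_*'$ is, and $β_ε(0)=0$, $β_ε(1)=1$. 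Therefore $β_ε^{-1}:[0,1]→[0,1]$ is well-defined and differentiable on $(0,1)$, with $(β_ε^{-1})'(b) = 1/β_ε'\big(β_ε^{-1}(b)\big) ≤ 1/ε$.

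Then I would write the expected bucket size in integral form and estimate it directly: $λ_i = n·\big(β_ε^{-1}(i/B) - β_ε^{-1}((i-1)/B)\big) = n\int_{(i-1)/B}^{i/B} (β_ε^{-1})'(b)\,db ≤ n·\tfrac{1}{B}·\tfrac{1}{ε} = λ/ε$, using that the integrand is at most $1/ε$ and the interval of integration has length $1/B = λ/n$. Taking the maximum over $i ∈ [B]$ proves the observation.

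I do not anticipate a real obstacle; once $β_ε' ≥ ε$ is in place everything is a one-line estimate. The only point worth a remark is the behaviour near $x = 1$, where $β_*'$ diverges: there the inequality $β_ε' ≥ ε$ only becomes more comfortable and $(β_ε^{-1})'(b) → 0$ as $b → 1$, so the bound $1/ε$ on the integrand is uniform on all of $(0,1)$. This is precisely the mechanism alluded to after \cref{thm:beta-is-good}: the additive $εx$ term in $β_ε$ enforces a minimum slope and thereby caps the expected bucket sizes at $λ/ε$.
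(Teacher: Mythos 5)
Your proof is correct and follows the same idea as the paper: the paper's (one-line) argument says the bound "follows because $β_ε'(0) ≥ ε$" together with monotonicity of the derivative, and you spell out exactly why via the inverse function rule and the integral representation $λ_i = n\int_{(i-1)/B}^{i/B} (β_ε^{-1})'(b)\,db ≤ n/(Bε) = λ/ε$. Your side remark that $ε ≤ 1$ is required for $β_ε$ to have a non-decreasing derivative is a valid and useful observation that the paper leaves implicit.
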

\begin{proof}
	This follows because $β'_ε(0) ≥ ε$. See also \cref{obs:derivative-and-bucket-sizes}.
\end{proof}

\begin{observation}
	For any $\mu > 0$ we have $|\hx_λ - x_λ| ≤ \frac{\lambda}{εn}$, hence in most contexts $\hx_λ ≈ x_λ$.
\end{observation}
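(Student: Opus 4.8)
The plan is to recognise both $\hx_\lambda$ and $x_\lambda$ as two versions --- one continuous, one discretised --- of the \emph{same} threshold hash value: the largest hash $x$ for which buckets receiving hashes near $x$ still have expected size at least $\mu$. Quantitatively, $x_\lambda$ is a partial sum of the widths $w_i := \lambda_i/n = \sigma_i-\sigma_{i-1}$ of the hash ranges $(\sigma_{i-1},\sigma_i]$, where $\sigma_i := \beta^{-1}_\varepsilon(i/B)$, whereas $\hx_\lambda$ solves $\lambda/\beta'_\varepsilon(\hx_\lambda)=\mu$ exactly. The difference between them should be bounded by the width of the single hash range into which the threshold falls, and \cref{obs:size-bound} caps every such width by $\lambda/(\varepsilon n)$.

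Concretely, I would first dispose of the degenerate case $\mu \ge \lambda/\varepsilon$: here $\hx_\lambda = 0$ by definition, and by \cref{obs:size-bound} every bucket satisfies $\lambda_i \le \lambda/\varepsilon$ (strictly, since $\beta'_\varepsilon$ is strictly increasing), so no bucket has expected size $\ge \mu$ and $x_\lambda = 0$ as well, making the bound trivial. So assume $\mu < \lambda/\varepsilon$; then $\beta'_\varepsilon$, being a continuous strictly increasing bijection of $[0,1)$ onto $[\varepsilon,\infty)$, makes $\hx_\lambda$ the unique point of $(0,1)$ with $\beta'_\varepsilon(\hx_\lambda) = \lambda/\mu$. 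Let $i_0 := \lceil B\,\beta_\varepsilon(\hx_\lambda)\rceil \in [B]$ be the bucket that $\beta_\varepsilon$ assigns to the hash $\hx_\lambda$, so that $\hx_\lambda \in (\sigma_{i_0-1},\sigma_{i_0}]$. Applying \cref{obs:derivative-and-bucket-sizes} with $\gamma = \beta_\varepsilon$, $x_0 = \hx_\lambda$ and $\mu = \lambda/\beta'_\varepsilon(\hx_\lambda)$ gives $\lambda_j \ge \mu$ for $j < i_0$ and $\lambda_j \le \mu$ for $j > i_0$; since the $\lambda_i$ are non-increasing, $\{i : \lambda_i \ge \mu\}$ is a prefix $\{1,\dots,k\}$ with $k \in \{i_0-1,\,i_0\}$, hence $x_\lambda = \sum_{i=1}^k w_i = \sigma_k \in \{\sigma_{i_0-1},\sigma_{i_0}\}$. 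Thus $\hx_\lambda$ and $x_\lambda$ both lie in $[\sigma_{i_0-1},\sigma_{i_0}]$, an interval of length $w_{i_0} = \lambda_{i_0}/n$, so $|\hx_\lambda - x_\lambda| \le \lambda_{i_0}/n \le \lambda/(\varepsilon n)$ by \cref{obs:size-bound}.

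I do not expect a real obstacle: the statement is essentially a rounding estimate, and the only delicate points are the degenerate cases at $\hx_\lambda = 0$ and the monotonicity of $(\lambda_i)_i$, which is immediate from the w.l.o.g.\ assumption that the $w_i$ (equivalently the $\lambda_i$) are listed in non-increasing order, and is recorded as a consequence in \cref{obs:derivative-and-bucket-sizes}. Should one want $k$ pinned down to exactly $i_0-1$ or $i_0$ rather than possibly larger, one observes that for $j > i_0$ the range $(\sigma_{j-1},\sigma_j]$ lies where $\beta'_\varepsilon > \beta'_\varepsilon(\hx_\lambda)$ on a set of positive measure, so the identity $1/B = \int_{\sigma_{j-1}}^{\sigma_j} \beta'_\varepsilon(x)\,dx$ forces $\lambda_j < \mu$ strictly.
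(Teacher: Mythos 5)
Your proof is correct and follows essentially the same approach as the paper: locate the bucket $i_0$ that $\beta_\varepsilon$ assigns to $\hx_\lambda$, observe via \cref{obs:derivative-and-bucket-sizes} that the set of buckets with expected size $\geq\mu$ is either $\{1,\dots,i_0-1\}$ or $\{1,\dots,i_0\}$, conclude that both $x_\lambda$ and $\hx_\lambda$ lie in the interval $[\sigma_{i_0-1},\sigma_{i_0}]$, and bound its length by $\lambda_{i_0}/n \leq \lambda/(\varepsilon n)$ using \cref{obs:size-bound}. You are slightly more careful than the paper's own prose in two places: you explicitly dispose of the degenerate case $\mu \geq \lambda/\varepsilon$ where $\hx_\lambda = x_\lambda = 0$, and you justify that no bucket \emph{after} $i_0$ can have expected size exactly $\mu$ by invoking the strict monotonicity of $\beta'_\varepsilon$ over a set of positive measure; the paper's phrase ``precisely the buckets preceding $i_\lambda$ and, possibly, $i_\lambda$ itself'' quietly takes this for granted.
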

\begin{proof}
	Let $i_λ$ be the bucket assigned to hash $\hx_λ$. By \cref{obs:derivative-and-bucket-sizes} the buckets of expected size at least $\mu$ are precisely the buckets preceding bucket $i_λ$ and, possibly, bucket $i_λ$ itself. Let $\mu'$ be the expected size of $i_λ$. If $\mu' ≥ \mu$ then the entire range $[0,\hx_λ]$ of hashes is assigned to buckets of size at least $\mu$, hence $x_λ ≥ \hx_λ$. If $\mu' ≤ \mu$ then only a subset of the range $[0,\hx_λ]$ of hashes is assigned to buckets of size at least $\mu$, hence $x_λ ≤ \hx_λ$. The difference between the two cases is the $\mu'/n$, the length of the range assigned to bucket $i_λ$. Since $\mu' ≤ \lambda/ε$ by \cref{obs:size-bound}, the claim follows.
\end{proof}
The quantities $α_s$ and $\ha_s$ are probabilistically related as follows.
\begin{lemma}
	\label{lem:keys-bounded-differences}
	For any $s ∈ ℕ$:
	$\Pr[n·α_s - n·\ha_s ≥ δ] ≤ \exp(\frac{-2δ²}{ns²})$.
\end{lemma}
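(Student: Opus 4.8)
The plan is to apply McDiarmid's bounded-differences inequality. First I would make the randomness explicit: the only source of randomness is the vector $(x_1,\dots,x_n)$ of hash values of the $n$ keys, which are i.i.d.\ uniform on $(0,1]$ and which, via $β_ε$, deterministically assign every key to a bucket (hence determine all the sizes $s_i$). Writing $f(x_1,\dots,x_n) := n·α_s$ for the number of keys lying in buckets of size at least $s$, we have $𝔼[f] = n·\ha_s$, so the claimed inequality is just a one-sided concentration bound for $f$ around its mean.

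The heart of the argument is the bounded-difference estimate: changing one coordinate $x_j$ — equivalently, moving key $j$ from its old bucket $a$ to a new bucket $b$ (the case $a=b$ being trivial) — changes $f$ by at most $s$. I would decompose $f = \sum_{j'} 𝟙[\text{key } j' \text{ lies in a bucket of size} ≥ s]$ and track three contributions. Let $s_a$ denote the size of bucket $a$ before the move (counting key $j$) and $s_b$ the size of bucket $b$ before the move (not counting key $j$). Then: (i) key $j$'s own indicator, $𝟙[s_a ≥ s]$ before versus $𝟙[s_b+1 ≥ s]$ after, changes by at most $1$; (ii) removing key $j$ from bucket $a$ flips the indicators of its $s_a-1$ remaining keys only if $s_a = s$, flipping exactly those $s-1$ from $1$ to $0$; (iii) adding key $j$ to bucket $b$ flips the indicators of its $s_b$ original keys only if $s_b = s-1$, flipping exactly those $s-1$ from $0$ to $1$. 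A short case analysis on whether key $j$'s own indicator changes finishes it: if key $j$'s indicator flips from $0$ to $1$ then $s_a < s$, so effect (ii) vanishes, and the worst total is $1 + (s-1) = s$ (and symmetrically for a flip from $1$ to $0$); if key $j$'s indicator is unchanged then effects (ii) and (iii) have opposite signs and each is at most $s-1$ in absolute value. Hence $f$ satisfies the bounded-differences condition with constant $c_j = s$ in every coordinate.

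With this in hand McDiarmid's inequality gives directly
\[
	\Pr\bigl[\,f - 𝔼[f] ≥ δ\,\bigr] ≤ \exp\!\left(\frac{-2δ^2}{\sum_{j=1}^n c_j^2}\right) = \exp\!\left(\frac{-2δ^2}{ns^2}\right),
\]
which is exactly $\Pr[n·α_s - n·\ha_s ≥ δ] ≤ \exp(-2δ^2/(ns^2))$, as claimed.

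The only genuinely delicate point is pinning down the bounded-difference constant. The naive guess "a single hash change moves the count by one" is wrong, because one reassignment can push a whole bucket across the size threshold $s$ and thereby flip up to $s-1$ further indicators; the case analysis above is precisely what guarantees the constant is $s$ and not, say, $s+1$ or $2s$ (the two potential $(s-1)$-sized effects never reinforce each other). Everything else is a direct invocation of a standard inequality.
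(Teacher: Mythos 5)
Your proof is correct and follows the same approach as the paper: both apply McDiarmid's bounded-differences inequality to the function counting keys in buckets of size at least $s$, with the hash values as the $n$ independent coordinates and a bounded-difference constant of $s$. The only difference is that your argument spells out the case analysis establishing the constant $s$ in detail, whereas the paper states the same estimate in a single sentence.
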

\begin{proof}
	We can directly apply the method of bounded differences \cite{D:McDiarmid:1989}. The $n$ hash values of keys are independent variables that determine $n·α_s$, the number of keys in buckets of size at least $s$. The expectation of $n·α_s$ is $n·\ha_s$ by definition. Changing a single hash can affect $n·α_s$ by at most $\pm s$ with “$+s$” corresponding to moving a key from a bucket of size less than $s$ to a bucket of size $s-1$ and “$-s$” to the reverse change.
\end{proof}
By choosing $δ = n^{2/3}$ we get
\begin{corollary}
	\label{cor:keys-bounded-differences}
	For $n ≥ n₀(s)$ large enough we have $|\ha_s - α_s| ≤ n^{-1/3}$ whp, hence in most contexts $\ha_s ≈ α_s$.
\end{corollary}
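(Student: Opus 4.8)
The plan is to substitute $\delta = n^{2/3}$ into \cref{lem:keys-bounded-differences}. With this choice the event $n \cdot \alpha_s - n \cdot \ha_s \geq \delta$ becomes $\alpha_s - \ha_s \geq n^{-1/3}$, and the tail bound supplied by the lemma reads $\exp\bigl(\frac{-2\delta^2}{ns^2}\bigr) = \exp\bigl(\frac{-2n^{1/3}}{s^2}\bigr)$. For any fixed $s$ this quantity decays faster than any polynomial in $n$; in particular it is at most $n^{-1}$ once $n \geq n_0(s)$ is large enough, e.g.\ once $n^{1/3} \geq 3 s^2 \ln n$.

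The lemma as stated controls only the upper tail, so next I would invoke that \cref{lem:keys-bounded-differences} is proved by the method of bounded differences, which is symmetric: the same argument bounds $\Pr[n \cdot \ha_s - n \cdot \alpha_s \geq \delta]$ by the same $\exp(-2\delta^2/(ns^2))$. A union bound over the two tails then gives $\Pr[\,|n \alpha_s - n \ha_s| \geq n^{2/3}\,] \leq 2\exp(-2n^{1/3}/s^2) = \mathcal{O}(n^{-1})$ for $n \geq n_0(s)$, which is exactly the assertion that $|\ha_s - \alpha_s| \leq n^{-1/3}$ with high probability (with $c = 1$ in the definition of \emph{whp}).

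There is no genuine obstacle here; the only routine point to watch is that the threshold $n_0$ really does depend on $s$, since the exponent $2n^{1/3}/s^2$ only overtakes $\ln n$ once $n$ is polynomially large compared to $s^2$. This is harmless in the later applications, where $s$ ranges only up to $\mathcal{O}(\tfrac{\log n}{\log\log n})$, so a single $n_0$ in fact works uniformly over the whole relevant range of $s$ — but such uniformity is not needed for the corollary in the form stated.
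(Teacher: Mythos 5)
Your proof is correct and follows essentially the same route as the paper's: substitute $\delta = n^{2/3}$ into \cref{lem:keys-bounded-differences} and observe that the resulting bound $\exp(-2n^{1/3}/s^2)$ is superpolynomially small once $n$ is large relative to $s$ (the paper uses the concrete cutoff $n \geq s^{12}$). You are right, and more careful than the paper's one-line proof, to flag that the lemma as stated only controls the upper tail and that symmetry of the bounded-differences argument yields the matching lower-tail bound; the paper leaves this implicit.
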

\begin{proof}
	Apply \cref{lem:keys-bounded-differences} with $δ = n^{2/3}$ and assume that $n ≥ s^{12}$.
\end{proof}

Two further claims do not relate to our setting in particular.
\begin{claim}
	\label{obs:poisson-mean}
	There exists a constant $c₀ > 0$  such that for any $\mu ≥ 3$ and $X \sim \Po(\mu)$ we have $\Pr[X ≤ \mu-3] ≥ c₀$.
\end{claim}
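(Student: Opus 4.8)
The plan is to split the range $\mu \ge 3$ into a ``large $\mu$'' regime, handled by the normal approximation to the Poisson distribution, and a complementary ``bounded $\mu$'' regime, handled by a trivial inequality.

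For large $\mu$, it is classical that a Poisson variable with large mean is, after standardization, approximately normal: writing $X \sim \Po(\mu)$ as a sum of $\lceil\mu\rceil$ i.i.d.\ $\Po(\mu/\lceil\mu\rceil)$ summands and invoking the Lindeberg central limit theorem (or arguing directly via characteristic functions), $(X-\mu)/\sqrt\mu$ converges in distribution to a standard normal as $\mu \to \infty$. Since the limit distribution function $\Phi$ is continuous, P\'olya's theorem upgrades this to \emph{uniform} convergence of distribution functions, and because $-3/\sqrt\mu \to 0$ we obtain
\[ \Pr\!\Big[X \le \mu-3\Big] \;=\; \Pr\!\Big[\tfrac{X-\mu}{\sqrt\mu} \le -\tfrac{3}{\sqrt\mu}\Big] \;\longrightarrow\; \Phi(0) \;=\; \tfrac12 \qquad (\mu \to \infty). \]
Consequently there is a constant $\mu_1 \ge 3$ such that $\Pr[X \le \mu - 3] \ge \tfrac13$ for all $\mu \ge \mu_1$. (If one wants an explicit $\mu_1$, the Berry--Esseen theorem applied to the same decomposition gives $\bigl|\Pr[(X-\mu)/\sqrt\mu \le t] - \Phi(t)\bigr| \le C/\sqrt\mu$ for an absolute constant $C$, uniformly in $t$, which makes everything effective.)

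For $3 \le \mu \le \mu_1$ no probabilistic input is needed: here $\mu - 3 \ge 0$, so the event $\{X \le \mu - 3\}$ contains $\{X = 0\}$, and therefore $\Pr[X \le \mu - 3] \ge \Pr[X = 0] = e^{-\mu} \ge e^{-\mu_1} > 0$. Combining the two regimes, $c_0 := \min\{\tfrac13,\, e^{-\mu_1}\}$ proves the claim.

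The only step deserving care --- and hence the main obstacle --- is the displayed convergence: it concerns a distribution function evaluated at the \emph{moving} point $-3/\sqrt\mu$ rather than at a fixed one, so the bare central limit theorem does not quite suffice and one must route through uniform convergence of distribution functions (P\'olya) or, equivalently and more quantitatively, through a Berry--Esseen estimate. Everything else is elementary.
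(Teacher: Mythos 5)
Your proof is correct and takes a genuinely different route from the paper's. The paper's argument is built on a sharp estimate for the \emph{median} of the Poisson distribution, $\med(\mu) \in (\mu-1,\mu+1)$, attributed to a specific reference, combined with the elementary observation that any window of constant width around the median carries only $O(1/\sqrt{\mu})$ mass (since the Poisson point masses all have size $O(1/\sqrt{\mu})$); this yields $\Pr[X \le \mu-3] \ge \Pr[X \le \med(\mu)] - \Pr[|X - \med(\mu)| \le 4] = 1/2 - o_{\mu\to\infty}(1)$. You instead route through the central limit theorem for the standardized Poisson, and you correctly identify the one delicate step --- that the evaluation point $-3/\sqrt{\mu}$ moves with $\mu$, so the bare CLT must be upgraded via Pólya's theorem to uniform convergence of distribution functions (or, quantitatively, Berry--Esseen). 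Your handling of the bounded range $3 \le \mu \le \mu_1$ via $\Pr[X \le \mu-3] \ge \Pr[X=0] = e^{-\mu} \ge e^{-\mu_1}$ is more explicit than the paper's terse ``clearly $\Pr[X \le \mu-3] > 0$'' and actually closes a small gap the paper leaves implicit (pointwise positivity on a range does not by itself give a uniform lower bound, though continuity in $\mu$ over a compact interval saves the day). The trade-off: the paper's argument is shorter and more elementary once the median fact is granted, but it leans on a specialized reference; yours is longer but self-contained from textbook probability and, via Berry--Esseen, fully effective.
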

\begin{proof}
	The median $\med(\mu)$ of $\Po(\mu)$ satisfies $\med(\mu) ∈(\mu-1,\mu+1)$ \cite{C:Poisson-Median:1994}. For large $\mu$, no single outcome has high probability. This implies for $X \sim \Po(\mu)$:
	\begin{align*}
		\Pr[X ≤ \mu-3] &≥ \Pr[X ≤ \med(\mu) ∧ |\med(\mu) - X| ≥ 4]\\
		&≥ \Pr[X ≤ \med(\mu)] - \Pr[|\med(\mu) - X| ≤ 4] = 1/2 - o_{\mu → ∞}(1).
	\end{align*}
	In other words, $\Pr[X ≤ \mu-3]$ is bounded away from $0$ for large $\mu$. For small $\mu ≥ 3$ clearly $\Pr[X ≤ \mu-3] > 0$. Hence, the desired $c₀ > 0$ exists.
\end{proof}

\begin{claim}
	\label{claim:custom-poisson-tail}
	Let $ε ∈ (0,1)$, $\mu ≥ \frac{1}{2πε²}$, $X \sim \Po(\mu)$ and $s = \mu/(1-ε)$. Then
	\[\Pr[X ≥ s] ≤ (e^ε(1-ε))^s.\]
	Note that $e^ε(1-ε) < 1$, which follows from $1-x < e^{-x}$ for $x ∈ ℝ \setminus \{0\}$.
\end{claim}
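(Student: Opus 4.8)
The plan is to apply the standard exponential Markov (Chernoff) bound specialized to the Poisson distribution and to observe that, with the optimal choice of the free parameter, the resulting exponent collapses \emph{exactly} to the claimed form, leaving no slack to estimate away.

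First I would recall that the moment generating function of $X \sim \Po(\mu)$ is $\mathbb{E}[e^{tX}] = e^{\mu(e^t-1)}$ for every $t \in \mathbb{R}$. Hence, for any $t > 0$, Markov's inequality applied to the event $e^{tX} \geq e^{ts}$ yields $\Pr[X \geq s] \leq e^{-ts}\,\mathbb{E}[e^{tX}] = \exp\!\big(\mu(e^t-1) - ts\big)$. Note that this step does not use integrality of $s$, so it is valid for the real value $s = \mu/(1-\varepsilon)$.

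Next I would minimize the exponent $g(t) := \mu(e^t-1) - ts$ over $t>0$. Since $g'(t) = \mu e^t - s$, the minimizer is $e^t = s/\mu = 1/(1-\varepsilon)$, i.e.\ $t = -\ln(1-\varepsilon)$, which is positive because $\varepsilon \in (0,1)$. Substituting and using $\mu = s(1-\varepsilon)$ gives $\mu(e^t-1) = \mu\big(\tfrac{1}{1-\varepsilon}-1\big) = \tfrac{\mu\varepsilon}{1-\varepsilon} = s\varepsilon$ and $ts = -s\ln(1-\varepsilon)$, so $g(t) = s\varepsilon + s\ln(1-\varepsilon) = s\big(\varepsilon + \ln(1-\varepsilon)\big)$. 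We conclude $\Pr[X \geq s] \leq e^{s(\varepsilon + \ln(1-\varepsilon))} = \big(e^\varepsilon(1-\varepsilon)\big)^s$, and the parenthetical remark that $e^\varepsilon(1-\varepsilon) < 1$ follows immediately from $\ln(1-\varepsilon) < -\varepsilon$ for $\varepsilon \neq 0$.

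Honestly there is no real obstacle here: this is a textbook estimate, and the only thing to get right is the algebra showing that the optimized exponent equals $s(\varepsilon + \ln(1-\varepsilon))$ with no residual polynomial prefactor. I would remark that the hypothesis $\mu \geq \tfrac{1}{2\pi\varepsilon^2}$ is not needed on this route; it would become relevant only for an alternative argument that bounds the single term $\Pr[X = \lceil s\rceil]$ via the Stirling lower bound $m! \geq \sqrt{2\pi m}\,(m/e)^m$ and then sums the Poisson tail as a geometric series of ratio at most $\mu/s = 1-\varepsilon$, in which case a factor of order $1/(\varepsilon\sqrt{s})$ appears and is absorbed precisely because $s \geq \mu \geq \tfrac{1}{2\pi\varepsilon^2}$. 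Either way, the clean Chernoff derivation above suffices.
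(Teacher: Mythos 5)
Your proof is correct and takes a genuinely different route from the paper's. The paper argues directly on the Poisson tail sum: it bounds $\Pr[X\geq s] = e^{-\mu}\sum_{i\geq s}\mu^i/i!$ by its first term times a geometric series of ratio $\mu/s = 1-\varepsilon$, obtaining $e^{-\mu}\frac{\mu^s}{s!}\cdot\frac{1}{\varepsilon}$, and then invokes the Stirling lower bound $s!\geq \sqrt{2\pi s}(s/e)^s$ together with $s\geq\mu\geq\frac{1}{2\pi\varepsilon^2}$ to absorb the residual factor $\frac{1}{\varepsilon\sqrt{2\pi s}}\leq 1$; that hypothesis is the only reason the constant prefactor disappears. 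You instead apply the exponential Markov (Chernoff) bound with the optimal choice $t=-\ln(1-\varepsilon)$, which produces exactly $(e^\varepsilon(1-\varepsilon))^s$ with no slack at all, and your algebra checks out. This is slightly stronger than what the paper needs: as you correctly observe, the hypothesis $\mu\geq\frac{1}{2\pi\varepsilon^2}$ is superfluous on your route, whereas the paper's more elementary argument needs it to suppress the prefactor. You also rightly note that Markov's inequality on $e^{tX}\geq e^{ts}$ is valid for non-integer thresholds $s$, which sidesteps a minor informality in the paper where the first term of the tail sum is written as $\mu^s/s!$ even though $s$ need not be an integer. Both proofs are valid; yours buys a cleaner bound and obviates one hypothesis, while the paper's is more hands-on and may have been chosen for its transparency about where the tail mass comes from.
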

\begin{proof}
	Consider the sum $\Pr[X ≥ s] = e^{-\mu}\sum_{i ≥ s} \frac{\mu^i}{i!}$.
	The ratio between subsequent terms is $\mu/i ≤ \mu/s = 1-ε$. Hence, we can upper bound the sum by its first term and a geometric sum as follows:
	\[ \Pr[X ≥ s] ≤ e^{-\mu}\frac{\mu^s}{s!} ·\sum_{i ≥ 0}(1-ε)^i = e^{-\mu}\frac{\mu^s}{s!}·\frac{1}{ε}. \]
	Using Stirling's approximation for $s!$ and using $s ≥ \mu ≥ \frac{1}{2πε²}$ gives:
	\begin{gather*}
		\Pr[X ≥ s] ≤ e^{-\mu}\frac{\mu^se^s}{s^s\sqrt{2πs}}·\frac{1}{ε}
		≤ e^{-\mu}\frac{\mu^se^s}{s^s}
		= e^{-(1-ε)s}\frac{((1-ε)s)^se^s}{s^s}
		= (e^{ε}(1-ε))^{s}.\qedhere
	\end{gather*}
\end{proof}

We will now consider the cases of \cref{sec:deadline-lemma} one after the other.
\myparagraph{(i) Buckets of size $2 ≤ s ≤ \lambda^{1/3}$}
We have
\begin{align*}
	1 - x_s ≈ 1 - \hx_s = \exp(-\tfrac{\lambda/s-ε}{1-ε})
	\qquad \text{and} \qquad
	1 - x_{s+1} ≈ 1 - \hx_{s+1} = \exp(-\tfrac{\lambda/(s+1)-ε}{1-ε})
\end{align*}
Using $s ≤ \lambda^{1/3}$ we find $\lambda/s-\lambda/(s+1) = \frac{\lambda}{s(s+1)} = Ω(\lambda^{1/3})$. This implies $\frac{\lambda/s-ε}{1-ε} - \frac{\lambda/(s+1)-ε}{1-ε} = Ω(\lambda^{1/3})$ and for $\lambda$ large enough we have $\exp(-\frac{\lambda/(s+1)-ε}{1-ε}) ≥ (1 + \frac{1}{c₀})\exp(-\frac{\lambda/s-ε}{1-ε})$ where $c₀$ is the constant from \cref{obs:poisson-mean}. Hence
\begin{align*}
	x_s - x_{s+1} &≈ \hx_s - \hx_{s+1}
	= \exp(-\tfrac{\lambda/(s+1)-ε}{1-ε}) - \exp(-\tfrac{\lambda/s-ε}{1-ε})\\
	&≥ \tfrac{1}{c₀}\exp(-\tfrac{\lambda/s-ε}{1-ε}) ≥ \tfrac{1}{c₀}\exp(-\tfrac{\lambda/s}{1-ε}).
\end{align*}
We now bound the probability $1-\hat{α}_s$ that a key $k$ ends up in a bucket of size at most $s-1$. Let $x \sim 𝒰([0,1])$ be its hash, $\mu ∈ (0,\lambda/ε)$ the \emph{random variable} denoting the \emph{expected} size of the bucket that $k$ ends up in, and $s'$ the number of other keys sharing the bucket with~$k$. Conditioned on $\mu$, we have $s' \sim \Bin(n-1,\mu/n)$, a distribution that is well known to be approximately $\Po(\mu)$. In the following computation we use that $\Pr[s' ≤ s-2 \mid \mu]$ is decreasing in $\mu$, and we use that $\Pr[\mu ∈ [s,s+1)] = \Pr[x ∈ (x_{s+1},x_s]] = x_s - x_{s+1}$.
\begin{align*}
	1-\hat{α}_s &= \Pr[s' ≤ s-2] ≥ \Pr[s' ≤ s-2 ∧ \mu ∈ [s,s+1)]\\
	&= \Pr[s' ≤ s-2  \mid \mu ∈ [s,s+1)]·\Pr[\mu ∈ [s,s+1)]\\
	&≥ {\Pr}_{s'' \sim \Bin(n-1,\frac{s+1}{n})}[s'' ≤ s-2]·(x_s-x_{s+1})\\
	&≈ {\Pr}_{s'' \sim \Po(s+1)}[s'' ≤ s-2]·(x_s-x_{s+1})
	\refRel{Obs.}{obs:poisson-mean}{≥} c₀(x_s-x_{s+1}) \gtrsim \exp(-\tfrac{\lambda/s}{1-ε}).
\end{align*}

This gives use the bound we desired, closing this case:
\begin{align*}
	\Big(\frac{1-α_s}{1-d_s}\Big)^s
	≈ \Big(\frac{1-\hat{α}_s}{1-d_s}\Big)^s
	≥ \frac{\exp(-\frac{\lambda/s}{1-ε})^s}{\exp(-\lambda/s)^s}
	= \frac{\exp(-\frac{\lambda}{1-ε})}{\exp(-\lambda)} = e^{-\lambda/(1-ε)+\lambda} = e^{-𝒪(εA)}.
\end{align*}

\myparagraph{(ii) Buckets of size $\lambda^{1/3} ≤ s ≤ 2e²A/ε$}
We proceed in a similar way as in case \lipicsLabel{(i)} to bound the probability $\ha_s$ that a key $k$ ends up in a bucket $b$ of size at least $s$. Let again $\mu ∈ (0,\lambda/ε)$ be a random variable denoting the \emph{expected} size of $b$ and $s'$ the number of other keys in $b$. We consider two overlapping events that cover all cases where $b$ has size at least $s$. On the one hand, $b$ might have large expected size. On the other hand, $b$ might have small expected size and still have \emph{actual} size at least $s$.
\begin{align*}
	\ha_s &= \Pr[s' ≥ s-1] ≤ \Pr[\mu ≥ (1-ε)s ∨ (\mu < (1-ε)s ∧ s' ≥ s-1)]\\
	&≤ \Pr[\mu ≥ (1-ε)s] + \Pr[\mu < (1-ε)s]·\Pr[s' ≥ s-1 \mid \mu < (1-ε)s]\\
	&≤ x_{(1-ε)s} + (1-x_{(1-ε)s})·{\Pr}_{s'' \sim \Bin(n-1,\frac{(1-ε)s}{n})}[s'' ≥ s-1]\\
	&\gtrsim x_{(1-ε)s} + (1-x_{(1-ε)s})·{\Pr}_{s'' \sim \Po((1-ε)s)}[s'' ≥ s-1]\\
	&≤ x_{(1-ε)s} + (1-x_{(1-ε)s})·2((1-ε)e^ε)^s
\end{align*}
The last step uses \cref{claim:custom-poisson-tail}, very conservatively accounting for a “$-1$” discrepancy with a factor of $2$.

We now make another minor case distinction, first assuming that $(1-ε)s ≤ \lambda/ε$. In that case $1-x_{(1-ε)s} = \exp(-\frac{\lambda/(s·(1-ε))-ε}{1-ε}) ≥ \exp(-\frac{\lambda/s}{(1-ε)²})$.
We can now turn to $1-\ha_s$ and quotient with $1-d_s$. We assume $ε$ is small enough such that $1/(1-ε)² ≤ 1+3ε$ and that $\lambda ≥ \lambda_0(ε)$ is large enough such that $2s((1-ε)e^ε)^s ≤ ε$ for all $s ≥ \lambda^{1/3}$.

\begin{align*}
	&& 1-\ha_s &≥ 1 - x_{(1-ε)s} - (1-x_{(1-ε)s})·2((1-ε)e^ε)^s\\
	&&&= (1 - x_{(1-ε)s})(1-2((1-ε)e^ε)^s)\\
	&⇒& \frac{1-\ha_s}{1-d_s}
	&≥ \frac{(1 - x_{(1-ε)s})}{1-d_s} (1-2((1-ε)e^ε)^s)
	 ≥ \frac{\exp(-\frac{\lambda/s}{(1-ε)²})}{\exp(-\lambda/s)} (1-2((1-ε)e^ε)^s)\\
	&&&≥ \exp(-3εA/s)(1-2((1-ε)e^ε)^s)\\
	&⇒& \Big(\frac{1-\ha_s}{1-d_s}\Big)^s& ≥
	\exp(-3εA)(1-2((1-ε)e^ε)^s)^s\\
	&&&≥ \exp(-3εA)(1-2s((1-ε)e^ε)^s) ≥ \exp(-3εA)·(1-ε) ≥ \exp(-4εA).
\end{align*}
The last step again assumes that $\lambda$ is large enough.

We still have to consider the case where $(1-ε)s > \lambda/ε$. By \cref{obs:size-bound} there are no buckets of expected size $(1-ε)s$ or larger, hence $x_{(1-ε)s} = 0$. This gives $1-\ha_s ≥ 1-2((1-ε)e^ε)^s$, meaning the above derivation only involves the less critical term that comes out as $1-ε ≥ \exp(-εA)$.

\myparagraph{(iii) Buckets of size $2e²A/ε < s ≤ 2\frac{\log n}{\log \log n}$}
Our use of $β_ε$ rather than $β_*$ renders this case quite easy. However, now that the bound on $s$ grows with $n$, shortcuts involving “for $n$ large enough” and the associated notation $≈$ would now be suspect and have to be replaced with careful arguments.

We wish to bound $\hat{α}_s$, the probability that at least $s-1$ further keys join a given key in its bucket. Any bucket has expected size at most $\lambda/ε$ by \cref{obs:size-bound}. We can hence argue
\begin{align*}
	\hat{α}_s ≤ \binom{n-1}{s-1} \Big(\frac{\lambda/ε}{n}\Big)^{s-1} ≤ \Big(\frac{ne}{s-1}\Big)^{s-1}\Big(\frac{\lambda/ε}{n}\Big)^{s-1}
	= \Big(\frac{eA/ε}{s-1}\Big)^{s-1} ≤ e^{-s+1}
\end{align*}
where the last step used $s ≥ e²A/ε+1$.
Applying \cref{lem:keys-bounded-differences} with $δ = n·e^{-s}$ gives $\Pr[n·α_s - n·\ha_s ≥ n·e^{-s}] ≤ \exp(-ne^{-2s}/(2s²))$. For $s ≤ 2\log n / \log \log n$ this probability is negligible. In particular, we have whp
\[ α_s = (α_s - \ha_s) + \ha_s \stackrel{\text{whp}}{≤} e^{-s} + \ha_s ≤ e^{-s} + e^{-s+1} ≤ e^{-s+2}.\]
In particular, $α_s$ is \emph{much} smaller than the deadline $d_s$, we can argue for instance by using that $1-e^{-x} ≥ x/2$ for $x ∈ [0,\frac 12]$ to get:
\[ d_s = 1-e^{-\lambda/s} ≥ \lambda/(2s) ≥ 1/s ≥ e^{-s+2} \stackrel{\text{whp}}{≥} α_s.\]

\myparagraph{(iv) Buckets of size at least $s = 2\frac{\log n}{\log \log n}$}
Again, the largest expected bucket size is at most $\lambda/ε$ by \cref{obs:derivative-and-bucket-sizes} and the probability that a key hashes to a specific bucket hence at most $\frac{\lambda/ε}{n}$. The probability $p_{≥s}$ that the bucket has size at least $s$ is therefore bounded by
\[ p_{≥s} ≤ \binom{n}{s} \Big(\frac{\lambda/ε}{n}\Big)^s ≤ \Big(\frac{ne}{s}\Big)^s·\Big(\frac{eA/ε}{n}\Big)^s = \Big(\frac{eA/ε}{s}\Big)^s = 2^{s·(\log(eA/ε)-\log s)}. \]
Plugging in $s = 2\frac{\log n}{\log \log n}$ gives $p_{≥s} = n^{-2+o(1)}$.  By a union bound, the probability that at least one bucket has size at least $s$ is at most $B·p_{≥s} ≤ n^{-1+o(1)}$. Hence, whp no such bucket exists and we have $α_s = 0$ as claimed.

\subsection{Why buckets should be processed in order of decreasing size}

\label{app:primary-bucket-ordering}

In this section we prove \cref{prop:largest-to-smallest}, restated here for convenience.
\begin{proposition}
    \label{prop:largest-to-smallest}
    To minimize expected construction time in a PTHash context, buckets should be processed in order from largest to smallest.
\end{proposition}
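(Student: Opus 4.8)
The plan is to show that swapping two adjacent buckets into non-increasing size order can only decrease (never increase) the expected total construction cost, and then conclude by a standard bubble-sort / exchange argument that the globally optimal order is the sorted one.

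First I would set up the exchange argument. Fix any processing order and consider two buckets processed consecutively, say a bucket of size $s$ immediately followed by a bucket of size $t$, with the load factor being $\alpha$ (i.e.\ $\alpha n$ keys already placed) just before the first of the two is processed. The buckets processed before these two, and those processed after, are unaffected by swapping $s$ and $t$ — crucially, the load factor $\alpha$ before the pair and the load factor $\alpha + \frac{s+t}{n}$ after the pair are both independent of the order within the pair. Hence only the cost contributed by these two buckets changes. Using the exact expression for $c_n(s,\alpha)$ from \cref{app:primary-bucket-ordering} (the ``simple uniform hashing'' cost, accounting for self-collisions as the load factor ramps from $\alpha$ to $\alpha + \frac{s-1}{n}$ across the $s$ keys of a bucket), the cost of the pair in the order $(s\text{ then }t)$ is
\[
c_n(s,\alpha) + c_n\!\big(t,\alpha+\tfrac{s}{n}\big),
\]
and in the order $(t\text{ then }s)$ it is $c_n(t,\alpha) + c_n\!\big(s,\alpha+\tfrac{t}{n}\big)$. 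So it suffices to prove the pointwise inequality: if $s \ge t$ then
\[
c_n(s,\alpha) + c_n\!\big(t,\alpha+\tfrac{s}{n}\big) \;\le\; c_n(t,\alpha) + c_n\!\big(s,\alpha+\tfrac{t}{n}\big),
\]
i.e.\ processing the \emph{larger} bucket first is no worse. Equivalently, defining $g(\alpha) := c_n(s,\alpha) - c_n(t,\alpha)$ for fixed $s \ge t$, the claim is $g(\alpha) \le g(\alpha + \frac{s-t}{n})$, i.e.\ $g$ is non-decreasing; so the real content is a monotonicity statement about how the cost gap between a large and a small bucket grows as the table fills up.

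The main obstacle, then, is this pointwise monotonicity of $g$. The intuition is clear: as the hash table gets fuller, each additional occupied slot hurts a size-$s$ bucket more than a size-$t$ bucket because the ``all keys land in free slots'' event involves a product of $s$ versus $t$ free-slot fractions, so the marginal effect of raising $\alpha$ is amplified more for the larger bucket — formally one expects $\frac{\partial}{\partial\alpha} c_n(s,\alpha)$ to be increasing in $s$. To make this rigorous I would work with the closed form for $c_n(s,\alpha)$; writing $r = \alpha n$ for the number of occupied slots, $c_n(s,r/n)$ is a ratio of falling factorials of the form $\big(\prod_{j=0}^{s-1}(n-r-j)\big)^{-1}$ times the expected number of keys tested per seed, and the increment $c_n(s,\tfrac{r+1}{n}) - c_n(s,\tfrac{r}{n})$ can be compared to the same increment for $t$ by examining the multiplicative factor $\frac{n-r-s}{n-r}$ type terms that appear when one more slot is consumed. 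A clean route is to first prove it for the simpler surrogate $\tilde c(s,\alpha) = (1-\alpha)^{-s}$ (from the lower bound in \cref{eq:bucket-cost-bound}), where $g(\alpha) = (1-\alpha)^{-s} - (1-\alpha)^{-t}$ is manifestly non-decreasing in $\alpha$ for $s \ge t$, and then carry the same discrete-convexity-in-$\alpha$ argument through for the exact $c_n$; the self-collision correction only replaces $1-\alpha$ by the average of $1-\alpha,\dots,1-\alpha-\frac{s-1}{n}$ in a way that preserves the ordering.

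Finally, once the adjacent-transposition inequality holds, the bubble-sort argument finishes the proof: starting from any order, repeatedly swapping an out-of-order adjacent pair (larger bucket earlier) never increases expected cost and terminates at the non-increasing order, which is therefore optimal. One should note this is an inequality on \emph{expectations for a fixed multiset of bucket sizes}; since the sizes $(s_i)$ are themselves random, we are really arguing conditionally on the realized sizes, and the unconditional statement follows by taking expectations over the sizes. Ties among equal-sized buckets are irrelevant since the pairwise inequality is an equality there.
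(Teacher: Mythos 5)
Your outer framework is the right one and matches the paper's: condition on the realized multiset of bucket sizes, reduce optimality to a pairwise adjacent-transposition inequality, and finish by a bubble-sort argument. The pairwise inequality you write down,
\[
c_n(s,\alpha) + c_n\!\big(t,\alpha+\tfrac{s}{n}\big) \;\le\; c_n(t,\alpha) + c_n\!\big(s,\alpha+\tfrac{t}{n}\big) \qquad (s\ge t),
\]
is exactly what is needed. The problem is in the next step: your reformulation of this as ``$g(\alpha)\le g(\alpha + \frac{s-t}{n})$ where $g(\alpha)=c_n(s,\alpha)-c_n(t,\alpha)$'' is not a valid equivalence. Rearranging the displayed inequality gives
\[
c_n(s,\alpha) - c_n(t,\alpha) \;\le\; c_n\!\big(s,\alpha+\tfrac{t}{n}\big) - c_n\!\big(t,\alpha+\tfrac{s}{n}\big),
\]
and the right-hand side is \emph{not} $g$ evaluated at a single shifted point: the two cost terms are evaluated at different loads $\alpha+\tfrac{t}{n}\ne \alpha+\tfrac{s}{n}$. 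Worse, the needed statement is strictly \emph{stronger} than ``$g$ non-decreasing'': since $c_n(t,\cdot)$ is increasing in the load and $s\ge t$, we have $c_n(s,\alpha+\tfrac{t}{n}) - c_n(t,\alpha+\tfrac{s}{n}) \le c_n(s,\alpha+\tfrac{t}{n}) - c_n(t,\alpha+\tfrac{t}{n}) = g(\alpha+\tfrac{t}{n})$, so knowing $g(\alpha)\le g(\alpha+\tfrac{t}{n})$ leaves you on the wrong side of the target. Hence even proving $g$ monotone (which is easy for the surrogate $(1-\alpha)^{-s}$, as you note) does not close the gap. Your subsequent remark that the exact $c_n$ ``only replaces $1-\alpha$ by the average of $1-\alpha,\dots,1-\alpha-\tfrac{s-1}{n}$'' is also not accurate --- the exact expected cost is a sum $\sum_{j=1}^{s}\prod_{l=j}^{s}p_l^{-1}$ of products of slot-availability probabilities, not a power of any average --- and since you only control $c_n$ between a lower and an upper bound, an inequality proved for the lower-bound surrogate does not transfer to the exact cost in any case.

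The insight you are missing, and which the paper leans on, is that in the exact model the two orderings do not produce two unrelated cost functions: when you concatenate the per-key success probabilities across both buckets, they form the \emph{same} strictly decreasing sequence $\tfrac{n-\ell}{n},\tfrac{n-\ell-1}{n},\dots,\tfrac{n-\ell-s-t+1}{n}$ in either order; only the split point between the two buckets differs. The pairwise inequality is therefore a statement about splitting one fixed decreasing sequence of geometric retry probabilities at two different positions, which the paper isolates as \cref{lem:juggler} and proves with the identities \refRelObs{obs:w-formula}{}--\refRelObs{obs:decomposing}{} for the function $w(p_1,\dots,p_k)$. Without spotting that shared sequence, one is left trying to compare two abstract functions $c_n(s,\cdot)$ and $c_n(t,\cdot)$ head-on, which is where your sketch gets stuck. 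I would recommend replacing the ``monotone $g$'' step by this observation and then either invoking the paper's \cref{lem:juggler} or proving its content directly.
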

Even though the claim is very intuitive and CHD \cite{belazzougui2009hash} and PTHash \cite{pibiri2021pthash} implicitly assume its truth, the argument is surprisingly subtle and involves the following random process.

Let $k ∈ ℕ$, $p₁,…,pₖ ∈ (0,1)$ and let $\{0,…,k\}$ be a set of \emph{states}. When taking a \emph{step} in state $0 ≤ i < k$, the successor state is state $i+1$ with probability $p_{i+1}$ and state $0$ with probability $1-p_{i+1}$. Let $w(p₁,…,p_k)$ be the expected number steps needed to reach state $k$ from state $0$.

\begin{lemma}
    \label{lem:juggler}
    Assume $1 > p₁ > p₂ > … > pₖ > 0$ and $1 ≤ i < \frac{k}{2}$. Then
    \[
        w(p₁,…,p_{k-i}) + w(p_{k-i+1},…,pₖ) < w(p₁,…,p_i) + w(p_{i+1},…,pₖ).
    \]
\end{lemma}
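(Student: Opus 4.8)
### Proof Plan

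The plan is to establish a general interchange/exchange lemma: a function $w$ on prefixes of the sequence $(p_1,\dots,p_k)$ behaves super-additively in a way that favors splitting off \emph{large} initial segments when probabilities are sorted in decreasing order. First I would derive a clean closed (or recursive) form for $w(p_1,\dots,p_\ell)$. By first-step analysis on the state $i$, one gets $w(q_1,\dots,q_m) = \sum_{j=1}^{m}\frac{1}{q_1 q_2 \cdots q_j}$; equivalently, writing $\Pi_j = q_1\cdots q_j$, we have $w = \sum_{j=1}^m 1/\Pi_j$. (Derivation: let $e_i$ be the expected steps to reach state $k$ from state $i$; then $e_i = 1 + p_{i+1}e_{i+1} + (1-p_{i+1})e_0$, and solving the linear recursion telescopes into the stated sum.) This turns the inequality in \cref{lem:juggler} into an inequality about sums of reciprocals of partial products, which is purely a real-variable statement.

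Next I would reformulate the claim. Let $S(a,b) := w(p_a,\dots,p_b) = \sum_{j=a}^{b} \prod_{t=a}^{j} p_t^{-1}$ for $a \le b$. The lemma asserts $S(1,k-i) + S(k-i+1,k) < S(1,i) + S(i+1,k)$ for $1 \le i < k/2$, i.e. moving the split point from $i$ to $k-i$ (past the midpoint) strictly decreases the total. I would prove the stronger, cleaner monotonicity statement: the function $i \mapsto S(1,i) + S(i+1,k)$ is strictly \emph{decreasing} in $i$ on $\{0,1,\dots,k\}$ — wait, that cannot be literally true at both ends, so more carefully I expect it to be strictly decreasing up to some point and increasing after, with a symmetry/convexity structure; but for the lemma we only need that value at $i$ exceeds value at $k-i$ whenever $i<k/2$. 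The natural route is a one-step comparison: show $\big(S(1,i)+S(i+1,k)\big) - \big(S(1,i+1)+S(i+2,k)\big) > 0$ whenever the "left" block is shorter than the "right" block, i.e. whenever $2i < k-? $, and then telescope. Concretely, the difference of consecutive terms is $S(i+1,k) - S(i+2,k) - \big(S(1,i+1) - S(1,i)\big)$. Here $S(1,i+1)-S(1,i) = \prod_{t=1}^{i+1} p_t^{-1}$, a single new term. And $S(i+1,k) - S(i+2,k)$ equals $p_{i+1}^{-1}\big(1 + p_{i+2}^{-1} + p_{i+2}^{-1}p_{i+3}^{-1} + \cdots\big) - \big(p_{i+2}^{-1} + p_{i+2}^{-1}p_{i+3}^{-1}+\cdots\big)$, which I would factor and bound below using that each $p_t < 1$ and, crucially, that the probabilities are sorted: $p_1 > p_2 > \cdots$, so the reciprocals $p_t^{-1}$ appearing in the "left" partial product are all smaller than the corresponding ones on the right side at matching depths.

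The heart of the argument — and the step I expect to be the main obstacle — is making that "matching depths" comparison rigorous. The term-by-term correspondence should pair $\prod_{t=1}^{j} p_t^{-1}$ (a depth-$j$ product from the left block using the \emph{smallest} probabilities) against $\prod_{t=i+1}^{i+j} p_t^{-1}$ or similar depth-$j$ products from the right block using \emph{larger-index, hence smaller} probabilities, so each right-block reciprocal dominates the matching left-block reciprocal; summing these dominations gives the strict inequality, provided the right block is at least as long as the left so every left term has a partner. This is exactly why the hypothesis $i < k/2$ enters. I would set up the bijection between the index sets $\{1,\dots,k-i\}$ (left, length $k-i$) and make it absorb the shorter list, or alternatively prove the telescoped one-step inequality only in the regime where the prefix is the shorter of the two pieces and then chain steps $i \to i+1 \to \cdots \to k-i$, at each stage staying in the "prefix shorter than suffix" regime. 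I would also double-check the boundary/degenerate conventions ($S(a,b)=0$ when $a>b$) so the endpoint cases of the lemma are covered, and note where strictness comes from (at least one $p_t^{-1}$ strict inequality, which holds since all $p_t$ are distinct). Once \cref{lem:juggler} is in hand, the application to \cref{prop:largest-to-smallest} is a short induction: any ordering that is not sorted decreasingly contains an adjacent inversion, and \cref{lem:juggler} (in its $i=1$ or adjacent-swap form) shows swapping it strictly reduces expected construction time, so the optimum is the fully sorted order.
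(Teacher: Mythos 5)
Your closed form for $w$ is incorrect: solving $e_i = 1 + p_{i+1}e_{i+1} + (1-p_{i+1})e_0$ with $e_k=0$ gives $w(q_1,\dots,q_m)=\sum_{j=1}^m \frac{1}{q_j q_{j+1}\cdots q_m}$ (reciprocal \emph{suffix} products), not $\sum_{j=1}^m \frac{1}{q_1\cdots q_j}$. Already at $m=2$ the correct value $\frac{1}{q_2}+\frac{1}{q_1 q_2}$ differs from your $\frac{1}{q_1}+\frac{1}{q_1 q_2}$. This breaks the simplification you lean on, ``$S(1,i+1)-S(1,i)$ is a single new term'': with the suffix form, appending a parameter on the right rescales every term, since $S(a,b)=\frac{1}{p_b}\bigl(1+S(a,b-1)\bigr)$; it is removing a parameter from the \emph{left} that peels off a single term, $S(a,b)=\frac{1}{p_a\cdots p_b}+S(a+1,b)$.

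Independently of that, your telescoping route $f(i)>f(i+1)>\cdots>f(k-i)$ with $f(j)=S(1,j)+S(j+1,k)$ cannot ``stay in the prefix-shorter-than-suffix regime'' as you hope: the last step uses split point $j=k-i-1$, and $k-i-1<k/2$ forces $i>k/2-1$, which together with $i<k/2$ leaves essentially no room. The chain must cross the midpoint, and there your term-by-term domination breaks because the block you want to dominate has \emph{fewer} summands; for $k=4$, $i=1$ the needed step $f(2)>f(3)$ pits one summand from the right block against two from the left, so no injection exists to pair them. The paper avoids all of this via a short algebraic argument based on three observations about the suffix-product formula: monotonicity of $w$ in each parameter, maximality of $w$ under the non-increasing ordering, and the decomposition $w(p_1,\dots,p_k)=\frac{1}{p_{j+1}\cdots p_k}\,w(p_1,\dots,p_j)+w(p_{j+1},\dots,p_k)$. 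It decomposes $w(p_1,\dots,p_{k-i})$, uses monotonicity to replace $w(p_1,\dots,p_i)$ by the larger $w(p_{k-i+1},\dots,p_k)$ (each $p_j>p_{k-i+j}$), re-assembles the result into a single $w$ of a permuted parameter list via the decomposition read in reverse, and finishes with the extremality of the sorted order --- no counting or pairing of summands is needed.
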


Before proving \cref{lem:juggler}, let us check that it implies \cref{prop:largest-to-smallest}.

\begin{proof}[Proof of \cref{prop:largest-to-smallest}.]
    Assume that the sequence of bucket sizes in processing order is $b₁,…,b_B$ and that $b_i < b_{i+1}$ for some $1 ≤ i < B$. We will show that by switching the order of these two buckets, the expected work (number of hash function evaluations) decreases. For this, let $ℓ = \sum_{j = 1}^{i-1} b_j$ be the number of keys that are handled before the $i$th bucket. The work for buckets $i$ and $i+1$ is then
    \[w(\tfrac{n-ℓ}{n},\tfrac{n-ℓ-1}{n},…,\tfrac{n-ℓ-b_i+1}{n}) + w(\tfrac{n-ℓ-b_i}{n},\tfrac{n-ℓ-b_i-1}{n},…,\tfrac{n-ℓ-b_i-b_{i+1}+1}{n}).\]
    When processing the two buckets in swapped order the expected work becomes
    \[w(\tfrac{n-ℓ}{n},\tfrac{n-ℓ-1}{n},…,\tfrac{n-ℓ-b_{i+1}+1}{n}) + w(\tfrac{n-ℓ-b_{i+1}}{n},\tfrac{n-ℓ-b_{i+1}-1}{n},…,\tfrac{n-ℓ-b_i-b_{i+1}+1}{n}).\]
    By \cref{lem:juggler} this is less. Note that the expected work needed for other buckets is unchanged because their sizes and the table load when they are processed is unchanged.
\end{proof}

\myparagraph{Intuition for \cref{lem:juggler}} Assume a juggler knows, for some $a,b∈ℕ$, a routine $S$ of $a$ simple throws, a routine $D$ of $a$ difficult throws, and a routine $M$ of $b$ throws of intermediate difficulty. Performing a routine means attempting all throws in sequence, starting over after the first unsuccessful throw, and repeating until all throws succeed. Let $t(R)$ denote the expected number of throws when performing a routine $R$.
It is intuitively plausible that
\[
    t(S∘M)+t(D) < t(S) + t(D∘M) < t(S)+t(M∘D)
\]
where “$∘$” denotes concatenation of routines. The first inequality means that appending $M$ to a difficult routine rather than the simple routine makes things more costly overall. The second inequality means that delaying difficult throws until the end of a routine increases its cost because failures tend to happen \emph{after} the easier throws have already taken place.

Proof of \cref{lem:juggler} formalizes these insights.

\myparagraph{Simple Observations} We make four observations about the function $w(p₁,…,pₖ)$, each time justifying them briefly after stating them. The first serves as an alternative definition of $w(p₁,…,pₖ)$.
\begin{equation}
    w(p₁,…,pₖ) = \sum_{i = 1}^{k} \frac{1}{p_i·…·p_k}.
    \label{obs:w-formula}
\end{equation}
This holds because the number of visits to state $i$ for $1 ≤ i < k$ has a geometric distribution with parameter $p_i·…·p_k$ and hence has expectation $\frac{1}{p_i·…·p_k}$.
\begin{equation}
    w(p₁,…,pₖ) \text{is monotonically decreasing in all of its parameters}.
    \label{obs:w-monotonic}
\end{equation}
This follows from \cref{obs:w-formula} because each summand is non-increasing in all parameters and the first summand is decreasing in all parameters.
\begin{equation}
    \text{if $p_j < p_{j+1}$ for $1 ≤ j < k$ then } w(p₁,…,pₖ) < w(p₁,…,p_{j-1},p_{j+1},p_{j},p_{j+2},…,pₖ).
    \label{obs:w-swapping-adjacent}
\end{equation}
In other words, when swapping two adjacent parameters (at indices $j$ and $j+1$), then having the larger parameter further to the left yields a larger value of $w$. This follows from \cref{obs:w-formula} because the summand with $i = j+1$ is then larger, and all other summands are the same.
\begin{gather}
    \text{$w(p₁,…,pₖ)$ is maximized if the parameters appear in non-increasing order,}\label{obs:w-monotonic-is-extremal}\\
    \text{and any other ordering of the same set of parameters yields a smaller value.}\notag
\end{gather}
This follows by iterating \cref{obs:w-swapping-adjacent} until the parameters are sorted in non-increasing order.

\begin{equation}
    \text{for any $1 ≤ j < k$: } w(p₁,…,pₖ) = \tfrac{1}{p_{j+1}·…·pₖ}w(p₁,…,p_j)+ w(p_{j+1},…,pₖ).
    \label{obs:decomposing}
\end{equation}
This follows from \cref{obs:w-formula} by separating the sum into the summands for $i > j$ and those for $i ≤ j$.
We are now ready to proof \cref{lem:juggler}.

\begin{proof}[\cref{lem:juggler}.]
    Let $1 > p₁ > p₂ > … > pₖ > 0$ and $1 ≤ i < \frac{k}{2}$. Then
    \begin{align*}
        w(p₁,…,&p_{k-i}) + w(p_{k-i+1},…,pₖ)\\
        &\refRelObs{obs:decomposing}{=}
        \tfrac{1}{p_{i+1}·…·p_{k-i}}w(p₁,…,p_i)+ w(p_{i+1},…,p_{k-i}) + w(p_{k-i+1},…,pₖ)\\
        &=\quad w(p₁,…,p_i) + \big(\tfrac{1}{p_{i+1}·…·p_{k-i}}-1\big)w(p₁,…,p_i)\\
        &\phantom{spaaaaaace} + w(p_{i+1},…,p_{k-i}) + w(p_{k-i+1},…,pₖ)\\
        &\refRelObs{obs:w-monotonic}{<}
        w(p₁,…,p_i) + \big(\tfrac{1}{p_{i+1}·…·p_{k-i}}-1\big)w(p_{k-i+1},…,pₖ)\\
        &\phantom{spaaaaaace} + w(p_{i+1},…,p_{k-i})+ w(p_{k-i+1},…,pₖ)\\
        &=\quad w(p₁,…,p_i) + \tfrac{1}{p_{i+1}·…·p_{k-i}} w(p_{k-i+1},…,pₖ)  + w(p_{i+1},…,p_{k-i})\\
        &\refRelObs{obs:decomposing}{=}
        w(p₁,…,p_i) + w(p_{k-i+1},…,pₖ,p_{i+1},…,p_{k-i})\\
        &\refRelObs{obs:w-monotonic-is-extremal}{<} w(p₁,…,p_i) + w(p_{i+1},…,pₖ).\qedhere
    \end{align*}
\end{proof}

\subsection{General bounds for Perfect Hashing with Bucket Placement}

\label{sec:bucket-placement-bounds}

In this section we consider perfect hashing through bucket placement in general and without any partitioning (see introduction). 
\begin{proposition}
	\label{prop:bucket-placement-general-bounds}
	Consider perfect hashing through bucket placement with $n$ keys and $B := n/λ$ buckets for some $2 ≤ λ = o(n/\log n)$. Let $T$ be the sum of resulting seeds values and $S$ the number of bits when encoding the seeds using Elias $δ$-encoding. Then
	\begin{enumerate}[(i)]
		• $𝔼[T] = Ω(ne^{λ}/λ)$
		• $𝔼[S] = n(\log₂ e +𝒪(\frac{1}{λ}\log λ))$.
	\end{enumerate}
\end{proposition}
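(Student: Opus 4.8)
The plan is to prove the two parts separately, as they are essentially independent. For part (i), the lower bound on $\mathbb{E}[T]$, I would argue as follows. Consider the buckets in the order they are placed and let $\ell$ be the index of the last bucket whose placement begins while the load factor is still at most $1-\frac{1}{2}$ (say). Because we place $B = n/\lambda$ buckets and the total number of keys is $n$, at least $B/2$ buckets must be placed after the load factor reaches $1/2$; actually a cleaner way: consider the buckets placed while the load factor lies in some fixed interval, e.g.\ $[1-e^{-1}, 1-e^{-2}]$. Since every bucket has expected size $\lambda$ and with high probability all bucket sizes are $O(\lambda + \log n) = O(\lambda)$ under the hypothesis $\lambda = o(n/\log n)$, the number of keys inserted while the load factor is in this window is $\Theta(n)$, hence the number of buckets placed in this window is $\Theta(n/\lambda)$. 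Each such bucket has size $s \le C\lambda$ whp and is inserted at load factor $\alpha \ge 1-e^{-1}$, so by the lower bound of \eqref{eq:bucket-cost-bound} its expected number of seed trials is at least $(1-\alpha)^{-s}$. The subtlety is that we want a bound of the form $e^{\Omega(\lambda)}$, and $(1-\alpha)^{-s}$ with $\alpha$ bounded away from $1$ and $s = \Theta(\lambda)$ gives exactly $e^{\Omega(\lambda)}$ — but this requires that a constant fraction of the $\Theta(n/\lambda)$ buckets actually have size $\Omega(\lambda)$, which is true because the average size is $\lambda$ and sizes are $O(\lambda)$ whp, so at least $\Omega(n/\lambda)$ of them have size $\ge \lambda/2$. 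Summing, $\mathbb{E}[T] \ge \Omega(n/\lambda) \cdot e^{\Omega(\lambda)}$; replacing $e^{\Omega(\lambda)}$ by the weaker $e^{\lambda}/\mathrm{poly}$ requires care about the constant in the exponent — the safest route is to restrict to the window where the load factor is close enough to $1$ that $(1-\alpha)^{-s} \ge e^{\lambda(1-o(1))}$ for the buckets of size $\ge \lambda(1-o(1))$, which one can always arrange by picking the window $[1-e^{-1}, 1-e^{-1}+\delta]$ and noting that a positive fraction of keys still fall there. I would also fold in the (usually dominant, but here comparable) contribution from the single-size-$1$ buckets if convenient, but the $e^\lambda/\lambda$ bound comes from the mid-range buckets.

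For part (ii), the space bound, the lower bound $\mathbb{E}[S] \ge n\log_2 e$ follows from the general information-theoretic lower bound for MPHFs cited in the introduction (\cite{mehlhorn1982program}): any data structure that implements an MPHF on $n$ keys must use at least $\log_2 e \cdot n - o(n)$ bits, and the seed encoding together with $\gamma$, $n$, $B$ fully determines the MPHF, so $\mathbb{E}[S] \ge n\log_2 e - o(n)$. For the upper bound I would bound the Elias $\delta$-code length of a seed $p$ by $\log_2 p + 2\log_2\log_2 p + O(1) \le \log_2 p + O(\log\log p + 1)$, and then bound $\mathbb{E}[\log_2 p_i]$ for the $i$th bucket placed at load factor $\alpha_i$. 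Since the number of seeds tried for a bucket of size $s$ at load $\alpha$ is a geometric random variable with success probability at least $(1-\alpha')^{s}$ (where $\alpha' = \alpha + (s-1)/n$), we have $\mathbb{E}[\log_2 p_i] \le \log_2 \mathbb{E}[p_i] + O(1) \le s_i \log_2\frac{1}{1-\alpha_i'} + O(1)$ by Jensen, and summing over all buckets in processing order, $\sum_i s_i \log_2\frac{1}{1-\alpha_i'}$ telescopes (up to lower-order terms) to $\sum_{j=1}^{n}\log_2\frac{n}{n-j} = \log_2\frac{n^n}{n!} = n\log_2 e - O(\log n)$ by Stirling. The $\log\log$ terms contribute: each is $O(\log(s_i \log\frac{1}{1-\alpha_i'})) = O(\log\log n + \log\lambda)$ per bucket, and there are $B = n/\lambda$ buckets, giving $O(\frac{n}{\lambda}(\log\log n + \log\lambda)) = O(\frac{n}{\lambda}\log\lambda)$ under $\lambda = o(n/\log n)$, hence $\mathbb{E}[S] = n\log_2 e + O(\frac{n}{\lambda}\log\lambda)$. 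The $O(1)$ slack from Jensen and from the self-collision correction $\alpha_i' $ vs.\ $\alpha_i$ must also be checked to be absorbed in the error term; the self-collision shift changes each logarithm by $O(s_i/n)$ which sums to $O(1)$, negligible.

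\textbf{Main obstacle.} The delicate point is part (i): turning the crude $e^{\Omega(\lambda)}$ into a bound with the \emph{right constant} in the exponent, i.e.\ genuinely $\Omega(e^{\lambda}/\lambda)$ per bucket rather than $\Omega(e^{c\lambda}/\lambda)$ for some $c < 1$. This forces one to look at buckets of size very close to $\lambda$ placed at load factor very close to $1 - e^{-1}$, and to control the joint concentration of (a) the number of such buckets and (b) the load factor at the time they are placed. One must argue that the load factor, as a function of processing position, is concentrated (this uses that each bucket contributes $\Theta(\lambda)$ keys out of $n$ and $\lambda = o(n/\log n)$, so the position/load-factor relationship is Lipschitz-stable under resampling one hash), and that a positive-probability fraction of buckets simultaneously have near-average size and are placed in the critical load window. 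A clean way to sidestep fighting over the constant is to note that the claim is only $\Omega(ne^\lambda/\lambda)$, so one can afford to lose a $(1-o_\lambda(1))$ factor in the exponent and still conclude, since $e^{\lambda(1-o(1))} = e^{\lambda}\cdot e^{-o(\lambda)}$ is still $\Omega(e^\lambda / \lambda)$ only if the $o(\lambda)$ is actually $O(\log\lambda)$ — so in fact one \emph{does} need the exponent correct up to an additive $O(\log\lambda)$, meaning the concentration arguments genuinely have to be run. This is the part I expect to require the most care.
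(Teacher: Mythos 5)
Your plan for part (i) takes a genuinely different route from the paper, and one that hits exactly the obstacle you correctly anticipate at the end. You argue locally: window on a load-factor band near $1-1/e$, argue concentration of the load-vs.-position curve, and try to extract enough buckets of size $\approx\lambda$ placed at the right moment to get the exponent exactly $\lambda$. This can probably be made to work, but the care required is substantial. The paper sidesteps all of it with a single global identity: for \emph{any} realized sequence of bucket sizes, the placement-success probabilities $p_1,\dots,p_B$ deterministically satisfy $\prod_{i=1}^B p_i = n!/n^n$, because the left side is precisely the probability that a uniformly random map $[n]\to[n]$ is a bijection. Then $\mathbb{E}[T] = \sum_i 1/p_i$, the product of the terms $1/p_i$ is fixed at $n^n/n!$, and AM--GM gives $\mathbb{E}[T] \ge B\bigl(n^n/n!\bigr)^{1/B}$; Stirling finishes. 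No concentration, no windowing, no struggle over the leading constant in the exponent — the bound holds conditionally on the bucket sizes, so randomness becomes invisible. This is the idea your plan is missing; as sketched, your argument would deliver at best $\Omega(Be^{c\lambda})$ for some $c<1$ without a good deal of extra work.

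Your plan for part (ii) is essentially the paper's: Elias $\delta$ length $\approx \log_2\sigma + 2\log_2\log_2\sigma + O(1)$, Jensen to pass from $\mathbb{E}[\log\sigma_i]$ to $\log\mathbb{E}[\sigma_i]=\log(1/p_i)$, and Stirling to bound $\sum_i\log_2(1/p_i)$ by $n\log_2 e$. However, there is a gap in how you handle the $\log\log$ terms. You bound each one individually by $O(\log\log n + \log\lambda)$ and then claim the total is $O\bigl(\tfrac{n}{\lambda}\log\lambda\bigr)$ ``under $\lambda=o(n/\log n)$.'' That deduction needs $\log\log n = O(\log\lambda)$, which fails when $\lambda$ is constant — and $\lambda\ge 2$ is allowed by the hypotheses. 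The paper applies Jensen to the whole second sum as well: since $\sum_i\log_2(1/p_i)=\log_2(n^n/n!)\le n\log_2 e$, the average of $\log_2(1/p_i)$ is at most $\lambda\log_2 e$, hence $\sum_i\log_2\bigl(2+\log_2(1/p_i)\bigr)\le B\log_2\bigl(2+\lambda\log_2 e\bigr) = O\bigl(\tfrac{n}{\lambda}\log\lambda\bigr)$. The point is that the few size-one buckets processed last have $\log(1/p_i)$ as large as $\log n$, but they are swamped in the average by the many buckets with $\log(1/p_i)=O(\lambda)$; a per-bucket worst-case bound cannot see this cancellation.
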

Note: \lipicsLabel{(i)} implies that the expected construction time per key of $e^{λ(1+ε)}$ achieved using $β_ε$ in \cref{s:bucketSizes} (when ignoring buckets of size $1$) is almost optimal. \lipicsLabel{(ii)} means that the space consumption of perfect hashing through bucket placement approaches the lower bound $n\log₂(e)$ for large $λ$, in the sense that it is $n(\log₂ e+o_{λ→∞}(1))$, \emph{regardless} of the bucket assignment function that is used.

\begin{proof}
	The first step of perfect hashing through bucket placement is to map the $n$ keys into $B$ buckets. The sizes $s₁,…,s_B$ of these buckets are random variables. However, this arguments makes no use of their distributions and works for arbitrary bucket sizes. We focus on the second step where the buckets are placed one after the other using brute force. Assume the buckets are indexed in placement order. The $i$th bucket is then associated with the probability $p_i$ that randomly hashing a set of $s_i$ keys into a table does not cause any collision assuming $s₁+…+s_{i-1}$ out of $n$ positions are already occupied.
	
	A useful observation is that the product of all $p_i$ is the probability that a random function of $n$ keys to $n$ positions is a bijection, hence
	\[ \prod_{i=1}^B p_i = \frac{n!}{n^n} \text{ or equivalently } \prod_{i=1}^B 1/p_i = \frac{n^n}{n!}. \]
	If the $i$th seed value is $σ_i$ then its distribution is $σ_i \sim \mathrm{Geom}(p_i)$ implying $𝔼[σ_i] = 1/p_i$. Hence
	\[
		𝔼[T] = 𝔼\big[\sum_{i = 1}^B σ_i\big] = \sum_{i = 1}^B 𝔼[σ_i] = \sum_{i = 1}^B 1/p_i ≥ B·\Big(\frac{n^n}{n!}\Big)^{1/B}.
	\]
	The last inequality uses that the sum of the side lengths of a $B$-dimensional cuboid with volume $V = \frac{n^n}{n!}$ is minimized by the $B$-dimensional cube where all $B$ side lengths are equal to $V^{1/B}$. Continuing with Stirling's approximation yields
	\[
		𝔼[T] ≥ B·\Big(\frac{n^n}{n!}\Big)^{1/B} = B\Big(\frac{e^n}{𝒪(\sqrt{n})}\Big)^{1/B} = \tfrac{n}{λ}e^{λ}n^{-𝒪(λ/n)} = Ω(ne^{λ}/λ),
	\]
	where the last step used $λ = o(n/\log n)$. This proves \lipicsLabel{(i)}.
	
	To encode a number $x ∈ ℕ$, Elias $δ$-coding requires $⌊\log₂(x)⌋+2⌊\log(⌊\log₂(x)⌋+1)⌋+1 ≤ \log₂(x) + 2\log(2+\log x)+1$ bits. Using that $𝔼[\log₂ X] ≤ \log₂ 𝔼[X]$ for any non-negative random variable $X$ by Jensen's inequality allows us to bound
	\begin{align*}
		𝔼[S] &= 𝔼\Big[\sum_{i = 1}^B \big(\log₂(σ_i) + 2\log₂(2+\log₂ σ_i)+1\big)\Big]\\
		&≤ \sum_{i = 1}^B \log₂ 𝔼[σ_i] + 2\sum_{i = 1}^B \log₂(2+\log₂ 𝔼[σ_i]) + B\\
		&= \sum_{i = 1}^B \log₂(1/p_i) + 2\sum_{i = 1}^B \log₂(2+\log₂ 1/p_i) + B.
	\end{align*}
	This first sum can be bounded by $\sum_{i = 1}^B \log₂(1/p_i) = \log₂ \frac{n^n}{n!} ≤ n\log₂ e$ using Stirling's approximation. The only thing left to prove is that the second sum is subsumed by the error term $𝒪(n\frac{\log₂ λ}{λ})$ in our stated bound. We again using Jensen's inequality and the fact that $\sum_{i = 1}^B \log₂ 1/p_i = \log₂ \frac{n^n}{n!} ≤ n \log₂ e$.
	\begin{align*}
		\sum_{i = 1}^B \log₂(2+\log₂ 1/p_i) &≤ B·\log₂\Big(2+\tfrac{1}{B} \log₂ \tfrac{n^n}{n!} \Big)\\
		& ≤ B·\log₂\big(2+ \tfrac{n}{B} \log₂ e \big)
		  = \tfrac{n}{λ}·\log₂\big(2+λ \log₂ e \big)
		  = 𝒪(n\tfrac{\log₂λ}{λ}).
	\end{align*}
	The last step uses $λ ≥ 2$.
\end{proof}

\section{Bucket assignment for non-minimal PHF}
\label{s:nonminimal}

Assume we wish to construct a \emph{non-minimal} perfect hash function using the “perfect hashing through bucket placement” framework. Hence, assume the range of the hash function is $[m]$ and $n = αm$ for $α ∈ (0,1)$.

In \cref{s:bucketSizes} we only discussed the case $α = 1$. There is, however, a natural way to obtain from a bucket assignment function $γ$ for $α = 1$ a bucket assignment function $γ_α$ for the non-minimal case. Namely, we can define $γ_α(x) = \frac{1}{γ(α)}γ(αx)$, which rescales whatever $γ$ does on its subdomain $[0,α]$. The idea is that, in order to fill a table to load factor $1$, you first need to fill it to load factor $α$, and if $γ$ is optimal overall then it should implicitly contain an optimal strategy for this first phase. Since $γ$ defines on $[0,α]$ how an $α$-fraction of the keys is assigned to buckets of largest expected size (presumably processed first), this is where we should look. This argument remains heuristic because the buckets of largest \emph{expected} size may not turn out to be the buckets of largest \emph{actual} size.

\section{Implementation Details}

\subsection{Details on the Bucket Assignment in practice}\label{s:bucketerimpl}

Our implementation of bucket assignment functions differs from our theoretical results in two minor ways.

\myparagraph{Perturbation} Recall that $β_*:[0,1]→[0,1]$ was, modulo certain qualifications, identified as the optimized bucket assignment function in \cref{s:bucketSizes}. One of the qualifications was that we actually analyze a slightly perturbed function $β_ε(x) := εx+(1-ε)β_*(x)$. This limits expected bucket sizes to $λ/ε$, which helped with bounding construction times of large buckets. Limiting the bucket sizes is also useful in practice to reduce self-collisions inside the small partitions. Concretely we choose $ε=\frac{\lambda}{5\sqrt{P}}$, for an expected partition size of $P$. Without this perturbation ($ε=0$), running times are noticeably worse.

\myparagraph{Tabulating values} The functions $β_*$ and $γ_P$ involve expensive arithmetic operations such as a logarithm.
We achieve a significant speedup by tabulating $γ_P$ for $2048$ discrete values of $x$ and interpolating linearly.

\section{Additional Experimental Data}
\label{app:additionalExperiments}
In this section, we give additional experiments that further illustrate the data in \cref{s:experiments}.

\begin{figure}[t]
	\centering
	\begin{subfigure}[b]{0.49\textwidth}
		\centering
\begin{tikzpicture}
  \begin{axis}[
      yticklabel style={
        /pgf/number format/fixed,
        /pgf/number format/precision=5
      },
      ybar stacked,
      bar width=1.7,
      scaled y ticks=false,
      ylabel={Construction (ns/key)},
      xlabel={Average Partition Size},
      width=5.5cm,
      height=3.5cm,
      xmode=log,
      legend style={at={(0.37,0.7)}, anchor=south},
      every axis plot post/.append style={draw=none},
    ]
    \addplot coordinates { (3072,5.26232) (6144,4.86939) (12288,4.64035) (24576,4.38877) (49152,3.98453) (98304,4.15533) (196608,3.90612) (393216,4.06853) (786432,3.86302) (1572864,3.99217) (3145728,4.33797) (6291456,4.42901) (12582912,4.25987) };
    \addlegendentry{Partitioning};

    \addplot coordinates { (3072,7.19778) (6144,7.51555) (12288,7.8895) (24576,9.75061) (49152,11.31589) (98304,11.77864) (196608,12.30357) (393216,12.75175) (786432,13.2938) (1572864,13.89804) (3145728,14.95732) (6291456,14.47989) (12582912,13.84202) };
    \addlegendentry{Mapping/Ordering};

    \addplot coordinates { (3072,21.53378) (6144,21.65779) (12288,21.65687) (24576,26.30323) (49152,29.39722) (98304,30.19319) (196608,30.42855) (393216,30.81407) (786432,32.12753) (1572864,33.57049) (3145728,36.67234) (6291456,44.34853) (12582912,51.12865) };
    \addlegendentry{Searching};

    \addplot coordinates { (3072,2.23649) (6144,2.18562) (12288,2.39071) (24576,2.68862) (49152,1.7901) (98304,1.40432) (196608,1.44967) (393216,1.07865) (786432,0.8062) (1572864,0.73834) (3145728,0.72759) (6291456,0.78938) (12582912,0.67885) };
    \addlegendentry{Encoding};
  \end{axis}
\end{tikzpicture}
		\caption{Time for different construction steps when varying the partition size. }
		\label{fig:partitionsStackedPlot}
	\end{subfigure}
	\hfill
	\begin{subfigure}[b]{0.49\textwidth}
		\centering
\begin{tikzpicture}

\begin{axis}[
	xlabel={Average Partition Size},
	ylabel={Space (Bits/Key)},
	ytick distance={0.1},
	width=5cm,
	height=3.5cm,
	xmode=log,
	ymax=2.4,
	legend style={at={(0.5,0.97)},anchor=north},
	]
 \addplot coordinates { (3072,2.124726) (6144,2.133137) (12288,2.152382) (24576,2.182734) (49152,2.236858) (98304,2.344705) (196608,2.539653) };
 \addlegendentry{IC-R};
 \addplot coordinates { (3072,2.180024) (6144,2.209925) (12288,2.223311) (24576,2.238719) (49152,2.246916) (98304,2.254558) (196608,2.270957) };
 \addlegendentry{mono-R};
\end{axis}

\end{tikzpicture}
		\caption{Space consumption for placing all seeds into a single encoder (mono) and using interleaved coding when varying the expected partition size.}
		\label{fig:partitionPlot}
	\end{subfigure}
	\caption{Varying partition count using the original implementation and $\lambda = 4$ and $n=100$ million keys. We use the original PTHash bucket assignment function and their hash function.}
	\label{fig:partitioning}
\end{figure}
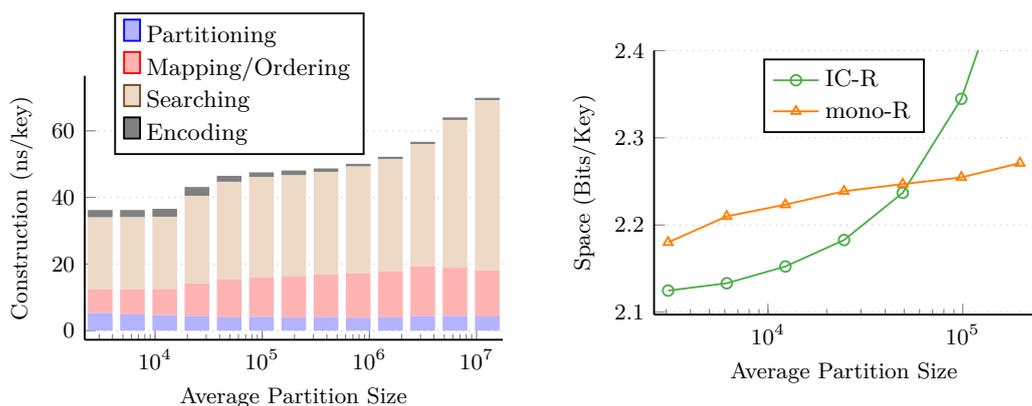

\begin{figure}[t]
	\centering
    \begin{tikzpicture}
        \begin{axis}[
            xlabel={Space (Bits/Key)},
            width=5.5cm,
            height=3.5cm,
            xmax=2.2,
            ymax=45,
            ylabel={Throughput (MKeys/s)},
            legend style={at={(0.97,0.6)},anchor=east},
            ymax=80,
            xtick distance={0.2},
          ]
          \addplot coordinates { (1.69457,1.83884) (1.70087,3.02307) (1.70655,4.76508) (1.71343,7.59359) (1.72139,11.7786) (1.7325,17.8987) (1.74385,25.2016) (1.7551,34.638) (1.76924,41.6146) (1.78689,49.3097) (1.80604,58.2411) (1.82919,61.3121) (1.85447,64.5161) (1.88903,65.5308) (1.93462,68.6813) (1.99431,64.7249) (2.07851,69.6864) (2.21443,71.0227) (2.40269,63.8978) (2.6717,63.3714) (3.03025,68.6813) };
          \addlegendentry{PHOBIC-GPU};
          \addplot coordinates { (1.53522,0.0269271) (1.54726,0.572787) (1.55053,0.0661941) (1.55992,1.08155) (1.56039,0.721844) (1.57417,1.20589) (1.61398,41.7885) (1.62854,42.9738) (1.66382,70.7214) (1.67854,74.0192) (1.72507,71.2251) (1.74084,72.6744) (1.74533,73.3138) (1.80877,72.5163) (1.92657,66.6667) (1.94203,68.1199) (2.01538,67.0241) };
          \addlegendentry{RecSplit-GPU};
        \end{axis}
    \end{tikzpicture}
	\caption{Comparison between the GPU implementations of PTHash and RecSplit, for $100$ million keys.
		The time is measured for the entire construction, including data transfers to/from the GPU.
	}
	\label{fig:paretoGpu}
\end{figure}
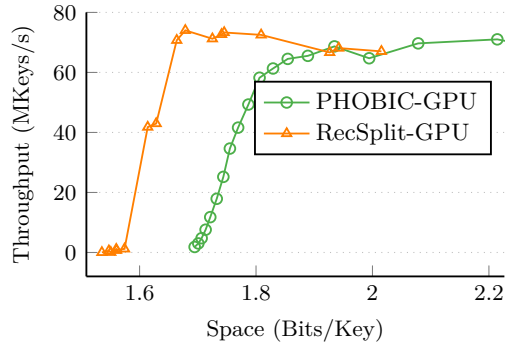

\subsection{From PTHash to PHOBIC}
In \cref{fig:partitioning}, we give additional details on the effect of fine-grained partitioning.
As stated in the main body of the paper, using very small partitions in itself has only a marginal effect on reducing construction time.
The search step does not get much faster when using very small partitions instead of medium sized ones, while there is a small additional overhead in the partitioning step.
\Cref{fig:partitionsStackedPlot} shows this effect.
However, using small partitions enables the using of interleaved coding (see \cref{ss:multi}), whose effect is shown in \cref{fig:partitionPlot}.
For small partitions, we can achieve significant space improvements;
for large partitions instead, we need too many encoders and this leads to
some constant overhead.

\begin{table}[t]
	\caption{Performance of various methods on $100$ million keys on a machine with 64 cores.}
	\label{fig:overviewTableLargeMachine}
	
\addtolength\tabcolsep{-1.9pt}
\small
\centering
	\begin{tabular}[t]{l rrrrr}
		\toprule
        Method & Space & Query & \multicolumn{3}{c}{Construction (ns/key)} \\
               \cmidrule(lr){4-6}
    & (bits/key) & (ns/query) & 1 Thread & 64 Threads & Speedup \\ \midrule
		
                   Bip. SH-RS, $n$=$64$, $b$=$2000$ & 1.52 & 425 &   9\,365 &     356 & 26.3 \\ \midrule
                                 CHD, $\lambda$=$3$ & 2.27 & 357 &      756 &       - &    - \\
                                 CHD, $\lambda$=$5$ & 2.07 & 320 &   5\,483 &       - &    - \\ \midrule
                               FMPH, $\gamma$=$2.0$ & 3.40 & 216 &      126 &       9 & 12.9 \\
                               FMPH, $\gamma$=$1.0$ & 2.80 & 269 &      190 &      15 & 12.1 \\ \midrule
                   SIMDRecSplit, $n$=$8$, $b$=$100$ & 1.81 & 315 &      303 &      13 & 22.1 \\
                 SIMDRecSplit, $n$=$14$, $b$=$2000$ & 1.59 & 368 &  40\,168 &     763 & 52.6 \\ \midrule
    SicHash, $\alpha$=$0.9$, $p_1$=$21$, $p_2$=$78$ & 2.41 & 159 &      191 &       7 & 24.1 \\
   SicHash, $\alpha$=$0.97$, $p_1$=$45$, $p_2$=$31$ & 2.08 & 143 &      260 &      17 & 15.1 \\ \midrule
      PTHash, $\lambda$=$4.0$, $\alpha$=$0.99$, C-C & 3.19 &  79 &      504 &     304 &  1.7 \\
       PTHash, $\lambda$=$5.0$, $\alpha$=$0.99$, EF & 2.11 & 160 &   1\,016 &     307 &  3.3 \\
      PTHash, $\lambda$=$10.5$, $\alpha$=$0.99$, EF & 1.86 & 150 & 139\,184 & 10\,312 & 13.5 \\ \midrule
  PTHash-HEM, $\lambda$=$4.0$, $\alpha$=$0.99$, C-C & 3.19 &  88 &      489 &       9 & 49.1 \\
   PTHash-HEM, $\lambda$=$5.0$, $\alpha$=$0.99$, EF & 2.11 & 166 &      930 &      13 & 66.9 \\ \midrule
      PHOBIC, $\lambda$=$3.9$, $\alpha$=$1.0$, IC-C & 3.18 &  94 &      299 &       9 & 31.3 \\
      PHOBIC, $\lambda$=$4.5$, $\alpha$=$1.0$, IC-R & 2.11 & 240 &      369 &      11 & 33.5 \\
      PHOBIC, $\lambda$=$6.5$, $\alpha$=$1.0$, IC-R & 1.85 & 217 &   1\,341 &      29 & 46.0 \\
      PHOBIC, $\lambda$=$9.0$, $\alpha$=$1.0$, IC-R & 1.74 & 205 &  12\,138 &     251 & 48.3 \\
		\bottomrule
	\end{tabular}

\end{table}

\subsection{Comparison to Other Methods}\label{s:competitorComparisonAppendix}
In \cref{s:competitorComparison}, we have compared PHOBIC to state-of-the-art MPHFs using multithreading and the GPU.
In particular, \cref{tab:overviewTable} illustrates the overall comparison.
\Cref{fig:overviewTableLargeMachine} illustrates the result of the same experiment
but using a machine with more cores.
This machine is equipped with an AMD EPYC Rome 7702P processor with 64 cores (with hyper-threading) and a clock frequency of 2.0 GHz.
While this machine does not have a GPU, it shows how well PHOBIC scales with a large number of threads.
We also note that the query times are higher compared to those measured with the machine
used for \cref{tab:overviewTable} because of the different clock frequency.

Lastly, \cref{fig:paretoGpu} shows the construction throughput of PHOBIC and RecSplit on the GPU, by varying the number of bits/key.
As it is clear, PHOBIC construction is as fast as RecSplit -- the fastest construction method up to date. However, we remark that PHOBIC is considerably faster to query than
RecSplit.

\end{document}